\date{November 11, 2024}
\newtheorem{theorem}{Theorem}
\newtheorem{lemma}{Lemma}
\newtheorem{corollary}{Corollary}
\newtheorem{definition}{Definition}
\def\tr{{\rm tr \,}}
\def\R{\mathbb{R}}
\def\C{{\mathbb C}}
\def\eps{\varepsilon}
\def\1{{\mathds{1}}}
\def\bp{{\mathbf p}}
\def\cA{\mathcal{A}}
\def\cD{\mathcal{D}}
\def\cE{{\mathcal E}}
\def\cS{{\mathcal S}}
\def\cU{{\mathcal U}}
\def\gA{\mathfrak{A}}
\def\gD{\mathfrak{D}}
\def\gH{\mathfrak{H}}
\def\gQ{\mathfrak{Q}}
\def\gS{\mathfrak{S}}
\def\gX{\mathfrak{X}}
\def\rd{\mathrm{d}}
\def\ri{\mathrm{i}}
\def\cz{\mathbb{C}} % complex numbers
\def\nz{{\mathbb N}}
\def\rz{\mathbb{R}} % real numbers
\def\atp{\bigwedge\limits} % antisymmetrisches Tensorprodukt
\def\sgn{\mathrm{sgn}} % Signum
\title[Ground State Energy of Heavy Atoms]{The Ground State Energy of
  Heavy Atoms: Leading and Subleading Asymptotics} \author{Long Meng
  and Heinz Siedentop}
\address{Mathematisches Institut\\
  Ludwig-Maximilians-Universit\"at M\"unchen\\
  Theresienstr. 39\\
  80333 M\"unchen\\
  Germany}
\email{long.meng@lmu.de \& h.s@lmu.de}
\begin{document}

\begin{abstract}
  We study atomic ground state energies for neutral atoms as the
  nuclear charge $Z$ is large in the no-pair formalism. We show that
  for a large class of projections defining the underlying Dirac sea --
  covering not only the physical reasonable cases but also ``weird''
  ones -- the corresponding no-pair ground state energy does not exceed
  the one of the Furry energy up to subleading order. An essential
  tool is the use and extension of S\'er\'e's results on 
  atomic Dirac-Hartree-Fock theory.
\end{abstract}

\maketitle
\section{Introduction and outline of paper}

In non-relativistic quantum mechanics there are canonical Hamiltonians
to describe atoms, molecules, and other many particle systems
involving electrons. These emerge as canonical quantization of a
classical Hamiltonian. These are successful when low momenta dominate
the physical effects. However, they are clearly inappropriate when
atoms of high atomic number $Z$ are present, since their inner nuclei
move with a substantial fraction of the velocity of light $c$. In fact
there are striking effects like the golden color of gold or the
functioning of lead-acid batteries which make a relativistic description
mandatory (see more details in Pyykkö's review \cite{Pyykko2012}).

From a fundamental physical point of view such systems should be
described by quantum electrodynamics. However, from a mathematical
point of view a consistent mathematical model is absent. Moreover,
although a perturbative evaluation is possible, in fact very
successful for few electrons, it is out of reach for many electron
systems. Moreover, an alternative ansatz that amounts to replacing the
non-relativistic kinetic energy by the free Dirac operator yields
Hamiltonians whose spectra are the entire real line. This phenomena,
known in the literature as ``continuum dissolution'' or
``Brown-Ravenhall disease'' (Sucher \cite{Sucher1984}, Datta and
Jagannathan \cite{DattaJagannathan1984}) is obviously unphysical. The
dilemma has been recognized shortly after the one-particle Dirac
equation was written down. Early attempts to resolve it go back -- at
least -- to Chandrasekhar \cite{Chandrasekhar1931} who used the
semiclassical quantization of the the classical energy momentum
relation $\sqrt{c^2\bp^2+c^4}$ (in the ultra-relativistic limit) to
determine the stability and instability of white dwarfs. It also has
been popular in the mathematical literature study relativistic effects
of Coulomb systems. However, the quantitative results become
inaccurate for larger atoms and fail completely for Radium ($Z=88$)
and elements with higher nuclear charge.

Starting with Brown and Ravenhall \cite{BrownRavenhall1951} (see also
Bethe and Salpeter \cite{BetheSalpeter1957}) Hamiltonians implementing
Dirac's idea of a sea that excludes negative energy states were
derived. These so called no-pair Hamiltonians were systematically
investigated by Sucher \cite{Sucher1980} taking into account the
polarization of the vacuum and its polarization by an external field
called the ``Furry picture'' (see Furry \cite{Furry1951}). Eventually
Mittleman \cite{Mittleman1981} derived a slightly different
Hamiltonian picture from quantum electro-dynamics by physical
arguments. Although, by definition the Mittleman energies
could be higher than the ground state energies they practically agree
in numerical calculations. Thus, it can conjectured, that the energies
of large neutral atoms agree up to small errors. We will -- in fact --
confirm this conjecture.

In Section \ref{mainresults} we will define the no-pair Hamiltonians
and Mittleman's variational principle in detail and formulate our main
result, i.e., the Mittleman energy and the Furry energy agree
asymptotically up to second order in the atomic number. As a byproduct
we obtain a justification of a S\'er\'e's variant of relativistic
Hartree-Fock theory of neutral atoms from a microscopic physical point
of view: Although a mathematically sound theory it was originally only
an ad hoc transcriptions from the non-relativistic setting. Here it
emerges as a limiting theory of a more fundamental theory.

Section \ref{sec:Sere} introduces a version of Dirac-Hartree
theory that is patterned in the spirit of S\'er\'e. In addition we
formulate an existence result for its minimizers. The functional
considered will be -- in some sense -- simpler than S\'er\'e's
original one, namely the exchange is dropped and replaced by a
Fermi-Amaldi correction (Fermi and Amaldi \cite{FermiAmaldi1934} and
Gombas \cite[\S 7]{Gombas1949}). Furthermore a generalized
$SU(2)$-invariance of the test density matrices is imposed which
results in spherical symmetry of the associated densities.

This functional will be the essential tool in the proof of the main
theorem. Moreover it allows us to describe admissible projections in
Mittleman's variational characterization of ground state energy. The
definition of admissibility and sufficient criteria for admissibility
will be given in Subsection \ref{subsec:admissibility}. Subsection
\ref{examples} will give examples of admissible potentials.

Theorem \ref{th:1} will be proven in Section \ref{sec:mittleman}. As
mentioned above the proof uses the S\'er\'e type functional and some
new properties of its minimizer $\gamma_*$.  Although we will follow
the general strategy of S\'er\'e \cite{Sere2023}, we need to extend
his existence result in a way suitable for our application and derive
some new properties of the resulting minimizers.

Appendix \ref{sec:existence} indicates the existence proof of
minimizers of the functional with particular emphasis on the new
aspects not contained in previous work. Appendix \ref{sec:Z7/3} gives
some properties of these minimizers that are essential for our proof.

\section{Notations and main results}\label{mainresults}
We begin by introducing some notation that will allow us to formulate
our result.  We write
\begin{equation}
  \label{DZ}
  D_{c,Z}=c\bp\cdot\boldsymbol\alpha+c^2\beta-{Z\over|x|}
\end{equation}
for the Dirac operator of an electron in the field of nucleus of
atomic number $Z$. Here $\bp:=-\ri\nabla$ denotes the momentum
operator and $\boldsymbol\alpha$ and $\beta$ are the four Dirac
matrices in standard form.  Eventually, we write $c$ for the velocity
of light.  In Hartree units -- used throughout this paper -- its
physical value is about $137$. However, we keep $c$ and also $Z$ as
parameters but fix their quotient
\begin{equation}
  \label{kc}
  \kappa:=Z/c\in(0,1)
\end{equation}
throughout the paper.

The operators $D_{c,Z}$ are selfadjoint operators in
$\gH:=L^2(\rz^3:\cz^4)$ defined in the sense of Nenciu
\cite{Nenciu1976}. Their form domains are $\gQ:=H^\frac12(\rz^3:\cz^4)$.
Morozov and M\"uller \cite{MorozovMuller2015} characterized also their
domain showing which spinors have to be added to $H^1(\rz^3:\cz^4)$.
They exhibit the scaling
\begin{equation}
  \label{eq:scale}
  D_\kappa:=D_{1,c/Z}\cong D_{c,Z} /c^2
\end{equation}
In this notation we have for $\kappa\in[0,1)$
\begin{equation}
  \label{eq:D-kappa-form}
  C_\kappa|D_0|\leq|D_\kappa|\leq(1+2\kappa) |D_0|
\end{equation}
with
\begin{align}
  \label{5}
  C_\kappa:=& \max\left\{{d_\kappa\Upsilon_\kappa\over1+d_\kappa},1-2\kappa\right\},\
  d_\kappa:= (1-\pi\Upsilon_\kappa\cot(\pi\Upsilon_\kappa/2)/2)\eta_\kappa,\\
  \Upsilon_\kappa:=&\sqrt{1-\kappa^2},\
  \eta_\nu:=
  \begin{cases}
    {\sqrt{9+4\kappa^2}-4\kappa\over3(1-2\Upsilon_\kappa\cot(\pi\Upsilon_\kappa/2))} & \kappa\in[0,1)\setminus\{\sqrt3/2\}\\
    {1\over\sqrt3(\pi-2)} & \kappa=\sqrt3/2
  \end{cases}.
\end{align}
(Morozov and M\"uller \cite[Theorem I.1]{MorozovMuller2017}, see also
Frank et al \cite[Corollary 1.8]{Franketal2021}). 

The energy form of an atom for states
$\Psi\in\mathcal{S}(\R^{3N}:\C^{4^N})$ is
\begin{equation}
  \cE_{c,Z,N}[\Psi]:=\left\langle \Psi,\left(\sum_{n=1}^N (D_{c,Z}-c^2)_n+\sum_{1\leq m<n\leq N}\frac{1}{|x_n-x_m|}\right)\Psi\right\rangle
\end{equation}
From now on, we will assume that $N,Z\geq1$ and, unless stated
otherwise, that we have neutrality, i.e., $N:=Z$ and accordingly we
will suppress the indices $N$ and also $c$ with all energy functionals
and energies, since $\kappa$ is fixed and $c=\kappa Z$ under our
general assumption \eqref{kc}.  Since electrons are fermions, we will
restrict the domain to antisymmetric spinors $\Psi$. Then the
corresponding $N$-electron spaces are
\begin{equation}
  \label{eq:n}
  \gH^N:=\atp_{n=1}^N \gH,\qquad \gQ^N:=\atp_{n=1}^N\gQ.
\end{equation} 
Moreover, due to the unboundedness of the Dirac operator from below,
Dirac postulated that negative energies states, the Dirac sea, are not
accessible to electrons. Brown and Ravenhall \cite{BrownRavenhall1951}
and later Sucher \cite{Sucher1984,Sucher1987}, extending this idea,
formulated this mathematically by requiring that the one-electron
space is given by the positive spectral subspace of a suitable Dirac
operator.

In the following $A$ will be a -- possibly $Z$ dependent -- perturbation
of the hydrogenic operator. Although we will specify the exact class
later, we will always require three basic properties of $A$:
\begin{description}
\item[Selfadjointness] The $A$ is symmetric operator on
  $\cS(\rz^3:\cz^4)$ and $D_{c,Z}+A$ has selfadjoint extension
  $D_{c,Z}^A$ with form domain $\gQ:=H^\frac12(\rz^3:\cz^4)$ in the
  sense of Nenciu.
\item[Invariance of $H^\frac12$]
\begin{equation}
  \label{eq:P}
  \gQ_A:=P_{Z,A}H^\frac12(\rz^3:\cz^4)\subset H^\frac12(\rz^3:\cz^4)
\end{equation}
where $P_{Z,A}:= \1_{(0,\infty)}(D_{c,Z}+A)$.
\item[Boundedness from above] 
  On $\gQ$
  \begin{equation}
    \label{boundedness}
    P_A^\perp D_{c,Z}P_A^\perp \leq c^2.
  \end{equation}
  Physically speaking this ensures that positronic states in the
  picture defined by the perturbed hydrogenic Dirac operator
  $D_{c,Z}+A$ still have an expectation of unperturbed Hamiltonian
  that does not exceed the spectral gap of the essential spectrum. In
  fact one might expect zero on the right side whereas the above
  condition allows for more general potentials $A$.
\end{description}
The corresponding $N$-electron Hilbert spaces and form domains are
\begin{equation}
  \gH^N_A:=\atp_{n=1}^N  P_{Z,A}(\gH),\qquad \gQ^N_A:=\atp_{n=1}^N P_{Z,A}(\gQ).
\end{equation}

Typical choices for such operators:
\begin{itemize}
\item {\bf Free picture (Brown--Ravenhall \cite{BrownRavenhall1951})}
  Here $A(Z)= Z/|x|$.
\item {\bf Furry picture (Furry \cite{Furry1951})} Here
  $A=0$.
\item {\bf Intermediate or Fuzzy picture (Mittleman
    \cite{Mittleman1981})} In this case, $A$ is picked as a mean-field
  potential of the electrons. An optimal choice, which
  depends on the two-particle density matrix, was suggested by
  Mittleman. This leads to a nonlinear equation, which has been
  studied numerically with great success in quantum chemistry, see
  \cite{Saue2011}.
\end{itemize}
For a given $A$, the ground state energy of the relativistic
Coulomb system is 
\begin{equation}
  E_{Z,A}:=\inf\{\cE_{Z}[\Psi]\big|\Psi\in \gQ^N_A, \|\Psi\|\leq 1\}.
\end{equation}

In Sucher's terminology a choice of a potential $A$ defines a
picture. However, different pictures might lead to different ground
state energies, in particular they might not by unitarily
equivalent. For example, the ground state of the Furry picture is
larger than the Brown--Ravenhall one:
\begin{itemize}
\item The ground state energy in Furry picture was found in
  \cite{HandrekSiedentop2015}. For large $Z$  it behaves as
  \begin{equation}
    \label{eq:Scott-Furry}
   E^{\rm F}_{Z}:=E_{Z,0}= C^\mathrm{TF}Z^{\frac73}+C_\mathrm{Scott}^{\rm F}Z^2+o(Z^2)
 \end{equation}
 with $C^\mathrm{TF}<0$ being the Thomas--Fermi ground-state energy of
hydrogen, and
\begin{equation}
  C_\mathrm{Scott}^{\rm F}:= \frac{1}{\kappa^2}
  \sum_{n=1}^\infty(\lambda_n^{\rm D}-\lambda_n^{\rm S})+\frac12
\end{equation}
where $\lambda_n^{\rm D}$ is the $n$-th eigenvalue of $D_\kappa$.
\item Frank et al. \cite{Franketal2009} showed that the ground state
  of the Brown--Ravenhall operator up to its critical coupling
  constant $\kappa<(0,2/(\pi/2+2/\pi))$ behaves for large $Z$ as
\begin{equation}
   E^{\rm BR}_{Z}:=E_{Z,Z|\cdot|^{-1}} = C^\mathrm{TF}Z^{\frac73}+C_\mathrm{Scott}^{\rm BR}Z^2+o(Z^2)
\end{equation}
with
\begin{equation}
    C_\mathrm{Scott}^{\rm BR}:= \frac{1}{\kappa^2}\sum_{n=1}^\infty(\lambda_n^{\rm BR}-\lambda_n^{\rm S})+\frac{1}{2}
\end{equation}
where $\lambda_n^{\rm BR}$ is the $j$-th eigenvalue of
$\1_{(0,\infty)}(D_0)D_\kappa\1_{(0,\infty)}(D_0)$.
\end{itemize}
The minimax principle for the Dirac eigenvalue problem (see, e.g.,
\cite{GriesemerSiedentop1999,Dolbeaultetal2000O} shows that
$\lambda_n^{\rm BR} < \lambda_n^{\rm D}$. Hence
$C_\mathrm{Scott}^{\rm BR}< C_\mathrm{Scott}^{\rm F}$ and thus
$E_{Z}^{\rm BR}<E_{Z}^{\rm F}$.

According to Mittleman \cite{Mittleman1981} the physical ground-state
energy is obtained by taking the supremum over all projections of the
form $\1_{(0,\infty)}(D_{c,Z}+A)$.  However, Mittleman does not
specify the class of potentials over which the supremum should be
taken to give physically meaningful results. In fact there are obvious
although non reasonable counterexamples -- like $A=c^2$. Nevertheless,
he computes formally the Euler-Lagrange equation both in the fully
interacting setting as well as in the Hartree-Fock setting suggesting
that the approximate maximizers should be of mean-field type.

The aim of this paper is twofold:
\begin{itemize}
\item Address the lack of a suitable class of potentials in Mittleman's
  argument, i.e., to specify a reasonable class of admissible
  potentials.
\item Show that Mittleman's energy agrees up to subleading order with the Furry energy.
\end{itemize}
Our main result is
\begin{theorem}
  \label{th:1}
  Assume that $A$ is an admissible family of operators (see Definition
  \ref{def:A}). Then
  \begin{equation}\label{eq:Mittleman}
    \boxed{\sup_{A\ \text admissible}\lim_{Z\to\infty}{E_{Z,A(Z)} -C^\mathrm{TF}Z^\frac73-C_\mathrm{Scott}^{\rm F}Z^2\over Z^2}=0.}
  \end{equation}
 \end{theorem}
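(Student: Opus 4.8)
The plan is to establish the two inequalities implicit in the boxed equation. The lower bound
\[
\sup_{A\ \text{admissible}}\lim_{Z\to\infty}\frac{E_{Z,A(Z)}-C^\mathrm{TF}Z^{7/3}-C^{\rm F}_\mathrm{Scott}Z^2}{Z^2}\ \geq\ 0
\]
is essentially free: the Furry picture $A=0$ is admissible (this must be verified against Definition~\ref{def:A}, but the hypotheses Selfadjointness, Invariance of $H^{1/2}$, and Boundedness-from-above are immediate for $A=0$ since $P_0^\perp D_{c,Z}P_0^\perp\leq 0\leq c^2$), and for that choice the limit equals $0$ by the Handrek--Siedentop asymptotics \eqref{eq:Scott-Furry}. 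So the whole content of the theorem is the matching upper bound: \emph{for every} admissible family $A$,
\[
E_{Z,A(Z)}\ \leq\ C^\mathrm{TF}Z^{7/3}+C^{\rm F}_\mathrm{Scott}Z^2+o(Z^2).
\]

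To prove this upper bound I would construct, for each admissible $A$, a trial state in $\gQ^N_A$ whose energy does not exceed the right-hand side. The natural candidate is a Slater determinant built from the lowest eigenfunctions of a mean-field operator. Here is where the S\'er\'e-type Dirac--Hartree functional of Section~\ref{sec:Sere} enters: let $\gamma_*$ be its minimizer, with associated mean-field potential, and let $\psi_1,\dots,\psi_N$ be an orthonormal system spanning $\mathrm{ran}\,\gamma_*$. The key point is that, because $A$ is admissible, the projection $P_{Z,A}$ can be compared to the projection onto the positive mean-field subspace closely enough that one can either use $\psi_j$ directly (if they happen to lie in $\gQ_A$) or correct them by a small rotation $\Theta_A$ with $\|\Theta_A - \mathbbm{1}\|$ controlled by the admissibility data; the Boundedness-from-above condition \eqref{boundedness} is exactly what guarantees that projecting onto $\gQ_A$ does not raise the one-body Dirac energy by more than $o(Z^2)$, while the Invariance condition \eqref{eq:P} keeps the trial state in the form domain. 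One then expands $\cE_Z[\Psi_A]$ into one-body Dirac energy plus direct Coulomb repulsion minus the exchange term; the direct term reproduces the Thomas--Fermi energy $C^\mathrm{TF}Z^{7/3}$ to leading order and, together with the Fermi--Amaldi correction, contributes only $O(Z^{5/3})=o(Z^2)$ at the relevant order, the exchange term is $O(Z^{5/3})$ as well (by the standard Lieb--Oxford / Hoffmann-Ostenhof type bounds on $|\psi_j|$), and the one-body Dirac part, after the Scott-corrected semiclassical analysis localized near the nucleus, yields $C^{\rm F}_\mathrm{Scott}Z^2+o(Z^2)$ — the crucial input being that the innermost electrons of $\gamma_*$ see the \emph{unscreened} Coulomb singularity, so their energies are governed by the eigenvalues $\lambda_n^{\rm D}$ of $D_\kappa$ and not by the Brown--Ravenhall spectrum. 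The new properties of $\gamma_*$ proved in Appendix~\ref{sec:Z7/3} (in particular a $Z^{7/3}$-type a priori bound and regularity/localization of its density near $x=0$) are used precisely to control these error terms uniformly in $A$.

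The main obstacle is the word \emph{uniform}: the supremum over all admissible $A$ must be taken \emph{outside} the limit, so every error estimate above has to be bounded by a quantity that is $o(Z^2)$ with a constant independent of the particular admissible family $A$. This is why the definition of admissibility must be quantitative — it should bundle together exactly the bounds ($\|A\|$ in a suitable norm relative to $D_0$, the angle between $P_{Z,A}$ and the mean-field projection, and the deficit in \eqref{boundedness}) that make the rotation $\Theta_A$ and the projection loss $o(Z^2)$ with uniform constants. The secondary difficulty is the extension of S\'er\'e's existence theorem: one needs a minimizer $\gamma_*$ of the modified functional (exchange replaced by Fermi--Amaldi, $SU(2)$-invariance imposed) together with the additional decay and non-degeneracy estimates of Appendix~\ref{sec:Z7/3}; S\'er\'e's \cite{Sere2023} argument gives existence but not these finer properties, so re-running the concentration-compactness / mountain-pass scheme while tracking the Coulomb singularity and the spherical symmetry is where the real work lies. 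Granting all of this, assembling the pieces — lower bound from $A=0$, upper bound from the $\gamma_*$-trial state with uniform error control — closes the equality in \eqref{eq:Mittleman}.
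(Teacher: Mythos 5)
Your lower-bound argument matches the paper exactly: $A=0$ is admissible, and the limit vanishes by the Handrek--Siedentop asymptotics \eqref{eq:Scott-Furry}. The difficulty is, as you say, entirely in Theorem~\ref{th:mittleman}.

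Two points where the proposal goes astray. First, your worry about uniformity in $A$ is a misreading of the boxed statement: the supremum over $A$ sits \emph{outside} the $\lim_{Z\to\infty}$, so for each admissible family $A$ one only needs $\limsup_Z\big(E_{Z,A(Z)}-C^\mathrm{TF}Z^{7/3}-C^{\rm F}_\mathrm{Scott}Z^2\big)/Z^2\le0$ \emph{for that fixed family}, with its own $A$-dependent rate. Uniformity would be needed only if the supremum were inside the limit. The paper's error estimates are indeed $A$-dependent and that is fine.

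Second, and more substantively, the proposal misses the structural reason why projecting the S\'er\'e minimizer into the $A$-picture works. You propose to take $\gamma_*$, push it into $\gQ_A$ by a small ``rotation'', and then rerun a localized semiclassical (Scott) analysis. The paper does something both sharper and cheaper. It sets $\gamma_A:=P_{Z,A}\gamma_*P_{Z,A}$, bounds $E_{Z,A}\le\cE^{\rm H}_Z(\gamma_A)$ by Lieb's variational principle, and does \emph{not} redo any semiclassical analysis: instead it invokes Theorem~\ref{th:DF-asymptotic} ($E^{\rm S}_Z=E^{\rm F}_Z+o(Z^2)$) as a black box, so the only thing left is to control $\cE^{\rm H}_Z(\gamma_A)-\cE^{\rm H}_Z(\gamma_*)$. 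Here the resolvent expansion $P_{Z,A}=P_*+Q_{Z,\phi_*}+R_{Z,A}$ organizes the difference into three pieces, and the two decisive observations — both absent from your sketch — are that (i) the term linear in $A-\phi_*$ contributes nothing because $\gamma_*$ is a critical point of the S\'er\'e functional, so $Q_{Z,\phi_*}\gamma_*$ is off-diagonal with respect to $P_*$ and $\tr[(D_*-c^2)(Q_{Z,\phi_*}\gamma_*+\gamma_*Q_{Z,\phi_*})]=0$; and (ii) the quadratic term has a \emph{definite sign}, $II\le0$, precisely because of the boundedness-from-above condition \eqref{boundedness} / \eqref{eq:trace*op'}. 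Without these two cancellations you cannot close the argument on the basis of admissibility alone: you would be forced to bound the magnitude of the first-order change in the one-body energy, which for a generic admissible $A$ (e.g.\ the ``weird'' Coulomb potentials of Subsection~\ref{examples}) is of order $Z^2$ or larger, not $o(Z^2)$. The remaining Coulomb term $III$ is genuinely estimated to be $o(Z^2)$ using the a priori bounds of Theorem~\ref{th:DF-Z7/3}, which is the part of your sketch that is closest to the paper. In short: the right trial state, but the essential mechanism — stationarity plus sign — is missing, and that mechanism is what makes the whole class of admissible $A$ (rather than only mean-field-like $A$) tractable.
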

\begin{proof}
  The upper bound is provided by the following theorem. The lower
  bound from \cite{HandrekSiedentop2015} and the admissibility of the
  Furry picture, i.e., $A=0$.
\end{proof}

\begin{theorem}
  \label{th:mittleman}
  Assume $A$ is an admissible family of operators (see Definition
  \ref{def:A}). Then
  \begin{equation}
    \boxed{E_{Z,A(Z)}\leq E_Z^\mathrm{F}+o(Z^2)=C^\mathrm{TF}Z^\frac73+C_\mathrm{Scott}^{\rm F}Z^2+o(Z^2).}
  \end{equation}
\end{theorem}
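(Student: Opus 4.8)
The plan is to construct, for each admissible $A(Z)$, a trial state in $\gQ^N_{A(Z)}$ whose energy matches the Furry energy up to $o(Z^2)$. Since $A$ is admissible, the positive spectral subspace $P_{Z,A}H^\frac12$ contains the eigenspinors of a suitable mean-field Dirac operator; the natural candidate for the trial state is the Slater determinant built from the first $N$ eigenfunctions of the S\'er\'e-type Dirac–Hartree operator introduced in Section \ref{sec:Sere}, whose minimizer $\gamma_*$ carries exactly the structure compatible with $P_{Z,A}$ by the admissibility hypothesis. Concretely, I would let $\gamma_*$ be the minimizer of the Fermi–Amaldi-corrected functional and take $\Psi$ to be the determinant associated with $\gamma_*$ (after checking $\gamma_*$ is a genuine rank-$N$ projection, or truncating it and controlling the error). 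The first step is thus: (i) recall the existence and the $SU(2)$-invariance of $\gamma_*$ from Appendix \ref{sec:existence}, and verify that $\mathrm{ran}\,\gamma_* \subset \gQ_A$ for admissible $A$ — this is precisely what the definition of admissibility is designed to grant.

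The second step is the energy bookkeeping. I would expand $\cE_Z[\Psi] = \tr\big((D_{c,Z}-c^2)\gamma_*\big) + D(\rho_{\gamma_*},\rho_{\gamma_*})/2 - (\text{exchange})$, and compare term by term with the Dirac–Hartree functional value $\mathcal{E}^{\rm DH}[\gamma_*]$. The Fermi–Amaldi factor $(1-1/N)$ differs from the true direct-minus-exchange combination only by $O(Z^{4/3})$ relative corrections in the direct term and by an exchange contribution that is $o(Z^2)$ for the relevant length scales — here the semiclassical scaling $x \mapsto Z^{1/3}x$ and the bound \eqref{eq:D-kappa-form} make the kinetic and potential terms of order $Z^{7/3}$, while the innermost-shell corrections, governed by $D_\kappa$ via \eqref{eq:scale}, produce exactly $C_\mathrm{Scott}^{\rm F}Z^2$. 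The third step is to show $\mathcal{E}^{\rm DH}[\gamma_*] \le E_Z^{\rm F} + o(Z^2)$: since the Furry ground state (itself a near-minimizer for Dirac–Hartree up to the exchange and Fermi–Amaldi discrepancies, by \cite{HandrekSiedentop2015}) is an admissible competitor for the Dirac–Hartree minimization, $\gamma_*$ does no worse, and the two-sided Thomas–Fermi plus Scott bounds pin both at the same asymptotics.

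The main obstacle I expect is controlling the positive-spectral-subspace constraint uniformly in $A$: the trial determinant must lie in $\gQ^N_{A(Z)}$, yet $A(Z)$ ranges over a whole admissible family, so I cannot build the trial state from $P_{Z,A}$ directly (its eigenfunctions are not explicit). The resolution is to build $\Psi$ from $\gamma_*$ and use that admissibility forces $P_{Z,A}$ to be "close enough" to $\1_{(0,\infty)}$ of the mean-field operator that $\mathrm{ran}\,\gamma_* \subset \mathrm{ran}\,P_{Z,A}$ exactly, or that the projected state $P_{Z,A}^{\otimes N}\Psi$ differs from $\Psi$ by a norm that is $o(1)$ with energy cost $o(Z^2)$ — this is where the boundedness-from-above condition \eqref{boundedness} enters, bounding the energy lost to the positronic components by $c^2$ times a small norm. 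A secondary technical point is justifying that $\gamma_*$ can be taken to be (or replaced by) a projection of the correct rank $N=Z$ without changing the energy beyond $o(Z^2)$; this uses the new properties of the minimizer established in Appendix \ref{sec:Z7/3}, in particular the near-idempotency of $\gamma_*$ on the relevant energy scale.
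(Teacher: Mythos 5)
Your overall strategy — pass to a Slater determinant via Lieb's variational principle, drop the exchange to land in the Hartree functional, build the trial state from the S\'er\'e minimizer $\gamma_*$, and invoke Theorem \ref{th:DF-asymptotic} to tie the Hartree energy of $\gamma_*$ to $E_Z^{\rm F}+o(Z^2)$ — is the correct skeleton and essentially what the paper does. However, there are two concrete gaps that would sink the argument as written.

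First, your claim that admissibility guarantees $\mathrm{ran}\,\gamma_*\subset \gQ_A=\mathrm{ran}\,P_{Z,A}\cap H^{1/2}$ is false, and nothing in Definition \ref{def:A} gives it. The eigenspinors of $D_*$ are in general not eigenspinors of $D_{c,Z}+A$, and $P_*\neq P_{Z,A}$. The paper's resolution is simply to take the trial density matrix
\begin{equation*}
  \gamma_A := P_{Z,A}\,\gamma_*\,P_{Z,A}\in\Gamma_Z^A,
\end{equation*}
so that membership in $\gQ_A^N$ is automatic; your worry that one ``cannot build the trial state from $P_{Z,A}$ directly'' is unfounded — one does exactly that, without needing explicit eigenfunctions.

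Second, the heart of the matter, which your proposal does not touch, is how to control $\cE^{\rm H}_Z(\gamma_A)-\cE^{\rm H}_Z(\gamma_*)$. The paper writes $P_{Z,A}=P_*+Q_{Z,A}$, splits $Q_{Z,A}=Q_{Z,\phi_*}+R_{Z,A}$, expands $\gamma_A=\gamma_*+2\Re(Q_{Z,\phi_*}\gamma_*)+2\Re(R_{Z,A}\gamma_*)+Q_{Z,A}\gamma_*Q_{Z,A}$, and shows: (i) the linear $Q_{Z,\phi_*}$ contribution to the kinetic/potential trace vanishes identically because $Q_{Z,\phi_*}\gamma_*$ is off-diagonal relative to $P_*$; (ii) the $R_{Z,A}$ plus $Q_{Z,A}\gamma_*Q_{Z,A}$ contributions to $\tr[(D_*-c^2)\cdot]$ are nonpositive after a Birman–Schwinger-type rearrangement together with the boundedness condition \eqref{eq:trace*op'} and, crucially, the trace condition \eqref{eq:trace*op}; and (iii) the Coulomb cross-terms are controlled by the Lane–Emden inequality together with the estimates from Theorem \ref{th:DF-Z7/3}. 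Your proposal cites only \eqref{boundedness}; but \eqref{boundedness} alone is not enough — it is precisely the trace hypothesis $\tr(A(Z)\gamma_*A(Z)|D_{c,0}|^{-1})=o(Z^{12/5})$, applied to both $A$ and $\phi_*$, that makes the projection error $o(Z^2)$. Also, your heuristic that $P_{Z,A}^{\otimes N}\Psi$ differs from $\Psi$ ``by a norm that is $o(1)$'' is not how the estimate goes: $\|Q_{Z,A}\gamma_*Q_{Z,A}\|=O(1)$, and the small quantities are the traces $\tr(Q_{Z,A}\gamma_*Q_{Z,A})=o(Z^{2/5})$ and the associated Coulomb energies, not an operator norm.

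Finally, a minor confusion: the comparison of $\gamma_*$ with the Furry ground state you sketch in your third step is not how the link to $E_Z^{\rm F}$ is made. The paper invokes Theorem \ref{th:DF-asymptotic} directly to get $\cE_Z^{\rm H}(\gamma_*)=E_Z^{\rm S}+\tfrac1N\cD[\rho_*]=E_Z^{\rm F}+o(Z^2)$, using $\cD[\rho_*]=O(Z^{7/3})$ and $N=Z$ from Theorem \ref{th:DF-Z7/3}; one does not need to argue that the Furry ground state is a competitor for the S\'er\'e minimization.
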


The admissibility condition does not only allow for the above typical
choices (Free, Furry, and intermediate picture) but also for
potentials that are vastly different from typical mean fields like point
Coulomb potentials and others. We will discuss some examples in Section
\ref{examples}.

\section{Relativistic Hartree Theory \`a la  S\'er\'e and admissible potentials}
\label{sec:Sere}
We begin with a study of the Dirac-Hartree functional in
the sense of S\'er\'e. This will allow us to characterize the
admissible potentials.

\subsection{Hartree  theory with Amaldi correction} 
We write
\begin{equation}
\gX:=\{\gamma\in\gS^1({\gH})| 0\leq\gamma\leq1, \|\gamma\|_{\gX_c}:=\tr[(c^4+c^2\bp^2)^\frac14\gamma(c^4+c^2\bp^2)^\frac14]<\infty\}.
\end{equation}

We write $\mu_1\geq\mu_2\geq... $ -- counting multiplicities
-- for the eigenvalues of $\gamma\in \gS^1(\gH)$ and $\psi_1,\psi_2,...$ for
the associated orthonormal eigenspinors yielding
\begin{equation}
  \gamma=\sum_{n\in \nz}\mu_n |\psi_n \rangle\,\langle \psi_n|.
\end{equation}
Moreover we write $\rho_\gamma$
\begin{equation}
  \label{dichte}
\rho_{\gamma}(x):=\sum_{n\in\nz}\mu_n |\psi_n(x)|^2.
\end{equation}
(Note that the notation $\rho_K$ can be generalized to any selfadjoint
operator $K$ by replacing the right hand side of \eqref{dichte} by the
eigenvalues and eigenfunctions of $K$ provided the right hand side
converges absolutely.)

For any $\gamma\in\gX$ the Dirac-Hartree operator for an atom is
\begin{equation}
    D_{c,Z,\rho}:= D_{c,Z}+\rho*|\cdot|^{-1}
\end{equation}
and the Dirac-Hartree functional with and without Fermi-Amaldi
correction reads 
\begin{align}
  \mathcal{E}^{\rm HA}_{Z,N}(\gamma):=&\tr[(D_{c,Z}-c^2)\gamma]+(1-N^{-1})\cD[\rho_{\gamma}]\\
  \mathcal{E}^{\rm H}_Z(\gamma):=&\tr[(D_{c,Z}-c^2)\gamma]+\cD[\rho_{\gamma}]
\end{align}
with the Hartree term 
\begin{align*}
    \cD[\rho_{\gamma}]:= \tfrac12\int_{\R^3}\int_{\R^3}\frac{\rho_{\gamma}(x)\rho_{\gamma}(y)}{|x-y|}\rd x \rd y.
\end{align*}
Originally the Fermi-Amaldi correction \cite{FermiAmaldi1934} was
introduced to eliminate the self-interaction of electrons in the
Hartree term under the constraint $\int\rho_\gamma=N$. For us,
this will have the technical consequence that threshold eigenvalues
can be disregarded in the proof of the existence of minimizers in
Appendix \ref{sec:existence}.

We will consider density matrices whose density is spherically
symmetric. In fact we will require a $SU(2)$ invariance. Then 
$a\in SU(2)$ determines unique rotation $R_a\in SO(3)$ by the relation
\begin{equation}\label{eq:A-RA-identity}
  R_ax \cdot \boldsymbol\sigma = a\circ(x\cdot\boldsymbol\sigma)\circ a^{-1}.
\end{equation} 
$x\in \rz^3$, since the map $a\mapsto R_a$ is surjective because
$SO(2)\cong SU(2)/\mathbb{Z}_2$ (see, e.g., Woit \cite{Woit2017}) any
rotation can be realized.

Now we will define a unitary representation $U_a$ of $SU(2)$ on $\gH$:
\begin{equation}
  \left(U_a
  \begin{pmatrix}
    u\\v
  \end{pmatrix}\right)(x)
:= \begin{pmatrix} (au) (R_a^{-1}x)\\
(av)((R_a^{-1}x)
\end{pmatrix}
\end{equation}
for $(u,v)^t\in L^2(\rz^3:\cz^2)\oplus L^2(\rz^3:\cz^2)\equiv \gH$. The invariant density matrices are 
\begin{equation}
  \Gamma_{N}:=\{\gamma \in\gX| \tr(\gamma)\leq N, \forall_{a\in SU(2)} U_a^{-1}\gamma U_a=\gamma \}.
\end{equation}
Moreover set
\begin{equation}
    \Gamma_{Z,N}:=\{\gamma \in \Gamma_{N}| P_{Z,\frac{N-1}N\rho_\gamma}\gamma P_{Z,\frac{N-1} N\rho_\gamma}=\gamma\}
\end{equation}
with
\begin{equation}
    P_{Z,\rho}= \mathbbm{1}_{(0,\infty)}(D_{c,Z,\rho}).
  \end{equation}
  Note also that for any $\gamma\in\Gamma_N$ and $a\in SU(2)$ we have
  $[P_{Z,\rho_\gamma},U_a]=0$.

  In the spirit of S\'er\'e \cite{Sere2023} we introduce
\begin{equation}
  \begin{split}
  \cE_{Z,N}^\mathrm{S}: \Gamma_{Z,N}&\to \rz\\
  \gamma&\mapsto\cE^\mathrm{HA}_{Z,N}(\gamma)
  \end{split}
\end{equation}
i.e., the restriction of the Dirac-Hartree functional with
Fermi-Amaldi correction (Fermi and Amaldi \cite{FermiAmaldi1934}) to
$\Gamma_{Z,N}$. We call it as a short hand ``S\'er\'e functional''
(instead of the correct but somewhat lengthy name ``
atomic Dirac-Hartree functional with a Fermi-Amaldi correction and a
constraint introduced by S\'er\'e restricted to $SU(2)$ invariant
density matrices''). Furthermore we write
\begin{equation}
  \label{eq:min}
  E^{\rm S}_{Z,N}:=\inf \cE^\mathrm{S}_{Z,N}(\Gamma_{Z,N})
\end{equation}
for its infimum and call it the S\'er\'e energy. If $N=Z$, we will
drop the double index and simply write
\begin{equation}
  \label{eq:simple}
  E^\mathrm{S}_Z,\ \cE^\mathrm{S}_Z,\ \cE^\mathrm{HA}_Z,....
\end{equation}
  
The existence of minimizers for the Dirac-Hartree-Fock problem with
the exchange term has been treated \cite[Theorem 1.2]{Sere2023} up to
$\kappa\leq 22/137$ for $N=Z$ (see \cite[Remark
1.4]{Sere2023}). We will improve this for the model at hand to the
full range \eqref{kc} stated in the introduction.
  
We will need a result of Fournais et al. \cite[Theorem
4]{Fournaisetal2020}:
\begin{lemma}
  \label{lem:DF-ope}
  Assume $\rho$ nonnegative, spherically symmetric, and
  $\sqrt\rho\in H^\frac12(\rz^3)$, and $N:=\int\rho\leq Z$. Then there
  exist a constant $C_\kappa'>0$ such that
  \begin{equation}
    \label{2.15}
    C_\kappa'|D_{c,0}|\leq |D_{c,Z,\rho}|\leq (1+4\kappa)|D_{c,0}|.
  \end{equation}
\end{lemma}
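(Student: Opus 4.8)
The plan is to prove the two inequalities separately; only the lower bound is substantial, and it is essentially \cite[Theorem 4]{Fournaisetal2020}, so the real work is to check its hypotheses and the $\rho$-independence of the constant. The structural fact behind everything is that, since $\rho$ is spherically symmetric, Newton's theorem together with $N=\int\rho\leq Z$ yields the pointwise estimate $0\leq \rho*|\cdot|^{-1}\leq N|\cdot|^{-1}\leq Z|\cdot|^{-1}$. Hence, with $V:=Z|\cdot|^{-1}-\rho*|\cdot|^{-1}$, we have $D_{c,Z,\rho}=D_{c,0}-V$ with $0\leq V\leq Z|\cdot|^{-1}$: the free Dirac operator perturbed by a \emph{sub-Coulombic} attraction whose coupling never exceeds $\kappa<1$. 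This uniformity is what forces all constants to depend on $\kappa$ alone.

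For the upper bound one can argue directly. Squaring $D_{c,Z,\rho}=D_{c,0}-V$ and using $(\bp\cdot\boldsymbol\alpha)^2=\bp^2$, $\beta^2=\1$, $\{\bp\cdot\boldsymbol\alpha,\beta\}=0$, one obtains, as quadratic forms,
\[
D_{c,Z,\rho}^2=D_{c,0}^2-c\{\bp\cdot\boldsymbol\alpha,V\}-2c^2\beta V+V^2 .
\]
Using $c|\bp|\leq|D_{c,0}|$, the Hardy inequality $|\cdot|^{-2}\leq 4\bp^2$, and Kato's inequality $|\cdot|^{-1}\leq\tfrac{\pi}{2}|\bp|$, the three perturbing terms are bounded in absolute value by $4\kappa\,D_{c,0}^2$, $\tfrac{\pi}{2}\kappa\,D_{c,0}^2$ and $4\kappa^2 D_{c,0}^2$, respectively. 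Since $4\kappa+\tfrac{\pi}{2}\kappa+4\kappa^2\leq 8\kappa+16\kappa^2=(1+4\kappa)^2-1$, this gives $D_{c,Z,\rho}^2\leq(1+4\kappa)^2 D_{c,0}^2$, and operator monotonicity of $t\mapsto\sqrt t$ yields $|D_{c,Z,\rho}|\leq(1+4\kappa)|D_{c,0}|$. (Alternatively and even more cheaply, $D_{c,Z,\rho}\leq D_{c,0}\leq|D_{c,0}|$ while $-D_{c,Z,\rho}\leq-D_{c,Z}\leq|D_{c,Z}|\leq(1+2\kappa)|D_{c,0}|$ by the scaled version of \eqref{eq:D-kappa-form}; the merit of the squaring is that it avoids both \eqref{eq:D-kappa-form} and any use of spherical symmetry.)

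For the lower bound I would invoke \cite[Theorem 4]{Fournaisetal2020}. Its hypotheses hold here: $\sqrt\rho\in H^\frac12(\rz^3)$ makes $\rho*|\cdot|^{-1}$ a locally integrable potential dominated by $Z|\cdot|^{-1}$ (with $\cD[\rho]<\infty$), so that, for $\kappa<1$, the operator $D_{c,Z,\rho}$ has a distinguished self-adjoint realization in the sense of Nenciu with form domain $H^\frac12(\rz^3:\cz^4)$ — which is what makes $|D_{c,Z,\rho}|$ and the asserted inequality meaningful. The constant $C_\kappa'$ is independent of $\rho$ precisely because the attractive part $V$ is sub-Coulombic with coupling $\leq\kappa$, and the Hardy--Dirac estimate underlying \cite[Theorem 4]{Fournaisetal2020} is uniform over such $V$.

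The main obstacle is exactly this lower bound. It is a genuine Hardy-type inequality for the Dirac operator; it cannot be reduced to the case $\rho=0$ by monotonicity, because the perturbation $\rho*|\cdot|^{-1}\geq 0$ is \emph{repulsive} and can push an eigenvalue of $D_{c,Z}$ across $0$, destroying any bound of the form $|D_{c,Z,\rho}|\geq c\,|D_{c,Z}|$. For a self-contained proof one would exploit the spherical symmetry of $\rho$: decompose $\gH$ into angular-momentum channels, in each of which $D_{c,Z,\rho}$ is a $2\times2$ radial Dirac operator on $(0,\infty)$ with spin-orbit quantum number $\kappa_j\in\mathbb{Z}\setminus\{0\}$ and attractive coefficient $\leq\kappa$; prove a bound $|\,\cdot\,|\geq c_\kappa'|\,\cdot\,|_{\mathrm{free}}$ for each radial operator uniformly in the channel and in $V$ (e.g.\ via a half-line Hardy inequality adapted to the radial system, or its supersymmetric factorization); and recombine. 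The channel $|\kappa_j|=1$, i.e.\ $j=\tfrac12$, is the delicate one as $\kappa\uparrow 1$, and it is there that the restriction $\kappa<1$ is essential.
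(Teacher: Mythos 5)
Your two halves match the paper's intent — lower bound via Fournais--Lewin--Triay, upper bound via squaring and operator monotonicity of the square root — but there are two real issues.

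First, the lower bound. You write that the hypotheses of \cite[Theorem~4]{Fournaisetal2020} ``hold here'' and then discuss only the self-adjointness and $\rho$-independence of the constant. This glosses over the actual gap the paper has to close: \cite[Theorem~4]{Fournaisetal2020} is stated only for $\kappa<\nu_0$ with a universal $\nu_0<1$, \emph{not} for the whole range $\kappa<1$ assumed in \eqref{kc}. The lemma's content is precisely the extension to all $\kappa<1$, and the paper does this by observing that spherical symmetry of $\rho$ lets one take $\nu_0=1$ in \cite[Equation~(14)]{Fournaisetal2020} via \cite[Lemma~2]{Fournaisetal2020}, so that $C'_\kappa$ is the constant of \cite[Equation~(19)]{Fournaisetal2020} evaluated at $\nu_0=1$, $\nu=\kappa$. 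Your closing paragraph (angular-momentum decomposition, the $|\kappa_j|=1$ channel is critical as $\kappa\uparrow1$) is the morally correct explanation of \emph{why} spherical symmetry helps, but it is offered as an optional self-contained alternative rather than as the step that actually fills the hole in a naive citation of Theorem~4; as written, the citation alone does not prove the lemma on $\kappa\in(0,1)$.

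Second, a minor flaw in the parenthetical ``even more cheaply'' remark for the upper bound: from $-(1+2\kappa)|D_{c,0}|\leq D_{c,Z,\rho}\leq|D_{c,0}|$ you cannot conclude an operator bound on $|D_{c,Z,\rho}|$; two-sided operator inequalities do not control the modulus (a $2\times2$ counterexample shows $-B\leq A\leq B$ with $B\geq0$ need not give $|A|\leq B$). One genuinely needs the squaring and then operator monotonicity of $t\mapsto\sqrt t$, which is exactly what your main argument and the paper both do. (In your intermediate estimates, the bound on $2c^2\beta V$ should be $\pi\kappa D_{c,0}^2$ rather than $\tfrac\pi2\kappa D_{c,0}^2$, but the final budget still fits under $(1+4\kappa)^2-1$, so this is harmless.)
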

\begin{proof}
  The lower bound were merely a transcription of \cite[Theorem
  4]{Fournaisetal2020} if it were not for the extended range of
  $\kappa<1$. Since, in our case, $\rho$ is spherically symmetric, it
  allows for $\nu_0=1$ in \cite[Equation (14)]{Fournaisetal2020} by
  \cite[Lemma 2]{Fournaisetal2020}. In particular the constant
  $C_\kappa'$ is the constant defined in \cite[Equation
  (19)]{Fournaisetal2020} but now with $\nu_0=1$ and $\nu=\kappa$.

  The upper bound is merely an estimate of the mean-field by Hardy's
  inequality combined with the monotonicity of the square root (Kato
  \cite[Theorem 2]{Kato1952}).
\end{proof}

To formulate our existence theorem we need to recall some constants:
We write $C^\mathrm{ret}_\kappa$ for the constant $C_{\kappa,\nu}$
with $\nu=\kappa$ in the first displayed equation of the proof of
\cite[Lemma 17]{Fournaisetal2020}. Moreover we write
$C^\mathrm{D}\geq 1.63/4^{1/3}\geq 1.02$ for the constant in
Daubechies' inequality \cite[Inequality (3.4)]{Daubechies1983} and
$C^\mathrm{HLT}:= \sqrt[3]{256/(27\pi)}\leq 1.45$ for the constant in
the Hardy-Littlewood-Sobolev inequality. Finally, we define
  \begin{equation}\label{eq:C-ex}
    C_\kappa^\mathrm{ex}:= 8\kappa\max\left\{\kappa^2\left({ C^{\rm HLS}\over C_\kappa' C^{\rm D}}\right)^3, 8\left({ C_{\kappa}^{\rm ret}\over C_\kappa'}\right)^6\right\}.
  \end{equation}

\begin{theorem}\label{th:DF-existence}
  For $Z> C_\kappa^\mathrm{ex}$ and $N\leq Z$ the S\'er\'e functional
  $\cE^\mathrm{S}_{c,Z,N}$ has a minimizer which we denote by
  $\gamma_*$.
\end{theorem}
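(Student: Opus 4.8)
The plan is to establish existence of a minimizer of $\cE^\mathrm{S}_{Z,N}$ on $\Gamma_{Z,N}$ via the direct method of the calculus of variations, following the strategy of S\'er\'e \cite{Sere2023} but exploiting the two simplifications available to us: the exchange term is absent (replaced by the Fermi--Amaldi factor $(1-N^{-1})$), and the $SU(2)$-invariance forces spherical symmetry of $\rho_\gamma$, which is exactly what lets Lemma \ref{lem:DF-ope} apply with the favorable constant $\nu_0=1$. First I would show that $\cE^\mathrm{S}_{Z,N}$ is bounded below on $\Gamma_{Z,N}$ and that $E^\mathrm{S}_{Z,N}<0$: the lower bound follows by combining the lower bound $|D_{c,Z,\rho_\gamma}|\geq C'_\kappa|D_{c,0}|$ from \eqref{2.15} (applied with $\rho=\tfrac{N-1}{N}\rho_\gamma$, noting $\int\tfrac{N-1}{N}\rho_\gamma\leq N-1\leq Z$) with the constraint $P_{Z,\frac{N-1}{N}\rho_\gamma}\gamma P_{Z,\frac{N-1}{N}\rho_\gamma}=\gamma$, so that on $\mathrm{ran}\,\gamma$ one has $D_{c,Z,\frac{N-1}{N}\rho_\gamma}\geq 0$, hence $\tr[(D_{c,Z}-c^2)\gamma]+(1-N^{-1})\cD[\rho_\gamma]=\tr[(D_{c,Z,\frac{N-1}{N}\rho_\gamma}-c^2)\gamma]\geq -c^2\tr\gamma\geq -c^2 N$; the strict negativity $E^\mathrm{S}_{Z,N}<0$ comes from a trial density matrix built from the lowest positive eigenstates of $D_{c,Z}$ together with a perturbative estimate showing the mean-field repulsion is subdominant for $Z$ large (this is where the threshold $Z>C_\kappa^\mathrm{ex}$ enters, via the constants in \eqref{eq:C-ex} controlling the HLS/Daubechies and retarded-potential comparisons).

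Next I would take a minimizing sequence $(\gamma_k)\subset\Gamma_{Z,N}$. The uniform bound $\cE^\mathrm{S}_{Z,N}(\gamma_k)\leq E^\mathrm{S}_{Z,N}+1$ together with the coercivity estimate above yields $\tr[|D_{c,0}|\gamma_k]\leq C$, hence $(\gamma_k)$ is bounded in $\gX$; passing to a subsequence, $\gamma_k\rightharpoonup\gamma_*$ in the appropriate weak topology (weak-$*$ in $\gS^1$ strengthened by the $|D_{c,0}|^{1/2}\cdot|D_{c,0}|^{1/2}$-trace bound, giving local compactness). The constraints $0\leq\gamma_*\leq 1$, $\tr\gamma_*\leq N$, and the $SU(2)$-invariance $U_a^{-1}\gamma_* U_a=\gamma_*$ pass to the limit by continuity. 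The delicate points are (i) no mass escapes to infinity — handled because the atom is binding, i.e.\ the strict negativity $E^\mathrm{S}_{Z,N}<0$ plus a concentration-compactness / binding argument rules out the dichotomy where a chunk of $\tr\gamma_k$ leaks off with vanishing energy contribution (here the Fermi--Amaldi factor $1-N^{-1}<1$ is used, as the paper flags, to make threshold eigenvalues harmless); (ii) lower semicontinuity of $\gamma\mapsto\tr[(D_{c,Z}-c^2)\gamma]$, which needs care because $D_{c,Z}$ is unbounded below — one splits $D_{c,Z}=|D_{c,Z}|\,\mathrm{sgn}(D_{c,Z})$ and uses that the negative part is a trace-class-controlled piece converging by the compactness in (i) while the positive part is weakly lower semicontinuous; and (iii) the nonlinear self-consistency constraint $P_{Z,\frac{N-1}{N}\rho_{\gamma_*}}\gamma_* P_{Z,\frac{N-1}{N}\rho_{\gamma_*}}=\gamma_*$ passes to the limit — this requires $\rho_{\gamma_k}\to\rho_{\gamma_*}$ strongly enough (e.g.\ in $L^1\cap L^{4/3}$ with $\sqrt{\rho_{\gamma_k}}$ bounded in $H^{1/2}$) that $D_{c,Z,\rho_{\gamma_k}}\to D_{c,Z,\rho_{\gamma_*}}$ in norm resolvent sense and hence the spectral projections converge strongly off the (measure-zero, by the Fermi--Amaldi trick) threshold set.

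The main obstacle is step (iii) combined with (i): controlling the spectral projection $P_{Z,\frac{N-1}{N}\rho_{\gamma_k}}$ along the sequence. The map $\rho\mapsto\1_{(0,\infty)}(D_{c,Z,\rho})$ is only strongly continuous where $0$ is not an eigenvalue of the limit operator, so one must argue that the minimizer's mean-field Dirac operator $D_{c,Z,\frac{N-1}{N}\rho_{\gamma_*}}$ has no threshold eigenvalue at $0$, or at least that the minimizing sequence can be chosen so that no mass accumulates there; the Fermi--Amaldi correction is precisely what makes the relevant "filled shell" argument work, since it shifts the effective potential so that the $N$ lowest eigenvalues are strictly inside $(0,c^2)$ for $Z$ large. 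Once the projection converges strongly, the constraint $P\gamma_* P=\gamma_*$ follows and $\gamma_*\in\Gamma_{Z,N}$; combined with the lower semicontinuity of the full functional $\cE^\mathrm{S}_{Z,N}$ this gives $\cE^\mathrm{S}_{Z,N}(\gamma_*)\leq\liminf_k\cE^\mathrm{S}_{Z,N}(\gamma_k)=E^\mathrm{S}_{Z,N}$, so $\gamma_*$ is a minimizer. The condition $Z>C_\kappa^\mathrm{ex}$ is used exactly where the binding inequality and the exchange-free analogues of S\'er\'e's a priori bounds need the repulsive mean-field terms to be beaten by the Coulomb attraction; tracking the constants $C'_\kappa$, $C^\mathrm{ret}_\kappa$, $C^\mathrm{D}$, $C^\mathrm{HLS}$ through the comparison estimates yields the explicit form \eqref{eq:C-ex}. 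I would defer the full details of the existence argument to Appendix \ref{sec:existence}, emphasizing there only the points that differ from \cite{Sere2023} — namely the enlarged range of $\kappa$ (via $\nu_0=1$) and the replacement of the exchange term by the Fermi--Amaldi factor.
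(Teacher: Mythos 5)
Your proposal takes a genuinely different route from the paper, and unfortunately it has a gap that I do not think can be closed by a routine argument. The paper does not use the direct method on $\Gamma_{Z,N}$ at all: because $\Gamma_{Z,N}$ is not convex (as the paper emphasizes), S\'er\'e's machinery is built around the retraction $\theta_Z(\gamma)=\lim_n T_Z^n(\gamma)$ with $T_Z(\gamma)=P_{Z,\rho_{N,\gamma}}\gamma P_{Z,\rho_{N,\gamma}}$, which maps the set $\cU_{Z,N,M}\subset\Gamma_N$ onto the constraint set $\Gamma_{Z,N}$. The appendix shows that this retraction is a contraction (Lemma~\ref{lem:retra}), derives a quadratic expansion of $\cE^{\rm S}_Z\circ\theta_Z$ along admissible directions (Lemma~\ref{approx}), gets a priori bounds (Lemma~\ref{lem:bound-DF}), and then shows that the nonlinear minimization asymptotically linearizes into a sequence of \emph{linear} eigenvalue problems for $D_{c,Z,\rho_{N,\gamma_n}}$; compactness is then elementary. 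In particular the constant $C_\kappa^{\rm ret}$ entering $C_\kappa^{\rm ex}$ is exactly the Lipschitz constant that makes $\theta_Z$ a contraction, not a binding-vs-repulsion threshold as you suggest.

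The concrete gap in your plan is item (iii): you need the self-consistency constraint $P_{Z,\frac{N-1}{N}\rho_{\gamma_*}}\gamma_* P_{Z,\frac{N-1}{N}\rho_{\gamma_*}}=\gamma_*$ to survive the weak limit, and you propose to get there via $\rho_{\gamma_k}\to\rho_{\gamma_*}$ strongly in $L^1\cap L^{4/3}$ and then norm-resolvent convergence of $D_{c,Z,\rho_{\gamma_k}}$. But weak-$*$ convergence of $\gamma_k$ in $\gS^1$ (even weighted by $|D_{c,0}|^{1/2}$) only gives weak convergence of the densities; the strong $L^1$ convergence you need is precisely the thing concentration-compactness is supposed to deliver, and for a constrained set that is not closed under weak limits this is not a standard step. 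The paper sidesteps all of this: the retraction turns the problem into one where only the linearized operator matters along the minimizing sequence, and there the required compactness is just discreteness of the spectrum in the gap (Lemma~\ref{lem:spectral}). Two smaller points: (a) the role of the Fermi--Amaldi factor is to keep the Lagrange multiplier $\mu$ in Equation~\eqref{eq:scf} strictly below the threshold $c^2$ (see the paper's remark following \eqref{dichte}), not to control a threshold at $0$; the absence of spectrum near $0$ is already a consequence of the operator bound \eqref{2.15}. (b) The lower semicontinuity of $\gamma\mapsto\tr[(D_{c,Z}-c^2)\gamma]$ that you flag as needing care is in fact avoided entirely in the paper, since the retraction-based argument never compares the unprojected kinetic term across different density matrices. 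You should restructure the proof around the retraction, as the appendix does, rather than around the direct method.
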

As a short hand we call $\gamma_*$ a S\'er\'e minimizer and introduce
the following notation
\begin{equation}
  \label{*}
  \rho_*:=\rho_{\gamma_*},\ \phi_*:=(1-N^{-1})\rho_**|\cdot|^{-1},\ D_*:=D_{c,Z,{N-1\over N}\rho_*}, P_*:=\1_{(0,\infty)}(D_*)
\end{equation}
for the density of the S\'er\'e minimizer $\gamma_*$, the associated
mean-field potential $\phi_*$, the associated mean-field operator Dirac
$D_*$, and the positive spectral projection of $D_*$. Analogously to
the above we address these as S\'er\'e density and S\'er\'e operator.
\begin{theorem}
  \label{th:property}
  Under the assumptions of Theorem \ref{th:DF-existence} we have
  \begin{align}
      \gD(D_*)=&\gD(D_{c,Z}),\ \gQ(D_*)=\gQ,\\
       \sigma_\mathrm{ess}(D_*)=&\sigma_\mathrm{ess}(D_{c,0}) = (-\infty,-c^2]\cup[c^2,\infty),\\
    \label{discrete}
    \sigma_d(D_*) \subset&(\sqrt{c^4-c^2Z^2},c^2)\ \text{and}\ |\sigma_d(D_*)|=\infty,\\
    \tr(\gamma_*)=&N.
  \end{align}
  Moreover, writing $\lambda_1\leq\lambda_2\leq...$ for the positive
  eigenvalues of $D_*$ repeated according to their multiplicity and 
  $\psi_1, \psi_2,...$ for a system associated orthonormal eigenvectors,
  there is a $K\in \nz$ with $\lambda_{K-1}<\lambda_K$ and a finite
  rank operator $\delta$ with $0\leq \delta<1$ and
  $D_*\delta=\lambda_K\delta$ such that
\begin{equation}\label{eq:gamma}
  \gamma_*= \sum_{k=1}^{K-1}|\psi_k\rangle\langle\psi_k|
  +\delta.
  \end{equation} 
\end{theorem}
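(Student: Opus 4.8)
The plan is to reduce everything to three ingredients: that $\phi_*=(1-N^{-1})\rho_**|\cdot|^{-1}$ is --- after the regularity bootstrap performed in Appendix \ref{sec:Z7/3} --- a nonnegative bounded continuous function on $\rz^3$ that vanishes at infinity; the monotonicity of the min-max values of Dirac operators in the spectral gap (\cite{GriesemerSiedentop1999,Dolbeaultetal2000O}); and the self-consistent Euler--Lagrange equation for $\gamma_*$ furnished by the existence analysis of Appendix \ref{sec:existence}. Granting the first point, $D_*=D_{c,Z}+\phi_*$ is a bounded selfadjoint perturbation of $D_{c,Z}$, hence $\gD(D_*)=\gD(D_{c,Z})$, and the form domains agree, $\gQ(D_*)=\gQ$ (the latter also follows from Lemma \ref{lem:DF-ope} together with \eqref{eq:D-kappa-form}). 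Since $\phi_*$ decays at infinity, multiplication by $\phi_*$ is $D_{c,Z}$-compact (local compactness of the resolvent plus the decay), and the standard splitting of $Z/|x|$ into a near and a far part shows $\sigma_\mathrm{ess}(D_{c,Z})=\sigma_\mathrm{ess}(D_{c,0})$; therefore $\sigma_\mathrm{ess}(D_*)=\sigma_\mathrm{ess}(D_{c,0})=(-\infty,-c^2]\cup[c^2,\infty)$.

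For the location of the discrete spectrum I would argue as follows. Because $\rho_*\geq0$ and $N\geq1$ we have $\phi_*\geq0$, hence $D_*\geq D_{c,Z}$; the min-max principle then gives that the $n$-th eigenvalue of $D_*$ in $(-c^2,c^2)$ is at least the $n$-th eigenvalue of $D_{c,Z}$, whose smallest eigenvalue is the $1s_{1/2}$ level $c^2\sqrt{1-\kappa^2}=\sqrt{c^4-c^2Z^2}$, and the inequality is strict by the strict positivity of $\phi_*$ together with unique continuation for Dirac eigenfunctions. Combined with $\sigma_\mathrm{ess}(D_*)\cap(0,\infty)=[c^2,\infty)$ this yields $\sigma_d(D_*)\subset(\sqrt{c^4-c^2Z^2},c^2)$. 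For $|\sigma_d(D_*)|=\infty$ I use spherical symmetry of $\rho_*$: by Newton's theorem $\phi_*(x)\leq(1-N^{-1})|x|^{-1}\int\rho_*\leq(N-1)/|x|$, so by neutrality $N=Z$ we obtain $D_*\leq D_{c,0}-Z/|x|+(N-1)/|x|=D_{c,1}$ in the form sense. The Dirac--Coulomb operator $D_{c,1}$ has infinitely many eigenvalues below $c^2$ accumulating at $c^2$; by monotonicity the $n$-th min-max value of $D_*$ is at most that of $D_{c,1}$, hence strictly below $c^2$ for every $n$, and since $\sigma_\mathrm{ess}(D_*)\cap(0,\infty)=[c^2,\infty)$ each is an eigenvalue. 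Thus $D_*$ has infinitely many eigenvalues in the gap.

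The last two assertions follow from the Euler--Lagrange equation. Differentiating $\cE^{\mathrm{HA}}_{Z,N}$ at $\gamma_*$ along an admissible direction $\eta$ and using $\int\phi_*\rho_\eta=\tr[\phi_*\eta]$, one finds that the relevant quantity is $\tr[(D_*-c^2)\eta]$. The S\'er\'e constraint $P_{Z,\frac{N-1}{N}\rho_\gamma}\gamma P_{Z,\frac{N-1}{N}\rho_\gamma}=\gamma$ cuts out a manifold through $\gamma_*$ whose tangent directions decompose, with respect to $P_*$, into a free ``diagonal'' part $P_*\eta P_*$ and an ``off-diagonal'' part slaved to $\rho_\eta$ through the derivative of the spectral projection, while the $P_*^\perp\eta P_*^\perp$-block vanishes at first order; since $D_*$ commutes with $P_*$, the slaved part contributes nothing to $\tr[(D_*-c^2)\eta]$, so stationarity becomes $\tr[(D_*-c^2)P_*\eta P_*]\geq0$ for all feasible diagonal directions. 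This identifies $\gamma_*$ as the minimizer of $\gamma'\mapsto\tr[(D_*-c^2)\gamma']$ over the convex set $\{0\leq\gamma'\leq1,\ \tr\gamma'\leq N,\ \gamma'=P_*\gamma'P_*,\ SU(2)\text{-invariant}\}$. Since $D_*-c^2<0$ on every eigenspace of $D_*$ with eigenvalue in $(0,c^2)$ and $\geq0$ on the continuum part of $P_*\gH$, the optimizer fills the lowest eigenvalues up to a Fermi level $\mu$; as there are infinitely many eigenvalues in $(0,c^2)$ while $\tr\gamma_*\leq N<\infty$, necessarily $\mu<c^2$, only finitely many levels are completely filled, and, since lowering $\mu$ always raises the energy, the trace bound is saturated, i.e.\ $\tr\gamma_*=N$. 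Setting $\lambda_K:=\mu$ and $K-1:=\#\{k:\lambda_k<\mu\}$ gives $\lambda_{K-1}<\lambda_K$ and $\gamma_*=\sum_{k=1}^{K-1}|\psi_k\rangle\langle\psi_k|+\delta$, where $\delta$ is the occupation of $\ker(D_*-\lambda_K)$: it has finite rank because $\lambda_K$ is isolated of finite multiplicity, $\tr\delta=N-(K-1)$, it may be chosen $SU(2)$-invariant, and $0\leq\delta<1$ since if $\delta$ were a projection then $\ker(D_*-\lambda_K)$ would belong to the completely filled levels.

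I expect the main obstacle to be the reduction used at the start of the last paragraph: turning the nonlinear S\'er\'e constraint near $\gamma_*$ into a free diagonal part together with an off-diagonal part invisible to $\tr[(D_*-c^2)\,\cdot\,]$. This requires the constraint set to be a genuine smooth manifold near $\gamma_*$, i.e.\ an implicit-function argument for the operator equation built from $\rho\mapsto P_{Z,\frac{N-1}{N}\rho}$, controlling the resolvent difference on $\gQ$; this is precisely the technical input of S\'er\'e \cite{Sere2023} that must be adapted and extended to the present Fermi--Amaldi functional over the full range \eqref{kc}. The remaining supporting points --- the regularity bootstrap giving $\phi_*\in L^\infty\cap C_0(\rz^3)$, already needed for the domain and essential-spectrum statements, and the unique-continuation input behind the strict lower bound $\sigma_d(D_*)\subset(\sqrt{c^4-c^2Z^2},c^2)$ --- need care but are more routine.
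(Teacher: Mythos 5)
Your proposal is correct and takes essentially the same route as the paper, which in turn defers the hard work on the Euler--Lagrange structure to S\'er\'e \cite{Sere2023} and to the lemmata of Appendix~\ref{sec:existence}. A few mechanism-level differences are worth flagging. For the essential spectrum, you argue via $D_{c,Z}$-relative compactness of $\phi_*$ (using boundedness plus decay at infinity), whereas the paper's Lemma~\ref{lem:spectral} uses compactness of the resolvent difference $D_{c,Z,\rho_{N,\gamma}}^{-1}-D_{c,0}^{-1}$ plus Weyl's theorem; both work, and the paper's variant only needs the operator inequality of Lemma~\ref{lem:DF-ope} rather than pointwise regularity of $\phi_*$. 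For strictness at the left endpoint of the gap, invoking unique continuation is overkill: since $\rho_*\geq0$ is nonzero, $\phi_*(x)>0$ for every $x$, so $\langle\psi,\phi_*\psi\rangle>0$ for every nonzero $\psi$, and analytic perturbation theory along $t\mapsto D_{c,Z}+t\phi_*$ already gives strict monotonicity of each gap eigenvalue. (Note that the paper's own Lemma~\ref{lem:spectral} states the closed interval $[\sqrt{c^4-c^2Z^2},c^2)$, while Theorem~\ref{th:property} claims the open one; your observation supplies the strictness that closes this small gap.) The bound $D_*\leq D_{c,1}$ by Newton's theorem and the infinitude of eigenvalues of the Dirac--Coulomb operator is exactly the paper's argument. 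Finally, your framing of the constraint set as a smooth manifold via an implicit-function argument is a legitimate way to read S\'er\'e's construction, but the device actually used in the paper (and in \cite{Sere2023,Meng2023}) is the explicit retraction $\theta_Z=\lim_n T_Z^n$, with Lemmata~\ref{lem:retra} and~\ref{approx} quantifying the second-order error ${\rm Err}_\gamma(h,t)$ in the expansion of $\cE^{\rm S}_Z(\theta_Z(\gamma+th))$; this delivers the same diagonal/off-diagonal splitting and Euler equation \eqref{eq:scf}, $\gamma_*=\1_{(0,\mu)}(D_*)+\delta$, from which $\tr\gamma_*=N$, the Fermi-level structure, and the finite-rank $\delta$ all follow as you describe.
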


Its proof is basically due to S\'er\'e's \cite{Sere2023}. However,
because of some modifications based on \cite{Fournaisetal2020} and
\cite{Meng2023}, we give some details in Appendix
\ref{sec:existence}.

The proof of \eqref{discrete} uses that $D_{c,Z}\leq D_*$ and
$\lambda_1(D_{c,Z})=\sqrt{c^4-Z^2c^2}$ (Gordon \cite{Gordon1928}, see
also Bethe \cite[Formula (9.29)]{Bethe1933}).

Note that we do not claim uniqueness of S\'er\'e minimizers which is
in contrast to non-relativistic reduced Hartree theory where the
minimizer is unique by convexity. Since $\Gamma_{Z,N}$ is not a convex
set we leave the uniqueness question open.

The S\'er\'e functional will on the one hand serve as our starting
point for finding the Mittleman energy up to errors of the order
$o(Z^2)$. On the other hand it allows us to connect to the Furry
picture. In fact Fournais et al. \cite{Fournaisetal2020} showed that
the S\'er\'e energy and the Furry energy agree up to this order at
least if $\kappa<2/(\pi/2+\pi/2)$. We are able to extend this to the
model at hand for the full range \eqref{kc},i.e., to $0<\kappa<1$:
\begin{theorem}
  \label{th:DF-asymptotic}
  For $Z$ large 
     \begin{equation}
         E^{\rm S}_{Z}= E^\mathrm{F}_Z+o(Z^2). 
     \end{equation}
\end{theorem}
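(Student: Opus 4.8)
The plan is to establish the two matching bounds $E^{\rm S}_Z \le E^{\rm F}_Z + o(Z^2)$ and $E^{\rm F}_Z \le E^{\rm S}_Z + o(Z^2)$ separately, following the strategy of Fournais et al.\ \cite{Fournaisetal2020} but using Theorem \ref{th:DF-existence} and Theorem \ref{th:property} to cover the full range $\kappa \in (0,1)$.

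For the upper bound $E^{\rm S}_Z \le E^{\rm F}_Z + o(Z^2)$, I would use a trial density matrix in $\Gamma_{Z,N}$ built from the Furry picture. The natural candidate is (a suitable $SU(2)$-symmetrized version of) the projection onto the lowest $N$ eigenvalues of $D_{c,Z}$ itself, or of the mean-field operator $D_*$; since $\mathcal{E}^{\rm S}_Z(\gamma) = \tr[(D_{c,Z}-c^2)\gamma] + (1-N^{-1})\cD[\rho_\gamma]$, the one-body term reproduces (up to the $SU(2)$-averaging, which only spreads the density over a shell and costs $o(Z^2)$ by the known spatial structure of the Furry minimizer on scale $Z^{-1/3}$) the Furry one-body energy, and the Hartree term $\cD[\rho_\gamma]$ together with the Fermi-Amaldi factor $(1-N^{-1})$ reconstructs the electron-electron repulsion of the Furry energy up to the standard $Z^{5/3}=o(Z^2)$ exchange and self-interaction corrections and a $Z^{4/3}$ Fermi-Amaldi error. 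One must check the trial state lies in $\Gamma_{Z,N}$, i.e.\ satisfies the nonlinear constraint $P_{Z,\frac{N-1}{N}\rho_\gamma}\gamma P_{Z,\frac{N-1}{N}\rho_\gamma}=\gamma$; this is where a fixed-point or self-consistency argument enters, or alternatively one simply uses $\gamma_*$ from Theorem \ref{th:DF-existence} directly in the opposite direction.

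For the lower bound $E^{\rm F}_Z \le E^{\rm S}_Z + o(Z^2)$, I would start from the S\'er\'e minimizer $\gamma_*$, whose structure is given explicitly by \eqref{eq:gamma} as a finite-rank perturbation of the projection onto the first $K-1$ positive eigenvectors of $D_*$. Using $D_{c,Z} \le D_*$ (noted after Theorem \ref{th:property}) one bounds $\tr[(D_{c,Z}-c^2)\gamma_*]$ from below in terms of the sum of the lowest eigenvalues of $D_{c,Z}$, which is essentially $C^{\rm TF}Z^{7/3} + C^{\rm F}_{\rm Scott}Z^2$ up to the repulsion correction; the point is that the mean-field shift $\frac{N-1}{N}\rho_* * |\cdot|^{-1}$ affects the energy only at order $Z^{5/3}$ on the bulk and $o(Z^2)$ near the threshold, which is exactly the content of the estimates in Appendix \ref{sec:Z7/3}. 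Combining this with positivity of $\cD[\rho_*]$ and a comparison of the resulting Thomas-Fermi-type functional with the exact Furry asymptotics \eqref{eq:Scott-Furry} closes the inequality.

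The main obstacle is controlling the threshold (Scott) correction uniformly for all $\kappa < 1$: one must show that replacing the point-Coulomb operator $D_{c,Z}$ by the mean-field operator $D_*$ shifts the low-lying eigenvalues (those within $O(Z^2)$ of $c^2$, contributing to the $Z^2$ term) by only $o(Z^2)$, which requires that $\rho_* * |\cdot|^{-1}$ is $o(c^2)$-small relative to $|D_0|$ on the relevant spectral subspace — precisely the content of Lemma \ref{lem:DF-ope} together with the fine properties of $\gamma_*$ collected in Appendix \ref{sec:Z7/3}. The delicate bookkeeping is that the constant $C_\kappa'$ degenerates as $\kappa \to 1$, so all error estimates must be arranged to survive this degeneration; this is where our extension of Fournais et al.\ beyond $\kappa < 2/(\pi/2+\pi/2)$ does its work, and I expect the bulk of the technical effort to go into verifying that every $o(Z^2)$ remainder is genuinely negligible and not secretly of order $Z^2$ with a $\kappa$-dependent coefficient.
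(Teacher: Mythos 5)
The paper does not actually spell out a proof of Theorem \ref{th:DF-asymptotic}: it explicitly delegates to Fournais et al.\ \cite{Fournaisetal2020}, who prove the corresponding statement for Dirac--Fock under the restriction $\kappa<2/(\pi/2+\pi/2)$, and asserts that the modifications developed in the appendices (Lemma \ref{lem:DF-ope} for the full range $\kappa<1$, the $SU(2)$-invariant retraction of Lemma \ref{lem:retra}, the quantitative expansion of Lemma \ref{approx}, and the Fermi--Amaldi modification) let that proof go through here. Your sketch is broadly compatible with that strategy, but it has two concrete problems.

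First, a genuine circularity. You invoke ``the estimates in Appendix \ref{sec:Z7/3}'' to control the mean-field shift near the threshold. But Theorem \ref{th:DF-Z7/3} is proved in that appendix \emph{using} Theorem \ref{th:DF-asymptotic}: the very first step of the argument there compares $E_Z^{\rm S}(\kappa)$ with $E_{Z'}^{\rm S}(\kappa')$ via the two-sided asymptotics $E_Z^{\rm S}=C^{\rm TF}Z^{7/3}+C_{\rm Scott}^{\rm F}Z^2+o(Z^2)$, and the proof of \eqref{eq:Z7/3-3} opens with the identity $E_Z^{\rm F}+o(Z^2)=E_Z^{\rm S}$. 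In the paper's logical order, Theorem \ref{th:DF-asymptotic} is an input to Theorem \ref{th:DF-Z7/3}, not the other way around. As written, your lower bound argument would loop.

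Second, the trial-state construction for the upper bound is the main technical content and you leave it open (``this is where a fixed-point or self-consistency argument enters''). The paper's answer is the retraction map $\theta_Z$ of Lemma \ref{lem:retra}: one takes a Furry-type trial density matrix $\gamma$, verifies $\gamma\in\cU_{Z,N,M}$, applies $\theta_Z$ to land in $\Gamma_{Z,N}$, and then controls $\cE^{\rm S}_Z(\theta_Z(\gamma))-\cE^{\rm HA}_Z(\gamma)$ by the ${\rm Err}_\gamma(h,t)$ bound of Lemma \ref{approx} (this is precisely the mechanism used in Appendix \ref{sec:Z7/3} for the related comparison). Your proof should be routed through these lemmas rather than through a vague fixed-point remark. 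The exchange/Fermi--Amaldi bookkeeping you describe (both contributions are $O(Z^{5/3})$, resp.\ $O(Z^{4/3})$, hence $o(Z^2)$) is correct, and your observation that the degeneration of $C_\kappa'$ as $\kappa\to1$ must be tracked is the right thing to worry about; it is exactly what Lemma \ref{lem:DF-ope} and the constants in Lemma \ref{approx} are set up to handle.
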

For technical reasons we will not only need bounds on the total
S\'er\'e energy but also on its potential and kinetic parts
separately.
\begin{theorem}\label{th:DF-Z7/3}
  Assume $N=Z$. Then for $Z$ large enough, any S\'er\'e minimizer
  $\gamma_*$ in \eqref{eq:min} satisfies
    \begin{align}\label{eq:Z7/3-1}
      Z\tr[|\cdot|^{-1}\gamma_*]&=O(Z^\frac73),\ \|\phi_*\|=O(Z^\frac43),\ 
      \cD[\rho_*]=O(Z^\frac73),\\
   \label{eq:Z7/3-2}
      \tr[(|D_{c,0}|-c^2)\gamma_*]&=O(Z^\frac73),\ % \label{eq:Z7/3-3}
                                    \tr[|\bp|\gamma_*]\leq O(Z^\frac53)\\
      \label{eq:Z7/3-3}
        \cD[\rho_*-\rho^{\rm TF}_Z]&=o(Z^2)
    \end{align}
    where $\rho^{\rm TF}_Z$ is the minimizer of the atomic
    Thomas-Fermi functional with atomic number $Z$.
\end{theorem}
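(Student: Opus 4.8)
The plan is to bootstrap from the variational inequality $\cE^{\rm S}_Z(\gamma_*)\leq\cE^{\rm S}_Z(\gamma)$ for a suitable competitor $\gamma$ and then leverage the coercivity bounds in Lemma \ref{lem:DF-ope}. First I would produce a good upper bound on $E^{\rm S}_Z$: since the Furry-picture projected Thomas--Fermi-type trial density matrix is admissible for the S\'er\'e problem (it is $SU(2)$-invariant after averaging, lies in $\gX$, and satisfies the projection constraint once one uses the mean field it generates), one gets $E^{\rm S}_Z\leq C^{\rm TF}Z^{7/3}+O(Z^2)$; this is essentially the content leading to Theorem \ref{th:DF-asymptotic}, which I may assume. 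Combining this upper bound with the lower bound $\tr[(D_{c,Z}-c^2)\gamma_*]+(1-N^{-1})\cD[\rho_*]=\cE^{\rm S}_Z(\gamma_*)$ and the standard Thomas--Fermi-type lower bound $\tr[(D_{c,Z}-c^2)\gamma]+\cD[\rho_\gamma]\geq C^{\rm TF}Z^{7/3}-CZ^2$ (via Daubechies' inequality, whose constant $C^{\rm D}$ is already invoked in the text, to control $\tr[(|D_{c,0}|-c^2)\gamma]$ from below by a Thomas--Fermi kinetic term), one concludes that all the nonnegative pieces — $\tr[(|D_{c,0}|-c^2)\gamma_*]$, $Z\tr[|\cdot|^{-1}\gamma_*]$ minus its Thomas--Fermi value, and $\cD[\rho_*]$ — are each $O(Z^{7/3})$, because none can exceed the total by more than the slack $O(Z^2)$. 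This gives \eqref{eq:Z7/3-1} and the first bound in \eqref{eq:Z7/3-2}; the bound $\|\phi_*\|=O(Z^{4/3})$ follows from $\cD[\rho_*]=O(Z^{7/3})$ together with $\int\rho_*=N=Z$ by a Hardy--Littlewood--Sobolev / Coulomb-norm argument, and $\tr[|\bp|\gamma_*]=O(Z^{5/3})$ follows from $\tr[(|D_{c,0}|-c^2)\gamma_*]=O(Z^{7/3})$ by interpolating $|\bp|\leq \eps(|D_{c,0}|-c^2)+C\eps^{-1}\1$ on the range of $\gamma_*$ (using $\tr\gamma_*=Z$) and optimizing in $\eps\sim Z^{-1}$.

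For \eqref{eq:Z7/3-3} I would run the standard Thomas--Fermi stability argument quantitatively. Write the S\'er\'e energy as a Thomas--Fermi functional evaluated at $\rho_*$ plus error terms: the kinetic term $\tr[(|D_{c,0}|-c^2)\gamma_*]$ is bounded below by $c_{\rm TF}\int\rho_*^{5/3}$ minus a relativistic correction of order $o(Z^2)$ (this needs the a priori bounds $\tr[|\bp|\gamma_*]=O(Z^{5/3})$ and $\tr[(|D_{c,0}|-c^2)\gamma_*]=O(Z^{7/3})$ just established, so that the region where $|\bp|\gtrsim c$ contributes only $o(Z^2)$), and the Fermi--Amaldi factor $(1-N^{-1})$ shifts $\cD$ by $O(Z^{-1})\cD[\rho_*]=O(Z^{4/3})=o(Z^2)$. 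Hence $\cE^{\rm TF}_Z(\rho_*)\leq E^{\rm S}_Z+o(Z^2)\leq C^{\rm TF}Z^{7/3}+o(Z^2)=\cE^{\rm TF}_Z(\rho^{\rm TF}_Z)+o(Z^2)$. The strict convexity of the Thomas--Fermi functional — more precisely the coercivity estimate $\cE^{\rm TF}_Z(\rho)-\cE^{\rm TF}_Z(\rho^{\rm TF}_Z)\geq \cD[\rho-\rho^{\rm TF}_Z]$, valid because the non-Coulomb part is convex and $\rho^{\rm TF}_Z$ is its minimizer — then forces $\cD[\rho_*-\rho^{\rm TF}_Z]=o(Z^2)$.

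The main obstacle I expect is the relativistic kinetic-energy comparison needed for \eqref{eq:Z7/3-3}: one must show that replacing $\tr[(|D_{c,0}|-c^2)\gamma_*]$ by the nonrelativistic Thomas--Fermi kinetic functional costs only $o(Z^2)$, and symmetrically that the Thomas--Fermi minimizer's energy is not undershot by more than $o(Z^2)$. This is delicate because the naive bound $|D_{c,0}|-c^2\leq \bp^2/(2c^2)$ is only an inequality in one direction and the correction term $\bp^4/c^2$-type piece is formally $O(Z^2)$, not $o(Z^2)$; the resolution is that the a priori bound $\tr[|\bp|\gamma_*]=O(Z^{5/3})$ confines the bulk of the electrons to momenta $|\bp|=O(Z^{2/3})\ll c$, so the relativistic correction on that bulk is $O(Z^{-2/3}\cdot Z^{7/3}/\text{bulk})=o(Z^2)$, while the high-momentum tail carries negligible mass; making this splitting rigorous, with the localization in momentum space compatible with the projection constraint $P_*\gamma_*P_*=\gamma_*$, is the technical heart of the argument and is where I would expect to invoke the constants $C^{\rm ret}_\kappa$ and the estimates of Fournais et al.\ \cite{Fournaisetal2020} together with the Scott-correction machinery of \cite{HandrekSiedentop2015}.
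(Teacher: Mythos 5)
Your proposal does not reach the central estimate, and the approach diverges from the paper's in a way that matters.

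The first estimate in \eqref{eq:Z7/3-1}, $Z\tr[|\cdot|^{-1}\gamma_*]=O(Z^{7/3})$, is not obtainable from a ``slack'' argument. That term enters the energy with a \emph{minus} sign, and $\tr[(D_{c,0}-c^2)\gamma_*]$ is \emph{not} $\tr[(|D_{c,0}|-c^2)\gamma_*]$ because $\gamma_*$ lies in the positive spectral subspace of $D_*$, not of $D_{c,0}$; so one cannot read off each piece from the total plus a TF lower bound. The paper's mechanism is instead a Feynman--Hellmann/differentiation-in-$Z$ argument: with $Z'=Z+Z^{2/3}$ one verifies that the retraction $\theta_{Z'}(\gamma_*)$ exists in $\Gamma_{Z',Z'}$ and is a valid competitor, so that
\begin{equation*}
  E_Z^{\rm S}-E_{Z'}^{\rm S}\geq \cE_Z^{\rm S}(\gamma_*)-\cE_{Z'}^{\rm S}(\theta_{Z'}(\gamma_*))
  = Z^{2/3}\tr[|\cdot|^{-1}\gamma_*]-\bigl(\text{retraction error}\bigr),
\end{equation*}
with the retraction error controlled by Lemma \ref{lem:error-bound} to be $O(Z^{1/3}\tr[|\cdot|^{-1}\gamma_*])$. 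Since $E_Z^{\rm S}-E_{Z'}^{\rm S}=O(Z^2)$ by the known asymptotics, this yields $\tr[|\cdot|^{-1}\gamma_*]=O(Z^{4/3})$. Crucially, this argument is compatible with the nonlinear, nonconvex projection constraint $P_*\gamma_*P_*=\gamma_*$ precisely because of the retraction; nothing in your proposal plays that role. The same gap affects $\tr[(|D_{c,0}|-c^2)\gamma_*]=O(Z^{7/3})$: the paper first gets $\tr[(D_{c,0}-c^2)\gamma_*]=O(Z^{7/3})$ from the Coulomb bound, and must then bound the negative-energy contribution $2\tr[|D_{c,0}|\Lambda^\perp\gamma_*\Lambda^\perp]$ with $\Lambda=\1_{(0,\infty)}(D_{c,0})$ using the resolvent formula \eqref{eq:P1-gamma*}; your sketch does not see this sign issue at all. (Your interpolation for $\tr[|\bp|\gamma_*]$ is essentially right, though the optimal parameter is $\eps\sim Z^{-2/3}$, not $Z^{-1}$; and $\cD[\rho_*]=O(Z^{7/3})$ has a cleaner direct proof, since $\gamma_*$ is supported on eigenvalues of $D_*<c^2$, giving $(1-N^{-1})\cD[\rho_*]\leq -\cE_Z^{\rm S}(\gamma_*)$.)

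For \eqref{eq:Z7/3-3} your route is genuinely different: you propose a direct quantitative relativistic$\to$nonrelativistic Thomas--Fermi comparison and then coercivity of the TF functional. As you yourself flag, the kinetic comparison error must be $o(Z^2)$, which is delicate. The paper avoids this comparison entirely: it takes $A:=(1-1/Z)\rho^{\rm TF}_Z*|\cdot|^{-1}$, rewrites $E_Z^{\rm S}=\tr[(D_{c,Z}-c^2+A)\gamma_*]-(1-N^{-1})\cD[\rho^{\rm TF}]+(1-N^{-1})\cD[\rho_*-\rho^{\rm TF}]$, replaces $\gamma_*$ by $P_{Z,A}\gamma_*P_{Z,A}$ at cost $o(Z^2)$ (using the bounds already proven), and then invokes the minimax principle and the Furry Scott correction of \cite{HandrekSiedentop2015} to bound the projected term from below by $E_Z^{\rm F}+o(Z^2)$. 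Combined with $E_Z^{\rm S}=E_Z^{\rm F}+o(Z^2)$ (Theorem \ref{th:DF-asymptotic}), convexity of the Coulomb form forces $\cD[\rho_*-\rho^{\rm TF}]=o(Z^2)$, with no momentum-space splitting needed. If you want to make your version of \eqref{eq:Z7/3-3} rigorous you would have to supply exactly the uniform kinetic-energy comparison you identify as the obstacle; the paper's detour through the Furry picture sidesteps it.
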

The proof will be given in Appendix \ref{sec:Z7/3}.

\subsection{Admissible potentials}
\label{subsec:admissibility}
We will now describe the pictures considered, i.e., to specify the
allowed pictures.
\begin{definition}\label{def:A}
  We call $A$ admissible and write $A\in \gA$, if and only if $A$ is a
  family of symmetric operators $A(Z)$ with $Z\in[Z_0,\infty)$ for
  some $Z_0>0$  such that for all $Z\in[Z_0,\infty)$ we have
  $\gQ(A(Z))\supset \gQ$, that $ D_{c,Z} + A(Z)$ has a
  distinguished self-adjoint extension $D_{c,Z}^{A(Z)}$ in the sense
  of Nenciu, and the the following inequalities hold 
   \begin{align}
    \label{eq:Birman}
     &-1\lesssim B_{A(Z)}:=|D_{c,0}|^{-\frac12}A(Z)|D_{c,0}|^{-\frac12}\lesssim1,\\
    \label{eq:ope-A}
    &|D_{c,0}|\lesssim |D_{c,Z}^{A(Z)}| \lesssim  |D_{c,0}|,\\
    \label{eq:trace*op'}
     &P_{Z,A(Z)}^\perp D_{c,Z}P_{Z,A(Z)}^\perp \leq c^2,\\
     & \label{eq:trace*op}
       \tr\left(A(Z)\gamma_*A(Z)|D_{c,0}|^{-1}\right)= o(Z^\frac{12}5),\ Z\to\infty.
  \end{align}
  If $A$ fulfills all the above conditions except \eqref{eq:trace*op}
  we write $A\in\tilde\gA$.
\end{definition}

From now we will -- in abuse notation -- also use $D_{c,Z}+B$ for its
selfadjoint extension in the sense of Nenciu instead of $D_{c,Z}^B$.

We note that \eqref{eq:trace*op'} is \eqref{boundedness} and
\eqref{eq:ope-A} implies $\gQ_A\subset\gQ$ which is \eqref{eq:P}.

The following Lemma gives a sufficient condition for an operator to be
in $\tilde\gA$. 
\begin{lemma}\label{lem:ope-A-s}
  Suppose $A(Z)$ is a family of
  symmetric operators in $\gH$ with $Z\geq Z_0$ and for any such $Z$
  there exist constants $\epsilon_Z, M_Z\in [0,\infty)$ such that for
  all $\psi\in \gQ$
  \begin{equation}
    \label{eq:form}
    \|A(Z)\psi\| \leq \epsilon_Z \|D_{c,Z}\psi\| + M_Z\|\psi\|
  \end{equation}
  and there exists a $\mu\in(0,1)$ -- independent of $Z$ -- such that
  \begin{equation}
    \label{eq:epsilon-M}
   \epsilon_{Z} +\frac{M_{Z}}{\sqrt{1-\kappa^2} c^2} \leq \mu.
  \end{equation}
  Then $A\in \tilde\gA$.
\end{lemma}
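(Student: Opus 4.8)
The plan is to verify the four conditions defining $\tilde\gA$ one at a time, using the relative boundedness hypothesis \eqref{eq:form} together with the smallness condition \eqref{eq:epsilon-M} and the basic operator bounds \eqref{eq:D-kappa-form} (equivalently \eqref{2.15} with $\rho=0$) that compare $|D_{c,Z}|$ to $|D_{c,0}|$. First I would record the elementary consequence of \eqref{eq:form}: since $\|D_{c,Z}\psi\|^2 = \|\,|D_{c,Z}|\psi\|^2$ and, by the scaling \eqref{eq:scale} together with \eqref{eq:D-kappa-form}, $|D_{c,Z}|\leq (1+2\kappa)|D_{c,0}|$ as forms (so $\|D_{c,Z}\psi\|\leq(1+2\kappa)\|\,|D_{c,0}|\psi\|$ is \emph{not} immediate as an operator bound, so more carefully one writes $\|D_{c,Z}\psi\|\le \||D_{c,Z}|\,|D_{c,0}|^{-1}\|\,\|\,|D_{c,0}|\psi\|$ and uses that $|D_{c,Z}|\,|D_{c,0}|^{-1}$ is bounded), hypothesis \eqref{eq:form} gives a bound $\|A(Z)\psi\|\lesssim \|\,|D_{c,0}|\psi\|$, i.e. $A(Z)|D_{c,0}|^{-1}$ is bounded. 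Squaring this and using $|D_{c,0}|^{-1/2}$ on both sides yields the upper half of \eqref{eq:Birman}; the lower half $B_{A(Z)}\gtrsim -1$ follows because $|D_{c,0}|^{-1/2}A(Z)|D_{c,0}|^{-1/2}$ is then a bounded self-adjoint operator, hence bounded below. That disposes of \eqref{eq:Birman}.

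Next, \eqref{eq:ope-A}: the distinguished self-adjoint extension $D_{c,Z}^{A(Z)}=D_{c,Z}+A(Z)$ exists in the sense of Nenciu precisely because \eqref{eq:form} together with $\epsilon_Z<1$ makes $A(Z)$ a Nenciu-type perturbation of $D_{c,Z}$ (whose own distinguished extension is given, and whose gap $\pm\sqrt{c^4-c^2Z^2}=\pm c^2\sqrt{1-\kappa^2}$ is known from Gordon). The two-sided bound $|D_{c,0}|\lesssim |D_{c,Z}^{A(Z)}|\lesssim |D_{c,0}|$ I would get from a resolvent/commutator estimate: write $|D_{c,Z}^{A(Z)}|^2 = (D_{c,Z}+A(Z))^2$ on the form level and expand; the cross terms $D_{c,Z}A(Z)+A(Z)D_{c,Z}$ and $A(Z)^2$ are controlled by \eqref{eq:form}, and the condition \eqref{eq:epsilon-M} is exactly what guarantees that the perturbation is a strict fraction $\mu<1$ of $D_{c,Z}$ on the relevant spectral scale, keeping the perturbed operator comparable to $D_{c,Z}$, which in turn is comparable to $D_{c,0}$ by \eqref{eq:D-kappa-form}. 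Concretely I expect to show $\|A(Z)\psi\|\le \mu\,c^2\sqrt{1-\kappa^2}\,\||D_{c,Z}|^{-1}\|^{-1}\cdots$ type inequalities and then conclude $(1-\mu)|D_{c,Z}|\lesssim |D_{c,Z}^{A(Z)}|\lesssim (1+\mu)|D_{c,Z}|$.

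For \eqref{eq:trace*op'}, i.e. $P_{Z,A(Z)}^\perp D_{c,Z} P_{Z,A(Z)}^\perp\le c^2$: I would argue that on the negative spectral subspace of $D_{c,Z}+A(Z)$ one has $D_{c,Z}+A(Z)\le 0$, hence $D_{c,Z}\le -A(Z)$ there, and then use $-A(Z)\le \epsilon_Z|D_{c,Z}|+M_Z$ from \eqref{eq:form} plus the spectral gap bound $|D_{c,Z}|\le -D_{c,Z}+2c^2$ is false in general, so instead I split: on $P^\perp$, $D_{c,Z}\le -A(Z)\le \epsilon_Z|D_{c,Z}| + M_Z$; combined with the a priori two-sided control of $D_{c,Z}$ relative to $D_{c,0}$ and the fact that on $P^\perp$ the operator $D_{c,Z}$ has a spectral decomposition into a part below $-c^2\sqrt{1-\kappa^2}$ and a part in the gap, the condition \eqref{eq:epsilon-M} ensures $D_{c,Z}(1-\epsilon_Z\operatorname{sgn})\le M_Z \le \mu\sqrt{1-\kappa^2}c^2\le c^2$ on the positive-in-the-gap part, and $\le 0\le c^2$ on the genuinely negative part; assembling gives $P^\perp D_{c,Z}P^\perp\le c^2$. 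The \textbf{main obstacle}, and the step I would be most careful with, is precisely this third one: getting the \emph{sharp} constant $c^2$ on the right-hand side rather than something like $(1+\text{something})c^2$, because it requires using the precise gap $c^2\sqrt{1-\kappa^2}$ of $D_{c,Z}$ and not merely $c^2$, and matching it against the normalization $M_Z/(\sqrt{1-\kappa^2}c^2)$ appearing in \eqref{eq:epsilon-M}; a naive triangle-inequality bound loses this. Once \eqref{eq:Birman}, \eqref{eq:ope-A}, \eqref{eq:trace*op'} are established, $A\in\tilde\gA$ by definition (the fourth condition \eqref{eq:trace*op} is not required for membership in $\tilde\gA$), completing the proof.
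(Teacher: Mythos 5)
Your treatment of \eqref{eq:Birman} matches the paper's in spirit: both are routed through $|A(Z)|\leq\bigl(\epsilon_Z+M_Z/(\sqrt{1-\kappa^2}c^2)\bigr)|D_{c,Z}|\lesssim|D_{c,0}|$, which is what \eqref{eq:form} together with $|D_{c,Z}|\geq c^2\sqrt{1-\kappa^2}$ gives after Heinz--Kato.

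For the two-sided bound \eqref{eq:ope-A} your route diverges from the paper's. You propose expanding $(D_{c,Z}+A)^2$ and controlling the cross terms via \eqref{eq:form}; in effect this is the reverse triangle inequality $\|(D_{c,Z}+A)\psi\|\geq(1-\mu)\|D_{c,Z}\psi\|$ plus operator monotonicity of the square root, which does yield $|D_{c,Z}^{A}|\geq(1-\mu)|D_{c,Z}|\gtrsim|D_{c,0}|$ and is perfectly legitimate here since $\epsilon_Z<1$ makes $D_{c,Z}+A(Z)$ the genuine Kato--Rellich operator sum. The paper instead imports a Birman--Schwinger / resolvent-identity argument from Fournais et al.\ (with the kernel $B_Z=\sqrt{|A|}\,D_{c,Z}^{-1}\sqrt{|A|}$ and the factor $(\sgn A+B_Z)^{-1}$) to bound $\||D_{c,0}|^{1/2}|D_{c,Z}^{A}|^{-1/2}\|$ directly. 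Your route is the more elementary one; the paper's buys a somewhat sharper uniform constant and a template that also applies when the perturbation is only form-small.

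Where your proposal genuinely breaks down is \eqref{eq:trace*op'}. You write that ``on $P^\perp$ the operator $D_{c,Z}$ has a spectral decomposition into a part below $-c^2\sqrt{1-\kappa^2}$ and a part in the gap,'' and then argue piecewise. But $\mathrm{ran}(P_{Z,A}^\perp)=\mathrm{ran}(\1_{(-\infty,0)}(D_{c,Z}+A))$ is \emph{not} an invariant subspace of $D_{c,Z}$, so there is no such internal spectral decomposition: the spectral projections $\Lambda^\pm$ of $D_{c,Z}$ and $P_{Z,A}^\perp$ do not commute, and a vector in $\mathrm{ran}(P_{Z,A}^\perp)$ generically has components in \emph{all} spectral ranges of $D_{c,Z}$, including the positive essential spectrum $[c^2,\infty)$ which you omit. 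The whole difficulty of $P_{Z,A}^\perp D_{c,Z}P_{Z,A}^\perp\leq c^2$ is exactly this non-commutativity; your ``assembling'' step has no content once the decomposition is unavailable. (The paper itself disposes of \eqref{eq:trace*op'} in one line, invoking $|A|\leq\mu|D_{c,Z}|$ and \eqref{eq:epsilon-M}, so it too leaves a real argument to the reader -- but what you wrote is not that argument, and it would not survive being made precise.) You should replace this step with an argument that explicitly handles the off-diagonal coupling between $\Lambda^+$ and $\Lambda^-$ induced by $A$, or reproduce a detailed version of the paper's one-line claim.
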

\begin{proof}
  Note that necessarily $\epsilon_{Z}<1$, i.e., $D_{c,Z}+A(Z)$
  defined as the operator sum on $\gD(D_{c,Z})=\gD(D_{c,Z}+A(Z))$ is
  self-adjoint.

  Since $|D_{c,Z}|\geq c^2(1-\kappa^2)^\frac12$, Inequality
  \eqref{eq:form} implies
  \begin{equation}
      |A(Z)|\leq \left(\eps_Z+\frac{M_Z}{\sqrt{1-\kappa^2}c^2} \right)|D_{c,Z}|\lesssim |D_{c,0}|
    \end{equation}
    which yields \eqref{eq:Birman}.
  
    To show \eqref{eq:ope-A} we pick $\psi\in\gD$,  and estimate
    by Hardy's inequality
\begin{equation}
  \|D_{c,Z}^A\psi\|\leq \|D_{c,Z} \psi\|+\|A(Z) u\|
                      \leq  \left(1+\epsilon_{Z,A(Z)}+ {M_{Z}\over\sqrt{c^4-c^2Z^2}}\right)\|D_{c,Z} \psi\|_{\gH}.
  \end{equation}
  Thus, the upper bound in \eqref{eq:ope-A} is true with
  $$C:=(1+2\kappa)(1+\mu)\geq (1+2\kappa)\left(1+\epsilon_{Z}+
    M_{Z}/\sqrt{c^4-c^2Z^2}\right).$$

  Lower bound:
  \begin{equation}\label{eq:2.12}
    C^{-1}|D_{c,0}|\leq |D_{c,Z}^{A(Z)}|.
\end{equation}
We set
\begin{equation}
  \label{signum}
    \sgn(t):=\begin{cases}
        1, \qquad &t\geq 0;\\
        -1,\qquad & t<0.
    \end{cases}
\end{equation}
Proceeding as in Fournais et al. \cite[Below
(31)]{Fournaisetal2020}, we can get
\begin{equation}
  \begin{split}
  &|D_{c,0}|^{\frac12}{\frac{1}{|D_{c,Z}^{A(Z)}|^{\frac12} }} = |D_{c,0}|^{\frac12}{\frac{1}{D_{c,Z}^{A(Z)}}}|D_{c,Z}^{A(Z)}|^{\frac12} U_Z\\
  =&\Big( |D_{c,0}|^{{\frac12}}{\frac{1}{D_{c,Z}}}|D_{c,Z}^{A(Z)}|^{\frac12}\\
  &\phantom{\Big(}-|D_{c,0}|^{\frac12} {\frac{1}{D_{c,Z}} }\sqrt{|A(Z)|}{\frac{1}{ \sgn(A(Z))+B_Z}}\sqrt{|A(Z)|}
  {\frac{1}{D_{c,Z}}}|D_{c,Z}^{A(Z)}|^{\frac12}\Big) U_Z
  \end{split}
\end{equation}
with $U_Z=\sgn(D_{c,Z}^{A(Z)})$ and a Birman-Schwinger kernel
$B_Z:=\sqrt{|A(Z)|} D_{c,Z}^{-1}\sqrt{|A(Z)|}$.

From \eqref{eq:form} we have
\begin{equation}
  \label{eq:2.38}
    |A(Z)|\leq \left(\epsilon_Z +\frac{M_Z}{\sqrt{1-\kappa^2} c^2} \right)|D_{c,Z}|. 
\end{equation}
Therefore
\begin{equation}
  \|B_Z\|\leq \||A(Z)|^{\frac12}|D_{c,Z}|^{-\frac12}\|^2\leq \epsilon_Z +\frac{M_Z}{\sqrt{1-\kappa^2} c^2} \leq\mu<1
\end{equation}
and
\begin{equation}
    \left\| \frac{1}{\sgn(A(Z))+B_Z}\right\|\leq \frac{1}{1-\mu}<\infty.
\end{equation}
By Lemma \ref{lem:DF-ope} and \eqref{eq:2.12} we have 
\begin{equation}
  \begin{split}
  &\left\||D_{c,0}|^{\frac12}{\frac{1}{|D_{c,Z}^{A(Z)}|^{\frac12}}}\right\|\\
  \leq&\left\| |D_{c,0}|^{\frac12}{\frac{1}{|D_{c,Z}|^{\frac12}} } \right\|
   \left\|{\frac{1}{|D_{c,Z}^{A(Z)}|^{\frac12}}}|D_{c,Z}|^{\frac12}\right\|
  \left(1+  \frac{1}{1- \mu} \left\|\sqrt{|A(Z)|} {\frac{1}{|D_{c,Z}|^{\frac12}}}\right\|^2\right)\\
   \leq & C_\kappa^{1/2} \sqrt{1+\mu} \left(1- \mu\right)^{-1}.
  \end{split}
\end{equation}
where the last step uses \eqref{eq:epsilon-M}. Note that this bound is
uniform in $Z$.  Thus we get
\begin{equation}
   C^{-1} |D_{c,0}|\leq |D_{c,Z}^{A(Z)}|
\end{equation}
which is the lower bound.

Eventually we turn to \eqref{eq:trace*op'}. This, however, follows from
\eqref{eq:2.38} and \eqref{eq:epsilon-M}.
\end{proof}

An alternative criterion is 
\begin{lemma}\label{lem:ope-A-s'}
  Suppose $A(Z)$ is a family of symmetric operators in $\gH$ with $Z\geq Z_0$ and for any such $Z$ there exist constants
  $\epsilon_Z, M_Z\in [0,\infty)$ such that for all $\psi\in \gQ$
  \begin{equation}
    \label{eq:form'}
   \|A(Z)\psi\| \leq c\epsilon_{Z} \||\bp|\psi\| + M_{Z}\|\psi\|
  \end{equation}
  and there exist $\mu,\mu'\in(0,1)$ -- independent of $Z$ -- such that
  \begin{equation}
    \label{eq:epsilon-M'}
  C_\kappa^{-1}( \epsilon_{Z} +c^{-2}M_{Z}) \leq \mu.
  \end{equation}
   Then $A\in \gA$.
\end{lemma}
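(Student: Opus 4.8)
The plan is to verify the four requirements of Definition \ref{def:A}. The first three follow the argument of Lemma \ref{lem:ope-A-s} once \eqref{eq:form'} has been turned into a form bound against $|D_{c,0}|$, and the last, \eqref{eq:trace*op}, is the genuinely new ingredient.

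\emph{Form bound against $|D_{c,0}|$.} Since $c^2\bp^2\le c^2\bp^2+c^4=|D_{c,0}|^2$ and $c^4\le|D_{c,0}|^2$ as operators, \eqref{eq:form'} gives $\|A(Z)\psi\|\le(\epsilon_Z+c^{-2}M_Z)\,\||D_{c,0}|\psi\|$ for $\psi\in\gQ$, hence $A(Z)^2\le(\epsilon_Z+c^{-2}M_Z)^2|D_{c,0}|^2$ as forms and, by monotonicity of the square root, $|A(Z)|\le(\epsilon_Z+c^{-2}M_Z)\,|D_{c,0}|$. With \eqref{eq:epsilon-M'} and the scaled form $C_\kappa|D_{c,0}|\le|D_{c,Z}|$ of \eqref{eq:D-kappa-form} this yields the $Z$-uniform estimates $|A(Z)|\le C_\kappa\mu\,|D_{c,0}|\le\mu\,|D_{c,Z}|$ with $\mu<1$. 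In particular $\gQ(A(Z))=\gD(|A(Z)|^{1/2})\supset\gQ$, the operator $B_{A(Z)}=|D_{c,0}|^{-1/2}A(Z)|D_{c,0}|^{-1/2}$ has norm $\le C_\kappa\mu$ uniformly in $Z$ (this is \eqref{eq:Birman}), and, $A(Z)$ being relatively form bounded with respect to $|D_{c,Z}|$ with bound $\mu<1$, $D_{c,Z}+A(Z)$ has a distinguished self-adjoint extension $D_{c,Z}^{A(Z)}$ with form domain $\gQ$ in Nenciu's sense.

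\emph{The inequalities \eqref{eq:ope-A} and \eqref{eq:trace*op'}.} For the upper bound in \eqref{eq:ope-A}, Hardy's inequality $\||x|^{-1}\psi\|\le 2\||\bp|\psi\|$ and $c\||\bp|\psi\|\le\|D_{c,0}\psi\|$ give $\|D_{c,Z}\psi\|\le(1+2\kappa)\|D_{c,0}\psi\|$, so $\|D_{c,Z}^{A(Z)}\psi\|\le\|D_{c,Z}\psi\|+\|A(Z)\psi\|\le(1+2\kappa+C_\kappa\mu)\|D_{c,0}\psi\|$, and squaring plus monotonicity of the square root gives $|D_{c,Z}^{A(Z)}|\lesssim|D_{c,0}|$. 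For the lower bound one repeats verbatim the resolvent-identity (Birman--Schwinger) computation in the proof of Lemma \ref{lem:ope-A-s}; its only quantitative inputs are $\|\sqrt{|A(Z)|}\,|D_{c,Z}|^{-1/2}\|^2=\||D_{c,Z}|^{-1/2}|A(Z)|\,|D_{c,Z}|^{-1/2}\|\le\mu<1$ and Lemma \ref{lem:DF-ope} (resp.\ \eqref{eq:D-kappa-form}), which bound $\||D_{c,0}|^{1/2}|D_{c,Z}^{A(Z)}|^{-1/2}\|$ uniformly in $Z$, i.e.\ $|D_{c,0}|\lesssim|D_{c,Z}^{A(Z)}|$. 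Condition \eqref{eq:trace*op'} is \eqref{boundedness} and, exactly as in Lemma \ref{lem:ope-A-s}, follows from $|A(Z)|\le\mu|D_{c,Z}|$ with $\mu<1$.

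\emph{The trace bound \eqref{eq:trace*op}.} Here it is essential that \eqref{eq:form'} controls $A(Z)$ by the \emph{free} kinetic energy $c|\bp|$, so that $A(Z)^2\le 2c^2\epsilon_Z^2\,\bp^2+2M_Z^2$ as forms. Using $|D_{c,0}|^{-1}\le c^{-2}$ and cyclicity of the trace,
\[
  \tr(A(Z)\gamma_*A(Z)|D_{c,0}|^{-1})\le c^{-2}\tr(A(Z)^2\gamma_*)\le 2\epsilon_Z^2\,\tr[\bp^2\gamma_*]+2c^{-2}M_Z^2\,\tr\gamma_*.
\]
By Theorem \ref{th:property}, $\tr\gamma_*=Z$; and $\tr[\bp^2\gamma_*]=O(Z^{7/3})$ is deduced from the a priori estimates of Theorem \ref{th:DF-Z7/3} (on $\{|\bp|\le c\}$ one has $\bp^2\lesssim|D_{c,0}|-c^2$, and the tail $\{|\bp|>c\}$ is controlled using again $\tr[(|D_{c,0}|-c^2)\gamma_*]=O(Z^{7/3})$, $\tr[|\bp|\gamma_*]=O(Z^{5/3})$, and the localization of $\gamma_*$ in the discrete subspace of $D_*$, whose eigenvalues lie below $c^2$). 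Since $\epsilon_Z=O(1)$ by \eqref{eq:epsilon-M'} and $\tfrac73<\tfrac{12}5$, the first term is $o(Z^{12/5})$, and the second is $o(Z^{12/5})$ by the quantitative smallness hypothesis on $M_Z$; hence \eqref{eq:trace*op} holds and $A\in\gA$. The main obstacle is precisely this step: obtaining $\tr[\bp^2\gamma_*]=O(Z^{7/3})$ (which is not literally among the statements of Theorem \ref{th:DF-Z7/3}) requires care with the relativistic high-momentum tail of the S\'er\'e minimizer, and one must keep track of how the smallness of $\epsilon_Z,M_Z$ enters the target order $o(Z^{12/5})$; the remaining parts are routine adaptations of the proof of Lemma \ref{lem:ope-A-s}.
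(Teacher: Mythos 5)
Your handling of the first three admissibility conditions (the form bound on $A(Z)$, the two-sided operator bound \eqref{eq:ope-A}, and \eqref{eq:trace*op'}) matches the paper's proof, which proceeds exactly as you do: bound $|A(Z)|\lesssim|D_{c,0}|$ via \eqref{eq:form'}, then follow the Birman--Schwinger computation of Lemma~\ref{lem:ope-A-s} verbatim. So far the two arguments coincide.

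The divergence is in the fourth condition \eqref{eq:trace*op}. You correctly notice that the paper's own proof stops after establishing \eqref{eq:trace*op'} and never addresses \eqref{eq:trace*op} at all, so as written it only shows $A\in\tilde\gA$ (this looks like a misprint in the lemma's conclusion, parallel to Lemma~\ref{lem:ope-A-s}). Your attempt to close this gap, however, has two genuine problems. First, the bound
\[
  c^{-2}\tr(A(Z)^2\gamma_*)\le 2\epsilon_Z^2\,\tr[\bp^2\gamma_*]+2c^{-2}M_Z^2\,\tr\gamma_*
\]
hinges on $\tr[\bp^2\gamma_*]=O(Z^{7/3})$, which is not among the estimates of Theorem~\ref{th:DF-Z7/3} and in fact can fail: the eigenfunctions of $D_*=D_{c,Z}+\phi_*$ (with $\phi_*$ bounded) inherit the $|x|^{\sqrt{1-\kappa^2}-1}$ singularity of the Dirac--Coulomb problem, whose Fourier transform has tail $\sim|p|^{-\sqrt{1-\kappa^2}-2}$; then $\tr(\bp^2\gamma_*)$ diverges once $\sqrt{1-\kappa^2}\le 1/2$, i.e.\ for $\kappa\ge\sqrt3/2$, which is well within the admitted range $\kappa\in(0,1)$. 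The sketch ``the tail $\{|\bp|>c\}$ is controlled'' cannot be repaired because there is no control to give there. Second, the assertion that $2c^{-2}M_Z^2\tr\gamma_*=o(Z^{12/5})$ ``by the quantitative smallness hypothesis on $M_Z$'' has no basis: \eqref{eq:epsilon-M'} only yields $c^{-2}M_Z\lesssim1$, and with $\tr\gamma_*=Z$ this term is a priori only $O(c^2Z)=O(Z^3)$, which is not $o(Z^{12/5})$. If one does want a sufficient condition for \eqref{eq:trace*op} under the hypotheses of this lemma, the route is through Lemma~\ref{sufficient} together with genuinely $Z$-dependent smallness of $\epsilon_Z$ and $M_Z$, as is actually carried out in Lemma~\ref{lem:Hartree-Exchange}; the constant bounds of \eqref{eq:epsilon-M'} alone do not suffice.
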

\begin{proof}
By  Lemma \ref{lem:DF-ope} (with $\rho=0$) and \eqref{eq:form'}, we know that
\begin{equation}
  \begin{split}
    &|A(Z)|\leq \sqrt{2}c\epsilon_Z |\bp|+ \sqrt{2}M_Z
    \leq \sqrt{2}\epsilon_Z|D_{c,0}|+ \sqrt{2}M_Z\\
    \leq &\sqrt{2} C_\kappa^{-1}\epsilon_Z |D_{c,Z}|+\sqrt{2}M_Z\leq \mu' |D_{c,Z}|+\sqrt{2}M_Z.
  \end{split}
\end{equation}
Then the selfadjoint extension in the sense Nenciu \cite{Nenciu1976},
is just give as the operator sum. Analogously to the proof Lemma
\ref{lem:ope-A-s}, \eqref{eq:Birman} is also fulfilled.

Concerning the upper bound of \eqref{eq:ope-A}, we have that
\begin{equation}
    \|D_{c,Z}^{A(Z)} \psi\|\leq \|D_{c,Z}\psi\|+\|A\psi\|\leq \left(1+2\kappa+\epsilon_{Z}+c^{-2}M_{Z}\right)\|D_{c,0}\psi\|.
\end{equation}
Thus, the upper bound in \eqref{eq:ope-A} is true with
  $C=(1+2\kappa+\mu).$

Note that from Lemma \ref{lem:DF-ope}, \eqref{eq:form'} and $c^2\leq |D_{c,0}|$, we know 
\begin{equation}\label{eq:2.38'}
    |A(Z)| \leq (\epsilon_Z+c^{-2}M_Z)|D_{c,0}|\leq C_\kappa^{-1}(\epsilon_Z+c^{-2}M_Z)|D_{c,Z}|.
\end{equation}
Then, proceeding as for Lemma \ref{lem:ope-A-s}, we know that if 
\begin{equation}
    C_\kappa^{-1}(\epsilon_Z+c^{-2}M_Z)\leq \mu<1,
\end{equation}
we have
\begin{equation}
    \left\||D_{c,0}|^{1/2}|D_{c,Z}^{A(Z)}|^{-\frac12}\right\|\leq C_\kappa^{1/2}\sqrt{1+\mu}(1-\mu)^{-1}.
\end{equation}
This is the lower bound. Finally, \eqref{eq:trace*op'} follows
immediately from \eqref{eq:epsilon-M'} and \eqref{eq:2.38'}.
\end{proof}

Next we give a sufficient criterion for the last admissibility
condition, i.e., for \eqref{eq:trace*op}.
\begin{lemma}
  \label{sufficient}
  If $A\in\gA$ is a selfadjoint family of operators such that
  \begin{equation}
    \label{eq:sufficient}
    \tr\left(|A(Z)|^\frac12\gamma_*|A(Z)|^\frac12\right)\left\||A(Z)|^\frac12|D_{c,0}|^{-1}|A(Z)|^\frac12\right\|=O(Z^\frac{12}5),\ (Z\to\infty),
  \end{equation}
  then $A$ fulfills \eqref{eq:trace*op}
\end{lemma}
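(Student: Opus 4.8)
The plan is to recognize the quantity in \eqref{eq:trace*op} as a Hilbert--Schmidt norm squared and then peel off exactly the two factors appearing in \eqref{eq:sufficient}. Since $A(Z)$ is self-adjoint and $\gamma_*\geq 0$, the operator $A(Z)\gamma_*A(Z)=(\gamma_*^{1/2}A(Z))^*(\gamma_*^{1/2}A(Z))$ is non-negative, so the natural reading of the left-hand side of \eqref{eq:trace*op} is
\[
\tr\bigl(A(Z)\gamma_*A(Z)|D_{c,0}|^{-1}\bigr)=\tr\bigl(|D_{c,0}|^{-\frac12}A(Z)\gamma_*A(Z)|D_{c,0}|^{-\frac12}\bigr)=\bigl\||D_{c,0}|^{-\frac12}A(Z)\gamma_*^{\frac12}\bigr\|_{\gS^2}^2 .
\]
This is finite: by \eqref{eq:Birman} the operator $B_{A(Z)}=|D_{c,0}|^{-1/2}A(Z)|D_{c,0}|^{-1/2}$ is bounded, while $\gamma_*\in\gX$ says that $|D_{c,0}|^{1/2}\gamma_*^{1/2}=(c^4+c^2\bp^2)^{1/4}\gamma_*^{1/2}$ is Hilbert--Schmidt; hence the factorization $|D_{c,0}|^{-1/2}A(Z)\gamma_*^{1/2}=B_{A(Z)}\,|D_{c,0}|^{1/2}\gamma_*^{1/2}$ exhibits it as (bounded)$\times$(Hilbert--Schmidt), so all manipulations below are legitimate.

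Next I would insert the polar decomposition of the self-adjoint operator $A(Z)$: writing $A(Z)=|A(Z)|^{1/2}\,\sgn(A(Z))\,|A(Z)|^{1/2}$ (with $\sgn(A)$ and $|A|$ commuting since both are functions of $A$, and $\|\sgn(A(Z))\|\leq 1$), one gets
\[
|D_{c,0}|^{-\frac12}A(Z)\gamma_*^{\frac12}=\bigl(|D_{c,0}|^{-\frac12}|A(Z)|^{\frac12}\bigr)\,\sgn(A(Z))\,\bigl(|A(Z)|^{\frac12}\gamma_*^{\frac12}\bigr),
\]
and submultiplicativity of the Hilbert--Schmidt norm against bounded factors yields
\[
\tr\bigl(A(Z)\gamma_*A(Z)|D_{c,0}|^{-1}\bigr)\leq \bigl\||D_{c,0}|^{-\frac12}|A(Z)|^{\frac12}\bigr\|^2\;\bigl\||A(Z)|^{\frac12}\gamma_*^{\frac12}\bigr\|_{\gS^2}^2 .
\]
Here the second factor equals $\tr(\gamma_*^{1/2}|A(Z)|\gamma_*^{1/2})=\tr(|A(Z)|^{1/2}\gamma_*|A(Z)|^{1/2})$ by cyclicity, and the first factor, using $\|T\|^2=\|T^*T\|$ with $T=|D_{c,0}|^{-1/2}|A(Z)|^{1/2}$, equals $\||A(Z)|^{1/2}|D_{c,0}|^{-1}|A(Z)|^{1/2}\|$. (Note $T$ is bounded precisely because \eqref{eq:Birman} gives $|A(Z)|\lesssim|D_{c,0}|$.) Combining, $\tr(A(Z)\gamma_*A(Z)|D_{c,0}|^{-1})$ is bounded by the product on the left of \eqref{eq:sufficient}, hence is $o(Z^{12/5})$, which is exactly \eqref{eq:trace*op}.

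I do not expect a genuine obstacle: the statement is a short chain of operator-norm / Schatten-ideal estimates. The only point that truly needs care --- and which I would spell out --- is the well-definedness of $\tr(A\gamma_*A|D_{c,0}|^{-1})$, since $A$ is unbounded and $A\gamma_*A$ need not itself be trace class; this is handled by reading the trace through the non-negative operator $|D_{c,0}|^{-1/2}A\gamma_*A|D_{c,0}|^{-1/2}$ and invoking $\gamma_*\in\gX$ together with \eqref{eq:Birman} to see it is in fact trace class with the claimed value. Everything else is the Cauchy--Schwarz/submultiplicativity step and the identity $\|T\|^2=\|T^*T\|$.
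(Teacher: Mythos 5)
Your argument is the same as the paper's (whose proof is a one-line pointer): polar-decompose $A=|A|^{1/2}\sgn(A)|A|^{1/2}$ and peel off factors using a Schatten--H\"older estimate, which you realize as submultiplicativity of the Hilbert--Schmidt norm against bounded factors after writing the trace as $\||D_{c,0}|^{-1/2}A\gamma_*^{1/2}\|_{\gS^2}^2$. One small caveat, which you inherit verbatim from the paper: the chain of inequalities bounds $\tr(A\gamma_*A|D_{c,0}|^{-1})$ by the quantity in \eqref{eq:sufficient}, which is hypothesized to be $O(Z^{12/5})$, so what you actually obtain is $O(Z^{12/5})$ rather than the $o(Z^{12/5})$ of \eqref{eq:trace*op}; the $O$/$o$ mismatch is a slip in the paper's statements (and is harmless in all downstream uses, which have slack), but your closing sentence should not assert $o$.
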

\begin{proof}
  We write $A=|A|^\frac12 \sgn(A) |A|^\frac12$ with $\sgn(s)$ as in
  \eqref{signum} and use the H\"older inequality for trace ideals.
\end{proof}

  \subsection{Examples}
  \label{examples}
We will discuss some examples of admissible potentials.
\subsubsection{Fuzzy pictures including the Furry
  picture}\label{ex:Fuzzy}

Suppose $\gamma_Z$ is a family of signed measure $-1\leq \gamma_Z\leq1$  with $|\gamma_Z|\in\gX$ for all $Z$. Then we call 
\begin{equation}
  \label{hart}
  \phi(Z):= \rho_{\gamma_Z}*|\cdot|^{-1}
\end{equation}
the family of Hartree potential associated to $\gamma$ and the family
of operators $X$ with the integral kernel
  \begin{equation}
  \label{ex}
  [X(Z)](x,y)={\gamma_Z(x,y)\over|x-y|}
\end{equation}
the family exchange operators associated to $\gamma$. 

The following results shows the admissibility of these mean field
potentials under modest hypotheses.
\begin{lemma}
  \label{lem:Hartree-Exchange}
  Pick a family of signed density matrices $\gamma(Z)$, i.e.,
  $|\gamma(Z)|\in\gX$,
such that
  \begin{equation}
    \label{eq:inf}
    \tr [(|D_{c,0}|-c^2)|\gamma| ]= O(Z^\alpha)
  \end{equation}
  and 
  \begin{equation}
    \int_{\rz^3}\rho_{|\gamma|}= O (Z^\beta)
  \end{equation}
  with $\alpha+\beta<4$, $9\alpha+13\beta< 194/5$, $\alpha-\beta\leq2$,
  $\alpha,\beta\geq0$. 
  Then $\phi$, $X$, $\phi-X$ are admissible. 
\end{lemma}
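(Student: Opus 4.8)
The plan is to reduce all three operators to one scalar Coulomb potential and then to verify the four requirements of Definition \ref{def:A} for that potential, using the criteria of Lemmas \ref{lem:ope-A-s'} and \ref{lem:ope-A-s} for \eqref{eq:Birman}, \eqref{eq:ope-A}, \eqref{eq:trace*op'}, and Lemma \ref{sufficient} for \eqref{eq:trace*op}. The starting point is that the exchange operator is dominated, in the sense of quadratic forms, by the Hartree potential of the same density matrix: for $\sigma\geq0$ with density $\rho_\sigma$ the operator $X_\sigma$ with kernel $\sigma(x,y)/|x-y|$ obeys $0\leq X_\sigma\leq\rho_\sigma*|\cdot|^{-1}$. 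Writing $\sigma=\sum_n\mu_n|\psi_n\rangle\langle\psi_n|$ with $\mu_n\geq0$ one has $\langle\psi,X_\sigma\psi\rangle=\sum_n\mu_n\iint|x-y|^{-1}h_n(x)\overline{h_n(y)}\,\rd x\,\rd y$ with $h_n(x)=\langle\psi(x),\psi_n(x)\rangle_{\cz^4}$; this is $\geq0$ because $|\cdot|^{-1}$ is positive definite, and $\leq\langle\psi,(\rho_\sigma*|\cdot|^{-1})\psi\rangle$ after using $|h_n(x)|\,|h_n(y)|\leq|\psi(x)||\psi_n(x)||\psi(y)||\psi_n(y)|\leq\tfrac12(|\psi(x)|^2|\psi_n(y)|^2+|\psi_n(x)|^2|\psi(y)|^2)$ and the symmetry of the kernel. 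Applying this to the positive and negative parts of $\gamma(Z)$, and bounding $|\rho_{\gamma(Z)}*|\cdot|^{-1}|\leq\rho_{|\gamma(Z)|}*|\cdot|^{-1}$ for the Hartree piece, one finds that each $A\in\{\phi,X,\phi-X\}$ satisfies $|A(Z)|\leq2\Phi_Z$ as a form on $\gQ$, where $\Phi_Z:=\rho_{|\gamma(Z)|}*|\cdot|^{-1}$; one also records the matrix-kernel bound $\|\gamma(Z)(x,y)\|\leq\sqrt{\rho(x)\rho(y)}$, $\rho:=\rho_{|\gamma(Z)|}$, which gives $|(X(Z)\psi)(x)|\leq\sqrt{\rho(x)}\bigl((\sqrt\rho\,|\psi|)*|\cdot|^{-1}\bigr)(x)$ for the $L^2$--operator estimate of $X(Z)$. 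Thus everything reduces to $\Phi_Z$.

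Next I would turn the two hypotheses into $L^p$--bounds on $\rho$. Since $\tr[|D_{c,0}|\,|\gamma(Z)|]=\tr[(|D_{c,0}|-c^2)|\gamma(Z)|]+c^2\tr|\gamma(Z)|=O(Z^\alpha)+O(Z^{2+\beta})<\infty$, the massive relativistic Daubechies--Lieb--Thirring inequality for $|D_{c,0}|-c^2=c(\sqrt{\bp^2+c^2}-c)$ gives $\tr[(|D_{c,0}|-c^2)|\gamma(Z)|]\gtrsim\int_{\rho\leq c^3}\rho^{5/3}+c\int_{\rho>c^3}\rho^{4/3}$. Together with $\int\rho\lesssim Z^\beta$ and $c=\kappa Z$, and splitting $\rho=\rho_<+\rho_>$ at the crossover density $c^3$ (i.e.\ $\rho_<=\rho\1_{\{\rho\leq c^3\}}$), this yields $\|\rho_<\|_{5/3}\lesssim Z^{3\alpha/5}$, $\|\rho_>\|_{4/3}\lesssim Z^{3(\alpha-1)/4}$, $\|\rho\|_1\lesssim Z^\beta$, $|\{\rho>c^3\}|\lesssim Z^{\beta-3}$, hence $\|\rho_>\|_1\lesssim Z^{(3\alpha+\beta-6)/4}$; optimising the cut-off radius in $|\cdot|^{-1}$ gives $\Phi_<:=\rho_<*|\cdot|^{-1}\in L^\infty$ with $\|\Phi_<\|_\infty\lesssim\|\rho_<\|_{5/3}^{5/6}\|\rho_<\|_1^{1/6}\lesssim Z^{\alpha/2+\beta/6}$, and by Hardy--Littlewood--Sobolev $\Phi_>:=\rho_>*|\cdot|^{-1}\in L^{12}$ with $\|\Phi_>\|_{12}\lesssim Z^{3(\alpha-1)/4}$. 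The hypotheses $\alpha+\beta<4$ and $\alpha-\beta\leq2$ force $\alpha<3$, so that $c^{-2}\|\Phi_<\|_\infty$, $c^{-7/3}\|\Phi_>\|_{12}^{4/3}$ and the analogous quantities all tend to $0$; in particular $\Phi_Z$ is an infinitesimal form perturbation of $|D_{c,0}|$.

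With this, \eqref{eq:Birman}, \eqref{eq:ope-A}, \eqref{eq:trace*op'} follow from Lemmas \ref{lem:ope-A-s'}/\ref{lem:ope-A-s}: for the multiplication operators $\phi,\phi-X$ one has $|A(Z)|\leq\Phi_Z$ pointwise, so by Hölder, the Sobolev embedding $H^{1/4}(\rz^3)\hookrightarrow L^{12/5}(\rz^3)$, $\||\bp|^{1/4}\psi\|\leq\||\bp|\psi\|^{1/4}\|\psi\|^{3/4}$ and Young,
\begin{equation*}
  \|A(Z)\psi\|\leq\|\Phi_<\|_\infty\|\psi\|+\|\Phi_>\|_{12}\|\psi\|_{12/5}\leq\delta c\,\||\bp|\psi\|+\bigl(\|\Phi_<\|_\infty+C\delta^{-1/3}c^{-1/3}\|\Phi_>\|_{12}^{4/3}\bigr)\|\psi\|
\end{equation*}
for any small $\delta>0$, and the bracketed coefficient is $o(c^2)$ by the previous paragraph; for $X(Z)$ the same computation is run on $\sqrt{\rho(x)}\bigl((\sqrt\rho|\psi|)*|\cdot|^{-1}\bigr)(x)$, splitting $\rho=\rho_<+\rho_>$ and using $\sqrt{\rho_<}\leq c^{3/2}$. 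Hence $\phi,X,\phi-X\in\tilde\gA$. Finally \eqref{eq:trace*op} is obtained from Lemma \ref{sufficient}: $|A(Z)|\leq2\Phi_Z$ and $\gamma_*\geq0$ give $\tr(|A|^{1/2}\gamma_*|A|^{1/2})=\tr(\gamma_*|A|)\leq2\!\int\!\rho_*\Phi_Z$ and $\||A|^{1/2}|D_{c,0}|^{-1}|A|^{1/2}\|=\||D_{c,0}|^{-1/2}|A||D_{c,0}|^{-1/2}\|\leq2\||D_{c,0}|^{-1/2}\Phi_Z|D_{c,0}|^{-1/2}\|$. One bounds $\int\rho_*\Phi_<\leq\|\Phi_<\|_\infty\tr\gamma_*\lesssim Z^{1+\alpha/2+\beta/6}$ and $\int\rho_*\Phi_>$ both by $\|\phi_*\|_\infty\|\rho_>\|_1$ and by $2\cD[\rho_*]^{1/2}\cD[\rho_>]^{1/2}$ (Cauchy--Schwarz for the Coulomb form, with $\cD[\rho_>]\lesssim\|\rho_>\|_1^{2/3}\|\rho_>\|_{4/3}^{4/3}$), using $\cD[\rho_*]=O(Z^{7/3})$ and $\|\phi_*\|_\infty=O(Z^{4/3})$ from Theorem \ref{th:DF-Z7/3}; and one bounds $\||D_{c,0}|^{-1/2}\Phi_Z|D_{c,0}|^{-1/2}\|\lesssim\|\Phi_<\|_\infty c^{-2}+\|\Phi_>\|_{12}c^{-7/4}\lesssim Z^{\alpha/2+\beta/6-2}+Z^{(3\alpha-10)/4}$ from Sobolev and interpolation. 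Multiplying the two factors, each resulting power of $Z$ turns out to be strictly below $12/5$ precisely under $\alpha+\beta<4$, $\alpha-\beta\leq2$ and $9\alpha+13\beta<194/5$, with the last being the binding constraint; so $\phi,X,\phi-X\in\gA$.

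The hard part is this last step: one has to choose the interpolation exponents and the density/radius cut-offs so that the product $\tr(\gamma_*\Phi_Z)\cdot\||D_{c,0}|^{-1/2}\Phi_Z|D_{c,0}|^{-1/2}\|$, assembled from the $L^{5/3}/L^{4/3}$ density bounds on $\gamma(Z)$ and the $Z^{7/3}$, $Z^{4/3}$ bounds on $\gamma_*$, lands exactly at $o(Z^{12/5})$ — this is where $9\alpha+13\beta<194/5$ is consumed. A subsidiary difficulty is that $X(Z)$ is not a multiplication operator, so its $L^2$--operator bound in the third step must be extracted from the matrix-kernel estimate $\|\gamma(Z)(x,y)\|\leq\sqrt{\rho(x)\rho(y)}$ rather than from a pointwise majorant; and for a non-positive family $\gamma(Z)$ the verification of \eqref{eq:trace*op'} relies on the Birman--Schwinger argument inside Lemma \ref{lem:ope-A-s'} rather than on the elementary inclusion of spectral subspaces available when $\gamma(Z)\geq0$.
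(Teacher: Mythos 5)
Your proof is structurally sound but follows a genuinely different technical route from the paper's. The paper verifies \eqref{eq:form} by invoking a single uniform bound from Meng--Siedentop--Tiefenbeck, namely $\|\rho_{\gamma_Z}*|\cdot|^{-1}\|_\infty\lesssim\tr(|\bp|\gamma_Z)$ (valid for densities coming from density matrices), and then controls $\tr(|\bp|\gamma_Z)$ by the elementary operator inequality $\lambda c|\bp|\leq(|D_{c,0}|-c^2)+\lambda^2c^2$ optimized over $\lambda\in(0,1]$ (this is where $\alpha-\beta\leq2$ enters); the trace factor in Lemma \ref{sufficient} is then handled by $\tr(\phi\gamma_*)\lesssim\cD(\rho_*,\rho_{\gamma_Z})$ and Cauchy--Schwarz for the Coulomb form. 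No $L^p$ decomposition is needed, and the exponent $-5/6+3\alpha/4+13\beta/12$ drops out in two lines, making $9\alpha+13\beta<194/5$ transparently the binding condition. You instead derive $L^{5/3}$/$L^{4/3}$ bounds on $\rho$ from a Daubechies--Lieb--Thirring inequality, split at the crossover density $c^3$, and recombine via Hölder, Hardy--Littlewood--Sobolev and Sobolev interpolation. This is more self-contained (no external Hartree $L^\infty$ estimate) but much more bookkeeping; in fact your version produces a strictly sharper final exponent, so contrary to what you write $9\alpha+13\beta<194/5$ is \emph{not} the binding constraint for your estimate -- you could get away with slightly less, and the paper's constant is tied to its simpler $L^\infty$ route. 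Two smaller precision points: (i) the assertion $|A(Z)|\leq2\Phi_Z$ ``as a form'' is not literally correct for $A=\phi-X$ (the absolute value of a difference of positive operators is \emph{not} dominated by the sum); what you actually have and what suffices is the two-sided form bound $-2\Phi_Z\leq A(Z)\leq2\Phi_Z$, which yields $\tr(A\gamma_*A|D_{c,0}|^{-1})\leq\||D_{c,0}|^{-1/2}\Phi_Z|D_{c,0}|^{-1/2}\|\cdot\tr(\gamma_*\Phi_Z)$ up to constants -- alternatively argue as the paper does, treating $\phi_\pm$ and $X_{\gamma_\pm}$ separately and using that the condition \eqref{eq:trace*op} is stable under linear combinations via Cauchy--Schwarz in the trace. (ii) The operator-norm bound for $X(Z)$ is simpler than your kernel computation: from $\pm X_{\gamma_Z}\leq X_{|\gamma_Z|}\leq\rho_{|\gamma_Z|}*|\cdot|^{-1}$ one reads off $\|X_{\gamma_Z}\|\leq\|\Phi_Z\|_\infty$ directly, as the paper notes, so no separate treatment of the non-multiplication nature of $X$ is needed for \eqref{eq:form}. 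The closing numerology that you defer (``the hard part'') does check out under the stated hypotheses, but you should carry it out explicitly -- as written the proof stops one step short of the conclusion.
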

We note that a physical reasonable assumption like $\alpha=7/3$ and
$\beta=1$ is covered easily.

Before embarking on the proof we remind the reader of the Lane-Emden
inequality (see \cite[(44)]{Fournaisetal2020} and
\cite[(1)]{Carlenetal2024}: For any $\gamma\in\gX$, by the
Hardy-Littlewood-Sobolev inequality and the Lieb–Thirring inequality
\begin{equation}\label{eq:D[gamma]}
  \begin{split}
    &2\cD[\rho_\gamma]\leq C^{\rm HLS}\|\rho_{\gamma}\|_{L^{6/5}(\R^3)}^2\\
    \leq &C^{\rm HLS}\left(\tr \gamma\right)^{2/3}\int_{\R^3}\rho_{\gamma}^{4/3}\leq \frac{C^{\rm HLS}}{C^{\rm D}} (\tr\gamma)^\frac23\tr[|\bp|\gamma].
  \end{split}
\end{equation}

\begin{proof}[Proof of Lemma \ref{lem:Hartree-Exchange}:]
  We use Lemma \ref{lem:ope-A-s} to show that $\phi\in \gA$. We start
  with \eqref{eq:form}: Without loss of generality we can assume that
  $\phi\geq0$: if we can show \eqref{eq:epsilon-M} for $\phi_\pm$ with
  $\mu<1/2$, then this implies that $\phi_++\phi_-$ fulfills it with
  $\mu<1$. Moreover, if $A$ and $B$ fulfill \eqref{eq:trace*op} then
  also $a A + b B$, $a,b\in\rz$ fulfills it.

  Since \eqref{eq:form} is homogeneous in $\psi$ it is sufficient to
  verify it for normalized states $\psi$. Using
  \cite[(2.13)]{Mengetal2024} we get
  \begin{equation}
    \label{65}
    \begin{split}
      &\|\rho_{\gamma_Z}*|\cdot|^{-1}\psi\|
        \leq C\tr(|\bp|\gamma_Z)\\
      \leq&
            {\tr\left(\left((|D_{c,0}|-c^2)+\lambda^2c^2\right)\gamma_Z\right)\over\lambda Z} 
            \leq C {Z^\alpha +\lambda^2Z^{2+\beta}\over \lambda Z}
            = CZ^{\frac12(\alpha+\beta)}
    \end{split}
  \end{equation}
  with $\lambda= Z^{\frac12(\alpha-\beta)-1}\leq1$ which yields
  \eqref{eq:form} with $\epsilon_{Z}=CZ^{\frac12(\alpha+\beta)-2}$
  which tends to zero as $Z\to\infty$. Moreover, it implies
  $M(Z)=o(Z^2)$. Thus, for $Z$ large enough \eqref{eq:epsilon-M} holds
  for any positive $\mu$, in particular also, if $\mu$ is smaller than
  $1/2$.

  To verify the remaining condition \eqref{eq:trace*op} we use Lemma
  \ref{sufficient}: We first focus on the left side of
  \eqref{eq:sufficient}. By the Schwarz inequality,
  \eqref{eq:D[gamma]}, and similarly to \eqref{65}
  \begin{equation}
    \begin{split}
    \label{66}
    &0\leq \tr(|\phi|^\frac12\gamma_*|\phi|^\frac12) \leq \cD(\rho_*,\rho_{\gamma_Z})
    \leq \cD[\rho_*]^\frac12 \cD[\rho_{\gamma_Z}]^\frac12 \leq
    CZ^{\frac76+\frac\beta3} \sqrt{\tr (|\bp||\gamma_Z|)}\\
    \leq& C Z^{\frac76+\frac\beta3+\frac14(\alpha+\beta)}= CZ^{\frac76+\frac14\alpha+\frac7{12}\beta}
  \end{split}
\end{equation}
  The second factor is estimated as follows
  \begin{equation}
    \label{67}
    \|\phi(Z)^\frac12|D_{c,0}|^{-1}\phi(Z)^\frac12\|\leq C Z^{\frac12(\alpha+\beta)-2}  \end{equation}
  Thus,
  \begin{equation}
    \label{68}
    \tr(|\phi(Z)|^\frac12\gamma_*|\phi(Z)|^\frac12)\||\phi(Z)^\frac12|D_{c,0}|^{-1}|\phi(Z)^\frac12\|\leq CZ^{-\frac56+\frac34\alpha+\frac{13}{12}\beta}=o(Z^\frac{12}5)
  \end{equation}
using the hypothesis $9\alpha+13\beta<194/5$ in th last step.

The admissibility $X$ is a consequence of the above proof for $\phi$,
since $X_{|\gamma_Z|}\leq \rho_{|\gamma_Z|}*|\cdot|^{-1}$ in the
operator sense, and since
$\|X_\gamma\| \leq \|\rho_{|\gamma_Z|}*|\cdot|^{-1}\|$.

The difference of $\phi-X$ follows by same reason that allowed us to
restrict the proof to non-negative $\phi$.
\end{proof}

We remark that the choice $\gamma_Z=\gamma_*$ fulfills the assumptions
of Lemma \ref{lem:Hartree-Exchange}. Thus the Furry picture is also
covered by our assumptions.

\subsubsection{Weird pictures including the free picture} \phantom{x}
\label{ex:exterior}

\textbf{Coulomb potentials}: Note that
$D_{c,Z}^{(Z-Z')/|\cdot|}= D_{c,Z'}$. The operator can defined in the
sense Nenciu for $-1<\kappa':=Z'/c<1$.
\begin{itemize}
\item If $Z'=0$, then the projection becomes
  $\Lambda^+=\1_{(0,\infty)}(D_{c,0})$ yielding the free picture,
  i.e., the Brown--Ravenhall operator.
\item If $Z'=Z$, then the projection is $P_Z$ it yields the Furry
  picture.
\item In all other cases the projections are given by  hydrogenic Dirac
  operators with nuclear charge $Z'$.
\end{itemize}
Thus, all the above coupling parameters $Z'$ except for $Z'=Z$ are
``weird'', i.e., we cannot expect the correct physical ground state
energy in leading and subleading order but they are nevertheless
covered by \eqref{eq:Mittleman} under certain conditions on $Z'$:

\begin{lemma}
  \label{noname}
  Assume $\kappa'\in(-1,\kappa)$ and
  $A(Z):= (1-\kappa'/\kappa) Z |\cdot|^{-1}$. Then $A$ is admissible.
\end{lemma}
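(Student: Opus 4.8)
The plan is to exploit the identity $D_{c,Z}+A(Z)=D_{c,Z'}$. Put $a:=1-\kappa'/\kappa$, a fixed number in $(0,1+1/\kappa)$ since $\kappa'\in(-1,\kappa)$; then $A(Z)=aZ|\cdot|^{-1}\geq0$ and
\[ D_{c,Z}+A(Z)=c\bp\cdot\boldsymbol\alpha+c^2\beta-\tfrac{\kappa'}{\kappa}\,\tfrac{Z}{|x|}=D_{c,Z'},\qquad Z':=\kappa'c,\ \ |\kappa'|<1. \]
Because $|\kappa'|<1$, $D_{c,Z'}$ is the Coulomb--Dirac operator possessing the distinguished self-adjoint extension in the sense of Nenciu \cite{Nenciu1976}, and $\gQ(A(Z))\supset\gQ$ by Kato's inequality $|x|^{-1}\leq\tfrac{\pi}{2}|\bp|$; this disposes of the structural requirements of Definition \ref{def:A}. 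I would then check \eqref{eq:Birman} and \eqref{eq:trace*op'} by hand: the lower half $-1\leq B_{A(Z)}$ is trivial since $A(Z)\geq0$, while Kato's inequality together with $c|\bp|\leq|D_{c,0}|$ and $Z=\kappa c$ give $A(Z)\leq\tfrac{a\pi\kappa}{2}|D_{c,0}|$, a bound uniform in $Z$; and \eqref{eq:trace*op'} is immediate because $P_{Z,A(Z)}^\perp=\1_{(-\infty,0]}(D_{c,Z'})$ and $D_{c,Z}=D_{c,Z'}-A(Z)\leq D_{c,Z'}$, so $P_{Z,A(Z)}^\perp D_{c,Z}P_{Z,A(Z)}^\perp\leq P_{Z,A(Z)}^\perp D_{c,Z'}P_{Z,A(Z)}^\perp\leq0\leq c^2$.

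For \eqref{eq:ope-A} I would use $|D_{c,Z}^{A(Z)}|=|D_{c,Z'}|$ and compare it with $|D_{c,0}|$ uniformly in $Z$ --- equivalently, for the fixed coupling $\kappa'$, after the rescaling \eqref{eq:scale}. If $\kappa'\in[0,\kappa)$ this is exactly \eqref{eq:D-kappa-form} with parameter $\kappa'$. If $\kappa'\in(-1,0)$ the potential $-Z'|\cdot|^{-1}$ is repulsive; the upper bound is elementary (e.g.\ $D_{c,Z'}^2\leq 2D_{c,0}^2+2Z'^2|x|^{-2}\leq(2+8\kappa'^2)|D_{c,0}|^2$ via Hardy's inequality and $c|\bp|\leq|D_{c,0}|$), but the matching lower bound for $|\kappa'|$ near $1$ cannot be obtained this way --- Hardy's constant is too large --- and must be taken from the sharp Coulomb--Dirac estimates valid on the full interval $(-1,1)$ of coupling constants, as in Morozov--M\"uller \cite{MorozovMuller2017} and Frank et al.\ \cite{Franketal2021}. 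This is the only genuinely nontrivial step; the rest is bookkeeping.

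Finally I would verify \eqref{eq:trace*op} via Lemma \ref{sufficient}, bounding the two factors in \eqref{eq:sufficient} separately for $|A(Z)|=aZ|\cdot|^{-1}$. Since $\gamma_*\geq0$ is trace class, $\tr(|A(Z)|^{\frac12}\gamma_*|A(Z)|^{\frac12})=a\,(Z\,\tr[|\cdot|^{-1}\gamma_*])=O(Z^{\frac73})$ by \eqref{eq:Z7/3-1}. For the second factor, $|D_{c,0}|\geq c|\bp|$ and Kato's inequality give $\||\cdot|^{-\frac12}|D_{c,0}|^{-1}|\cdot|^{-\frac12}\|\leq\tfrac1c\,\||\cdot|^{-\frac12}|\bp|^{-1}|\cdot|^{-\frac12}\|\leq\tfrac{\pi}{2c}$, hence $\||A(Z)|^{\frac12}|D_{c,0}|^{-1}|A(Z)|^{\frac12}\|=aZ\,\||\cdot|^{-\frac12}|D_{c,0}|^{-1}|\cdot|^{-\frac12}\|\leq\tfrac{a\pi\kappa}{2}=O(1)$. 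The product is therefore $O(Z^{\frac73})=o(Z^{\frac{12}{5}})$, so \eqref{eq:sufficient} holds, Lemma \ref{sufficient} delivers \eqref{eq:trace*op}, and altogether $A\in\gA$.
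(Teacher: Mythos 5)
Your proof is correct and uses the same core identification the paper does --- namely $D_{c,Z}+A(Z)=D_{c,Z'}$ with $Z'=\kappa'c$, followed by Kato's inequality and the Coulomb--Dirac modulus bounds. The main difference is that you are more systematic: you explicitly verify all four conditions of Definition~\ref{def:A}, whereas the paper's proof sketch only addresses \eqref{eq:ope-A} and \eqref{eq:trace*op'} (for the latter via a longer argument that is designed to also cover $\kappa'$ somewhat larger than $\kappa$, as in the figure, and so is more cumbersome than needed on the hypothesized range). Your one-line chain
$P_{Z,A(Z)}^\perp D_{c,Z}P_{Z,A(Z)}^\perp\leq P_{Z,A(Z)}^\perp D_{c,Z'}P_{Z,A(Z)}^\perp\leq 0\leq c^2$,
available because $A(Z)\geq0$, is cleaner than the paper's. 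And your verification of \eqref{eq:trace*op} via Lemma~\ref{sufficient} --- using $Z\tr[|\cdot|^{-1}\gamma_*]=O(Z^{7/3})$ from \eqref{eq:Z7/3-1} and the $Z$-uniform bound $\bigl\||\cdot|^{-1/2}|D_{c,0}|^{-1}|\cdot|^{-1/2}\bigr\|\leq\tfrac{\pi}{2c}$ to obtain a product $O(Z^{7/3})=o(Z^{12/5})$ --- actually supplies a step that the paper's proof of Lemma~\ref{noname} does not address at all; the paper's conclusion ``$A\in\gA$'' is not justified there without this check.

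One point to tighten: \eqref{eq:D-kappa-form} is only stated in the paper for $\kappa\in[0,1)$, and you invoke it for the repulsive case $\kappa'\in(-1,0)$. You correctly observe that Hardy's inequality alone cannot supply the lower bound in \eqref{eq:ope-A} when $|\kappa'|$ is close to $1$, but rather than simply asserting that the sharp estimates extend to $(-1,1)$, you should say why: charge conjugation furnishes an anti-unitary $C$ with $C D_{c,\zeta} C^{-1} = -D_{c,-\zeta}$, hence $|D_{c,\zeta}|$ and $|D_{c,-\zeta}|$ are conjugate and the constants $C_{|\kappa'|}$ and $1+2|\kappa'|$ serve for negative coupling as well. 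With that sentence inserted, the proof is complete and, in fact, repairs the omission of \eqref{eq:Birman} and \eqref{eq:trace*op} in the paper's own argument.
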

\begin{proof}
  From \eqref{eq:D-kappa-form} we know that \eqref{eq:ope-A} is
  satisfied, if $\kappa'\in (-1,1)$.

  Next we address \eqref{eq:trace*op'}: By \eqref{eq:D-kappa-form} and
  Kato's inequality, for any $\kappa'\leq\kappa$ there exists a
  constant $C_{\kappa'}$ such that
\begin{equation}
    \frac{1}{|x|}\leq \frac{\pi}{2c}|D_{c,0}|\leq {\pi\over2c C_{\kappa'}}|D_{c,(\kappa'/\kappa) Z}|.
\end{equation}
This yields
\begin{equation}
  -A(Z)\leq (\kappa'/\kappa-1)_+Z|\cdot|^{-1}
  \leq {\pi\over2 C_{\kappa'}}(\kappa'-\kappa)_+|D_{c,(\kappa'/\kappa)Z}|.
\end{equation}
Thus \eqref{eq:trace*op'} is true, if
$ 2\pi^{-1} C_{\kappa'}>(\kappa'-\kappa)_+$ or
\begin{equation}
  \label{79} \kappa'<\kappa+ 2C_{\kappa'}/\pi.
\end{equation}

Combining the conditions yields that $ A \in\gA$, if $-1<\kappa'\leq\kappa$.
\end{proof}

We remark that the last estimate of the proof can be improved, since
$C_\kappa$ is monotone decreasing in $\kappa$. Together with the
explicit formula \eqref{5} this allows for a simple numerical
improvement as indicated in Figure \ref{Bild}:
\begin{figure}[h]
    \includegraphics[width=7cm]{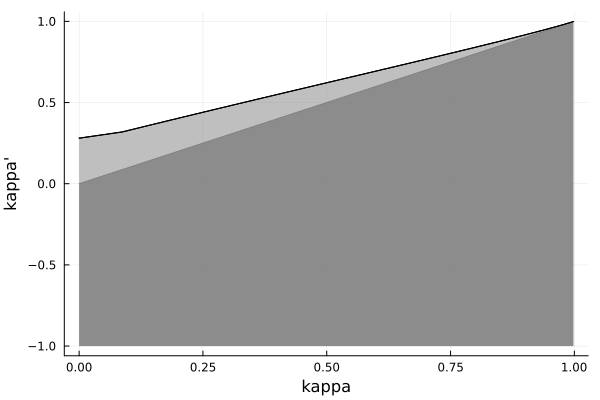}
    \caption{{\small Numerically admissible $(\kappa,\kappa')$: The dark
      shaded part is the set of the hypothesis of Lemma
      \ref{noname}. The light shaded region indicates the additional
      points from numerical evaluation of \eqref{79} using
      \ref{5} in the  Definition of $C_{\kappa'}$.}}\label{Bild}
\end{figure}

\section{Asymptotic behavior of heavy atoms}\label{sec:asymptotic}
In this section, we are going to prove Theorem \ref{th:mittleman}. The
proof is based on Theorem \ref{th:DF-Z7/3}. We start by recalling some
well know identities:

\subsection{Some identities}

For any symmetric operator $B$ such that $D_{c,Z}+B$ is selfadjoint
and $0\not\in \sigma(D_{c,Z}+B)$ we have
\begin{equation}\label{eq:projector}
  \1_{(0,\infty)}(D_{c,Z}+B)=\frac{1}{2}+\frac{1}{2\pi}\int_{-\infty}^{\infty} \frac{1}{D_{c,Z}+B+\ri z}\rd z.
\end{equation}
Thus, since $0\not\in \sigma(D_{c,Z}+{A(Z)})$ for any $A\in \gA$ by
\eqref{eq:ope-A}, we have
\begin{equation}
P_{Z,A(Z)}= \frac{1}{2}+\frac{1}{2\pi}\int_{-\infty}^{\infty} \frac{1}{D_{c,Z}+A(Z)+\ri z}\rd z.
\end{equation}
From now on -- in abuse of notation -- we will also write $A$ for
$A(Z)$.  In particular, we have
\begin{equation}
    P_*= \frac{1}{2}+\frac{1}{2\pi}\int_{-\infty}^{\infty} \frac{1}{D_*+\ri z}\rd z.
\end{equation}
Then
\begin{equation}\label{eq:PA-P1A} P_{Z,A}=P_*+ Q_{Z,A} \end{equation}
with
\begin{equation}
  \begin{split}
    &Q_{Z,B}=-\frac{1}{2\pi} \int_{-\infty}^{\infty} \frac{1}{D_*+\ri z}\underbrace{(A- \phi_*)}_{=:A_*} \frac{1}{D_{c,Z}+B+\ri z}\rd z\\
    &=-\frac{1}{2\pi} \int_{-\infty}^{\infty} \frac{1}{D_{c,Z}+B+\ri z} A_*\frac{1}{D_*+\ri z}\rd z. 
  \end{split}
\end{equation}
Moreover
\begin{equation}\label{eq:P1A-P2A}
    Q_{Z,A}= Q_{Z,\phi_*}+ R_{Z,A}
\end{equation}
where
\begin{equation}
    R_{Z,B}=\frac{1}{2\pi}\int_{-\infty}^{\infty}\frac{1}{D_*+\ri z}A_*\frac{1}{D_{c,Z}+B+\ri z}A_*\frac{1}{D_*+\ri z} \rd z.
\end{equation}

Thus, for any eigenspinor $\psi_j$ of $D_*$
\begin{equation}
  \begin{split}
     &Q_{Z,B}\psi_j=-\frac{1}{2\pi}\int_{-\infty}^{\infty}\frac{1}{D_{c,Z}+B+\ri z}\frac{1}{\lambda_j+\ri z}\rd z A_*\psi_j\\
    = &\frac{1}{2\pi} \frac{1}{D_{c,Z}+B -\lambda_j}\mathcal{P}_{Z,B,j}^\perp \int_{-\infty}^{\infty}\left( \frac{1}{D_{c,Z}+B+\ri z} -\frac{1}{\lambda_j+\ri z}\right) \rd z A_*\psi_j\\
    = &\frac{1}{2} \frac{1}{D_{c,Z}+B-\lambda_j}  \left(\sgn(D_{c,Z}+B)-1\right)A_*\psi_j\\
  = &- \frac{1}{D_{c,Z}+B  -\lambda_j} P^\perp_{Z,B}
     A_*\psi_j=  - \frac{1}{D_{c,Z}+B  -\lambda_j} P^\perp_{Z,B} A_*\psi_j.
  \end{split}
\end{equation}
where $\mathcal{P}_{Z,B,j}^\perp= \mathbbm{1}_{\R\setminus\{\lambda_j\}}(D_{c,Z}+B)$.

So
\begin{align}\label{eq:P1-gamma*}
    Q_{Z,B}\gamma_*= -\sum_{j}\mu_j \frac{1}{D_{c,Z}+B -\lambda_j} P^\perp_{Z,B} A_*\left|\psi_j\right>\left<\psi_j\right|
\end{align}
where $\mu_1,\mu_2,...$ are the eigenvalues of $\gamma_*$ associated to the corresponding eigenfunctions $\psi_1,\psi_2,...$.
and
\begin{equation}\label{eq:P1-gamma*-P1}
  Q_{Z,B}\gamma_*Q_{Z,B}
     = \sum_{j} \frac{\mu_j}{D_{c,Z}+B -\lambda_j} P^\perp_{Z,B} A_*\left|\psi_j\right>\left<\psi_j\right|A_* P_{Z,B}^{\perp}\frac{1}{D_{c,Z}+B -\lambda_j}.
 \end{equation}

\subsection{Proof of Theorem \ref{th:mittleman}}\label{sec:mittleman}

That the Mittleman energy is at least the Furry energy follows from
\cite{HandrekSiedentop2015}. This is the reason, why the theorem is
stated as an upper bound to which we turn now:

\begin{proof}
We can estimate the energy from above by restricting to Slater
determinants. Then proceeding like Lieb \cite{Lieb1981V} (see also
Bach \cite{Bach1992} for an alternative proof) and dropping the exchange
term we get for any $\psi\in \gQ^N$ 
\begin{equation}
  \label{90}
    \cE_{Z}[\psi] \leq \cE_{Z}^\mathrm{H}(\gamma_\psi)
  \end{equation}
  where $\gamma_\psi$ is the one-particle density matrix of the
  fermionic state $\psi$.  Thus, to prove our claim we merely need to
  pick an appropriate trial density matrix $\gamma$ and evaluate the
  right-hand side of \eqref{90}. To this end we set
  \begin{align}\label{eq:Gamma-A}
     \Gamma_{Z}^A:=\{\gamma\in \Gamma_{Z}\;;\; P_{Z,A} \gamma P_{Z,A}=\gamma\}.
  \end{align}

We take $\gamma_A=P_{Z,A}\gamma_* P_{Z,A}\in \Gamma_{Z,N}^A$. Then we know that
\begin{equation}\label{eq:EMN=>EMrhf}
     E_{Z,A}=\inf\{\cE[\psi]|\psi\in\gQ^N_A,\ \|\psi\|\leq1\}\leq \cE_{Z}^\mathrm{H}(\gamma_A).
\end{equation}
By \eqref{eq:PA-P1A} and \eqref{eq:P1A-P2A},
\begin{equation}
  \begin{split}
    \label{eq:93}
    &\gamma_A=(P_*+Q_{Z,A})\gamma_*(P_*+Q_{Z,A})=\gamma_*+2\Re(Q_{Z,A}\gamma_*)
      +Q_{Z,A}\gamma_* Q_{Z,A}\\
    =&\gamma_*+ 2\Re(Q_{Z,\phi_*}\gamma_*)+2\Re(R_{Z,A}\gamma_*)
       + Q_{Z,A}\gamma_* Q_{Z,A}.
  \end{split}
\end{equation}
Thus
\begin{equation}\label{eq:E-DF-decomp}
  \begin{split}
    &\cE^\mathrm{H}_{Z}(\gamma_A)=\cE_{Z}^\mathrm{H}(\gamma_*)+\tr[(D_*-c^2) (\gamma_A-\gamma_*)]+\cD[\rho_{\gamma_A}-\rho_*]\\
    =&\underbrace{\cE_{Z}^\mathrm{H}(\gamma_*)+\tr[(D_*-c^2) (Q_{Z,\phi_*}\gamma_*+\gamma_* Q_{Z,\phi_*})]}_{=:I}\\
    &+\underbrace{\tr[(D_*-c^2) (2\Re(R_{Z,A}\gamma_*)+ Q_{Z,A}\gamma_* Q_{Z,A})]}_{=:II} \\
    &+\underbrace{\cD[\rho_{2\Re(Q_{Z,A}\gamma_*)} +\rho_{Q_{Z,A}\gamma_*Q_{Z,A}}]}_{=:III}.
  \end{split}
\end{equation}
To continue the proof, we are going to bound the right-hand side of \eqref{eq:E-DF-decomp} term by term.

\medskip

{\bf Estimate on $I$.} By \eqref{eq:P1-gamma*} the operator $Q_{Z,\phi_*}\gamma_*$ is off-diagonal, i.e.,
\begin{equation}
  \label{offdiagonal}
  P_*^\perp  Q_{Z,\phi_*}\gamma_* P_*=  Q_{Z,\phi_*}\gamma_*.
\end{equation}
By \eqref{offdiagonal}, its adjoint, and $Q_{Z,\phi_*}^*=Q_{Z,\phi_*}$ we have
\begin{equation}
  \label{eq:96}
  \begin{split}
  &\tr[(D_*-c^2) (Q_{Z,\phi_*}\gamma_*+\gamma_* Q_{Z,\phi_*})]\\
    = &\tr[(D_*-c^2) ( P_*^\perp  Q_{Z,\phi_*}\gamma_* P_* + P_*  \gamma_* Q_{Z,\phi_*} P_*^\perp)] =0.
  \end{split}
\end{equation}
Thus, by Theorem \ref{th:DF-asymptotic} and \eqref{eq:Z7/3-1} 
\begin{equation}\label{eq:I}
    I=\cE_{Z}^{\rm H}(\gamma_*)=E_Z^S+\tfrac1ND[\rho_*] =E_Z^F+o(Z^2).
\end{equation}

\medskip

{\bf Estimate on $II$.} By \eqref{eq:P1-gamma*-P1} we can estimate the
last term in $II$ as follows:
\begin{equation}\label{eq:P-gamma-P-A}
  \begin{split}
    &\tr[(D_*-c^2)Q_{Z,A}\gamma_* Q_{Z,A}]\\
    = &\sum_{j}\mu_j\left<\psi_j,A_*P_{Z,A}^{\perp}\frac{1}{\lambda_j-D_{c,Z}-A} (D_*-c^2)\frac{1}{\lambda_j-D_{c,Z}-A} P_{Z,A}^{\perp}  A_*\psi_j\right>\\
    \leq & \sum_{j}\mu_j\left<\psi_j,A_*P_{Z,A}^{\perp}\frac{1}{\lambda_j-D_{c,Z}-A} \phi_*\frac{1}{\lambda_j-D_{c,Z}-A} P_{Z,A}^{\perp}  A_*\psi_j\right>\\
     \leq & \sum_{j}\mu_j\left<\psi_j,A_*P_{Z,A}^{\perp}\frac{1}{\lambda_j+|D_{c,Z}+A|} \phi_*\frac{1}{\lambda_j+|D_{c,Z}+A|} P_{Z,A}^{\perp}  A_*\psi_j\right>\\
    \lesssim &Z^{\frac43-2} \sum_{j}\mu_j\left<\psi_j,A_*\frac{1}{\lambda_j+|D_{c,Z}+A|}   A_*\psi_j\right>\\
    \lesssim &Z^{-\frac23} \tr(A_*\frac{1}{|D_{c,0}|} A_*\gamma_*)
    = Z^{-\frac23} \tr((A-\phi_*)\frac{1}{|D_{c,0}|} (A-\phi_*)\gamma_* =O(Z^\frac{26}{15})
  \end{split}
\end{equation}
where we used \eqref{eq:trace*op'} for the first inequality,
\eqref{eq:ope-A} and \eqref{eq:Z7/3-1} for the second inequality, and
the Schwarz inequality in the last step to estimate the mixed terms,
and the last admissibility condition \eqref{eq:trace*op} which we can
apply to $A$ but also to $\phi_*$, since it is admissible as well (see
Lemma \ref{lem:Hartree-Exchange} with $\alpha=7/3$ and $\beta=1$ and
Theorem \ref{th:DF-Z7/3}). (We remind the reader that we suppressed
the $Z$ dependence of $A$ in this section.)

Next we estimate the term $\tr[(D_*-c^2) R_{Z,A}\gamma_*]$ by
\eqref{eq:gamma}:
\begin{equation}
  \label{eq:99}
  \begin{split}
    &\Re(\tr[(D_*-c^2) R_{Z,A}\gamma_*])\\
    =&\sum_j{\mu_j\over2\pi}\int_{-\infty}^{\infty}\Re\left<(D_*-c^2)\psi_j, \frac{1}{D_*+\ri z}A_* \frac{1}{D_{c,Z}+A+\ri z}A_*\frac{1}{D_*+\ri z}\psi_j\right>\rd z \\
    =& -\frac{1}{2\pi} \Re\sum_{j}\mu_j \int_{-\infty}^{\infty} \frac{\lambda_j-c^2}{(\lambda_j+\ri z)^2} \left<\psi_j, A_* \frac{1}{D_{c,Z}+A+\ri z} A_* \psi_j \right> \rd z\\
    =& \sum_{j}\mu_j  \left<\psi_j, A_*  P_{Z,A}^{\perp}\frac{\lambda_j-c^2}{(\lambda_j-D_{c,Z}-A)^2} P_{Z,A}^{\perp} A_* \psi_j \right> <0.
  \end{split}
\end{equation}
Thus for any $A\in\cA$
\begin{align}\label{eq:II}
  II \leq 0.
\end{align}

\medskip

{\bf Estimate on $III$.} It remains to study $III$. The Schwarz
inequality for the summation implies
\begin{equation}
    |\rho_{\Re(Q_{Z,A}\gamma_*)}(x)|\leq [\rho_{Q_{Z,A}\gamma_* Q_{Z,A}}(x)]^{1/2}[\rho_{\gamma_*}(x)]^{1/2}.
  \end{equation}
  By the triangular inequality for the Coulomb norm
  $\|\cdot\|_C:=\sqrt{\cD[\cdot]}$ we have
\begin{equation}
  \sqrt{III} \leq 2\|\rho_{\Re(Q_{Z,A}\gamma_*)}\|_C
  +\|\rho_{Q_{Z,A}\gamma_*Q_{Z,A}}\|_C
\end{equation}
and again by the Schwarz inequality -- now in the Coulomb scalar product --
\begin{equation}
 \left|\cD[\rho_{Q_{Z,A}\gamma_*} ]\right|\leq   \cD[\rho_{Q_{Z,A}\gamma_* Q_{Z,A}}]^{1/2}\cD[\rho_*]^{1/2}.
\end{equation}
Thus
\begin{equation}\label{eq:III-decomp}
  III\lesssim \cD[\rho_{Q_{Z,A}\gamma_* Q_{Z,A}}]^\frac12  \cD[\rho_*]^\frac12
  + \cD[\rho_{Q_{Z,A}\gamma_* Q_{Z,A}}]
\end{equation}

We will use \eqref{eq:D[gamma]} to estimate
$\cD[\rho_{Q_{Z,A}\gamma_* Q_{Z,A}}]$. Since its proof uses the
Lieb-Thirring inequality we have to make sure that
$0\leq Q_{Z,A}\gamma_* Q_{Z,A}\leq1$ and if not, we would have to
divide by its norm and treat that operator by \eqref{eq:D[gamma]}. To
this end pick $u,v\in \gH$, set
$B_A:=|D_{c,0}|^{-\frac12}A|D_{c,0}|^{-\frac12}$, and compute
\begin{multline}
  \left|\left<v,Q_{Z,A}u\right>\right|
  \leq\frac{1}{2\pi}\int_{\R}\left|\left<v, \frac{1}{D_{c,Z}+A+\ri \eta} A_*  \frac{1}{D_*+\ri \eta} u\right>\right|\rd \eta\\
  \leq \frac{1 }{2\pi}\|B_{A_*}\|
  \sqrt{\int_{\R}\left\||D_{c,0}|^\frac12 {1\over D_{c,Z}+A+\ri \eta}v\right\|^2\rd \eta\int_{\rz}\left\||D_{c,0}|^\frac12  \frac{1}{D_*+\ri \eta} u\right\|^2\rd \eta}\\
  \lesssim \frac{1 }{2\pi}\|B_{A_*}\|
  \sqrt{
    \int_{\R}\left\| {|D_{c,Z}+A|^\frac12\over D_{c,Z}+A+\ri \eta} v\right\|^2\rd \eta
    \int_{\rz}\left\|{|D_*|^\frac12  \over D_*+\ri \eta} u\right\|^2\rd \eta}
  =\frac12 \|B_{A_*}\| =O(1)
\end{multline}
where we used \eqref{eq:Birman} in the last step. Thus
\begin{equation}
    \|Q_{Z,A}\gamma_*Q_{Z,A}\|\leq\|Q_{Z,A}\|^2\|\gamma_*\|=O(1),
  \end{equation}
  i.e., it is a density up to a constant uniform in $Z$ which is
  irrelevant for our purposes and allows us to apply
  \eqref{eq:D[gamma]} directly up to a multiplicative constant uniform
  in $Z$. We get
  \begin{equation}
    \begin{split}
  &2\cD[\rho_{Q_{Z,A}\gamma_*Q_{Z,A}}]\lesssim (\tr (Q_{Z,A}\gamma_*Q_{Z,A}))^{2/3}\int \rho^{4/3}_{ Q_{Z,A}\gamma_* Q_{Z,A}}\\
  \lesssim& (\tr (Q_{Z,A}\gamma_*Q_{Z,A}))^{2/3} \tr (|\bp|Q_{Z,A}\gamma_* Q_{Z,A})\\
  \lesssim& (\tr (Q_{Z,A}\gamma_*Q_{Z,A}))^\frac23  \tr (|\bp|Q_{Z,A}\gamma_* Q_{Z,A}).
    \end{split}
  \end{equation}

Taking the trace in \eqref{eq:P1-gamma*-P1} yields
\begin{equation}\label{eq:3.15}
  \begin{split}
  &\tr[ Q_{Z,A}\gamma_* Q_{Z,A}]=\sum_{j} \mu_j \left<\psi_j, A_* P_{Z,A}^{\perp}\frac{1}{(D_{c,Z}+A-\lambda_j)^2} P_{Z,A}^{\perp} A_* \psi_j\right>\\
  \leq &\tr\left(A_*
    \frac{1}{(|D_{c,Z}+A|+\lambda_1)^2}A_*\gamma_*\right) \lesssim
  c^{-2}\tr\left(A_*{1\over |D_{c,0}|} A_*\gamma_*\right)\\
   \leq &2c^{-2}\tr\left(A{1\over |D_{c,0}|} A\gamma_*\right)+2 c^{-2}\tr\left(\phi_*{1\over |D_{c,0}|} \phi_*\gamma_*\right)=o(Z^\frac25)
  \end{split}
\end{equation}
where we were estimating as in \eqref{eq:P-gamma-P-A}.

Analogously using $|\bp|\leq c^{-1}|D_{c,Z}+A|$ by
\eqref{eq:ope-A}, we have
\begin{equation}
  \begin{split}
  &\tr(|\bp| Q_{Z,A}\gamma_* Q_{Z,A})\\
    \leq &\sum_{j} \mu_j \left\langle\psi,A_*\frac{1}{|D_{c,Z}+A|+\lambda_j}
      |\bp|\frac{1}{|D_{c,Z}+A|+\lambda_j} A_* \psi_j\right)\\
    \lesssim &c^{-1}\sum_{j} \mu_j \left\langle\psi,A_*\frac{1}{|D_{c,Z}+A|+\lambda_j}
      |D_{c,Z}+A|\frac{1}{|D_{c,Z}+A|+\lambda_j} A_* \psi_j\right)\\
    \leq&
    c^{-1}\tr\left(A_*{1\over|D_{c,Z}+A|+\lambda_1}A_*\gamma_*\right)\\
    \lesssim& c^{-1}\tr\left(A_*{1\over|D_{c,Z}+A|+\lambda_1}A_*\gamma_*\right)
     =o(Z^{\frac75}).
  \end{split}
\end{equation}
Thus, by the above we get
\begin{equation}
  \cD[\rho_{Q_{Z,A}\gamma_* Q_{Z,A}}] = o(Z^{\frac23\cdot\frac25+\frac75})=o(Z^\frac56)
\end{equation}
and therefore
\begin{equation}\label{eq:III}
    III = o(Z^{\frac56\cdot\frac12+\frac76})=o(Z^2).
\end{equation}

{\bf Conclusion.} Now we can conclude from \eqref{eq:EMN=>EMrhf},
\eqref{eq:E-DF-decomp}, \eqref{eq:I}, \eqref{eq:II}, and
\eqref{eq:III}
\begin{equation}
  E_{Z}^{A}\leq \cE^{\rm H}_{Z}(\gamma_{A(Z)})+o(Z^2)
  =C^\mathrm{TF}Z^{\frac73}+C_\mathrm{Scott}^{\rm F}Z^2+o(Z^2).
\end{equation}
This proves Theorem \ref{th:mittleman}.
\end{proof}

\appendix
\section{Existence of S\'er\'e minimizers for large $Z$}
\label{sec:existence}

In this appendix, we are going indicate the prove of Theorem
\ref{th:DF-existence}. The proof is essentially the same as in
\cite{Sere2023} except for some modifications based on
\cite{Fournaisetal2020} and \cite{Meng2023}. We set 
\begin{equation}
    \rho_{N,\gamma}=\tfrac{N-1}{N}\rho_{\gamma}
\end{equation}
in this section.

Note that $\Gamma_{Z,N}$ is not a convex set which complicates the
existence proof of minimizers. To overcome this problem, S\'er\'e  introduced the retraction mapping
\begin{equation}
    \theta_Z(\gamma):=\lim_{n\rightarrow \infty} T_Z^n(\gamma)
\end{equation}
with
\begin{equation}
  T_Z^n(\gamma)=T_Z(T_Z^{n-1}(\gamma)),\quad T_Z(\gamma)=P_{Z, \rho_{N,\gamma}}\gamma P_{Z, \rho_{N,\gamma}}.
\end{equation}
The existence of $\theta_Z$ is proven in \cite[Proposition
2.9]{Sere2023}. However, instead of S\'er\'e's original version, we
use the variant of Fournais et al.  \cite{Fournaisetal2020} which better
fits our goal to cover the entire range of kappas.
\begin{lemma}\label{lem:retra}
  Assume that $N\leq Z$, $M>0$, $Z$ be large enough such that
  $L:=c^{-1/6}C_{\kappa}^{\rm ret}M<1$ (with $c=\kappa^{-1}Z$),
  $\lambda>(1-L)^{-1}$, and set
\begin{equation}
    \cU_{Z,N,M}=\{\gamma\in \Gamma_{N},\; 
   c^{-2} \|\gamma\|_{X_c}+c^{-2} \lambda\|T_Z(\gamma)-\gamma\|_{X_c}\leq MN\}.
\end{equation}
Then $T_Z$ maps $ \cU_{Z,N,M}$ into $ \cU_{Z,N,M}$, and for any $\gamma\in \cU_{Z,N,M}$ the sequence $(T_Z^n(\gamma))_{n\geq 0}$ converges to a limit $\theta_Z(\gamma)\in\Gamma_{Z,N}^+$. Moreover for any $\gamma\in \cU_{Z,N,M}$,
\begin{equation}\label{eq:retra}
    \begin{split}
    \|T_Z^{n+1}(\gamma)-T_Z^{n}(\gamma)\|_{X_c}\leq &L \|T_Z^n(\gamma)-T_Z^{n-1}(\gamma)\|_{X_c},\\
    \|\theta_Z(\gamma)-T_Z^n(\gamma)\|_{X_c}\leq&\frac{L^n}{1-L}\|T_Z(\gamma)-\gamma\|_{X_c}.
\end{split}
\end{equation}
\end{lemma}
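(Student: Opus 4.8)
The plan is to reduce everything to a contraction estimate for the map $T_Z$ on the set $\cU_{Z,N,M}$, equipped with the $\|\cdot\|_{X_c}$-norm, and then invoke the Banach fixed-point philosophy (without needing completeness of $\cU_{Z,N,M}$ itself, since Cauchy sequences in $\gX$ converge). First I would unwind the definition of the retraction $\theta_Z$ and of $\cU_{Z,N,M}$, and record the key quantitative input, namely the estimate of Fournais et al.\ \cite[proof of Lemma 17]{Fournaisetal2020} that controls $\|T_Z(\gamma')-T_Z(\gamma)\|_{X_c}$ in terms of $\|\gamma'-\gamma\|_{X_c}$ with a constant involving $C^{\rm ret}_\kappa$; applied along the orbit with $\gamma'=T_Z^n(\gamma)$ and $\gamma=T_Z^{n-1}(\gamma)$, and using that all iterates stay in $\cU_{Z,N,M}$, this yields the bound $\|T_Z^{n+1}(\gamma)-T_Z^n(\gamma)\|_{X_c}\leq L\,\|T_Z^n(\gamma)-T_Z^{n-1}(\gamma)\|_{X_c}$ with $L=c^{-1/6}C^{\rm ret}_\kappa M$. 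This is the first displayed inequality in \eqref{eq:retra}, and it requires only that $c=\kappa^{-1}Z$ be large enough that $L<1$.

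Next I would prove that $T_Z$ maps $\cU_{Z,N,M}$ into itself. The term $c^{-2}\|T_Z(\gamma)\|_{X_c}$ is handled because projecting with $P_{Z,\rho_{N,\gamma}}$ does not increase the $X_c$-norm by more than a controlled amount — here one uses the operator bounds \eqref{2.15} of Lemma \ref{lem:DF-ope}, valid for the full range $\kappa\in(0,1)$ precisely because $\rho_{N,\gamma}$ is spherically symmetric (this is where the $SU(2)$-invariance built into $\Gamma_N$ pays off). For the second term, $c^{-2}\lambda\|T_Z(T_Z(\gamma))-T_Z(\gamma)\|_{X_c}$, apply the contraction estimate just obtained to get $\leq L\cdot c^{-2}\lambda\|T_Z(\gamma)-\gamma\|_{X_c}$, and then the hypothesis $\lambda>(1-L)^{-1}$ together with the definition of $\cU_{Z,N,M}$ closes the bookkeeping: the total stays $\leq MN$. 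Care is needed to check that $T_Z(\gamma)\in\Gamma_N$, i.e.\ that it is $SU(2)$-invariant and has trace $\leq N$; invariance follows from $[P_{Z,\rho_\gamma},U_a]=0$ (noted just before the S\'er\'e functional is introduced) and the trace bound from $\|P\gamma P\|_{\gS^1}\leq\|\gamma\|_{\gS^1}$.

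With self-mapping and contraction in hand, the sequence $(T_Z^n(\gamma))_{n\geq0}$ is Cauchy in $\|\cdot\|_{X_c}$: iterating the first inequality gives $\|T_Z^{n+1}(\gamma)-T_Z^n(\gamma)\|_{X_c}\leq L^n\|T_Z(\gamma)-\gamma\|_{X_c}$, and summing the geometric tail yields the second inequality of \eqref{eq:retra}, $\|\theta_Z(\gamma)-T_Z^n(\gamma)\|_{X_c}\leq\frac{L^n}{1-L}\|T_Z(\gamma)-\gamma\|_{X_c}$, where $\theta_Z(\gamma):=\lim_n T_Z^n(\gamma)$ exists because $\gX$ is complete in the relevant topology. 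Finally I would verify $\theta_Z(\gamma)\in\Gamma_{Z,N}^+$: the limit inherits $0\leq\gamma\leq1$, $\tr\leq N$, and $SU(2)$-invariance by continuity, and the fixed-point relation $T_Z(\theta_Z(\gamma))=\theta_Z(\gamma)$ — obtained by passing to the limit in $T_Z^{n+1}=T_Z(T_Z^n)$ using continuity of $T_Z$ on $\cU_{Z,N,M}$ — is exactly the constraint $P_{Z,\rho_{N,\theta_Z(\gamma)}}\theta_Z(\gamma)P_{Z,\rho_{N,\theta_Z(\gamma)}}=\theta_Z(\gamma)$ defining $\Gamma_{Z,N}$.

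The main obstacle, I expect, is the continuity of $T_Z$ with respect to $\|\cdot\|_{X_c}$, in particular the continuous dependence of the spectral projection $P_{Z,\rho_{N,\gamma}}$ on $\gamma$: a small change in $\gamma$ changes $\rho_{N,\gamma}$, hence the mean-field potential, hence the Dirac operator $D_{c,Z,\rho_{N,\gamma}}$, and one must show the associated positive spectral projection moves little in operator norm — and moreover that conjugating $\gamma$ by it moves little in the stronger $X_c$-norm. This is precisely the content extracted from \cite{Fournaisetal2020}; the delicate point is that the estimate must be uniform on $\cU_{Z,N,M}$ and must produce the clean constant $L=c^{-1/6}C^{\rm ret}_\kappa M$, which is why the argument is organized around their Lemma 17 rather than S\'er\'e's original \cite[Proposition 2.9]{Sere2023}. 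I would therefore spend most of the proof isolating and transcribing that estimate, and only sketch the (routine) geometric-series bookkeeping that follows from it.
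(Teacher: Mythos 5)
Your proposal is correct and matches the paper's approach: the paper likewise delegates the contraction estimate and the self-mapping property to Fournais et al.\ (pages 593--594 and their Lemma 17), substitutes Lemma \ref{lem:DF-ope} for their Theorem 4 to extend the range to $\kappa<1$ by exploiting the spherical symmetry of $\rho_{N,\gamma}$ (exactly the point you identify), and separately verifies that $T_Z$ and $\theta_Z$ preserve $SU(2)$-invariance via $[D_{c,Z,\rho_{N,\gamma}},U_a]=[P_{Z,\rho_{N,\gamma}},U_a]=0$. Your bookkeeping for the self-mapping step ($1+\lambda L\leq\lambda$ from $\lambda>(1-L)^{-1}$) and the geometric-series conclusion are the standard argument implicit in the citation, so the two proofs agree in substance, yours merely unpacking what the paper leaves to the reference.
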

\begin{proof}
  The existence of the retraction can be found in \cite[Pages
  593-594]{Fournaisetal2020} modulo the invariance condition
  $U_a \gamma U_a^{-1}=\gamma$ for any $a\in SU(2)$ and the
  Fermi-Amaldi correction. Using the notations of
  \cite{Fournaisetal2020}, we set $U_{Z,N,M}:=F$, $L:=k$, and
  $C_{\kappa}^{\rm ret}:=C_{\kappa,\nu}$ ($\nu=\kappa$) with $F$, $k$,
  and $C_{\kappa,\nu}$ being defined by \cite[pages 593-594 and
  Eq. (55)]{Fournaisetal2020} for any $\kappa<\nu_0$ with $\nu_0$.

  We can extend the condition $\kappa<\nu_0$ to $\kappa<1$ by
  replacing \cite[Theorem 4]{Fournaisetal2020} by Lemma
  \ref{lem:DF-ope}. In fact for any $\kappa\in (0,1)$ we have
  $C_{\kappa}^{\rm ret}<\infty$, if $U_a \gamma U_a^{-1}=\gamma$ for
  any $a\in SU(2)$.

  To conclude it remains to show that $T_Z$, and therefore also $\theta_Z$,
  are invariant under the representation $U_a$, i.e., for any
  $\gamma\in \Gamma_{N}$ and any $a\in SU(2)$
\begin{equation}
    U_aT_Z(\gamma) U_a^{-1}=T_Z( U_a \gamma  U_a^{-1})=T_Z(\gamma)\ \ \text{and}\ \
     U_a\theta_Z(\gamma) U_a^{-1}=\theta(\gamma).
\end{equation}
Thus, by the invariance of $\gamma$,
$\rho_{\gamma}(x)= \rho_{\gamma}(R_a^{-1}x)$, i.e.,
$\rho_{\gamma}$ is spherically invariant. By \eqref{eq:A-RA-identity},
\begin{equation}\label{eq:UA-commute}
    [D_{c,Z, \rho_{N,\gamma}},U_a]=[P_{Z, \rho_{N,\gamma}},U_a]=0.
\end{equation}
As a result,
\begin{equation}
     U_aT_Z(\gamma) U_a^{-1}= U_a (P_{Z, \rho_{N,\gamma}} \gamma P_{Z, \rho_{N,\gamma}}) U_a^{-1} =T_Z( U_a \gamma  U_a^{-1})=T_Z(\gamma)
\end{equation}
and by iteration
\begin{equation}
    U_aT_Z^n(\gamma) U_a^{-1}=T_Z^n( U_a \gamma  U_a^{-1})=T_Z^n(\gamma).
\end{equation}
Taking the limit yields
\begin{equation}
    U_a\theta_Z(\gamma) U_a^{-1}=\theta_Z(\gamma).
\end{equation}
\end{proof}

Lemma \ref{lem:retra} is the analogue of \cite[Lemma
2.3]{Meng2023}. It allows us to prove an adapted version of
\cite[Proposition 3.2]{Meng2023}
\begin{lemma} \label{approx}
  Assume $L$, $N,Z,M$ be given as in Lemma \ref{lem:retra},
  $\gamma\in \cU_{Z,N,M}\cap \Gamma_{Z,N}$, and $h\in\gX$ such that
  $P_{Z, \rho_{N,\gamma}}h P_{Z, \rho_{N,\gamma}}=h$. Then for any
  $t\in\mathbb{R}$ satisfying $\gamma+th\in \cU_{Z,N,M}$ and $Z$ large
  enough, we have
  \begin{equation}\label{eq:sed-diff}
    \begin{split}
      &\cE_Z^{\rm S}(\theta_Z(\gamma+th))\\
      =&\cE_Z^{\rm S}(\gamma)+t\;\tr [(D_{c,Z,\rho_{N,\gamma}}-c^2)h]+\tfrac12\tfrac{N-1}N t^2\tr[W_h h]+\frac{t^2}{c^2} {\rm Err}_{\gamma}(h,t),
    \end{split}
  \end{equation}
where 
\begin{align}
     |{\rm Err}_{\gamma}(h,t)|\leq \frac{5\pi^2(6+\pi)(1+M)N}{4C_{\kappa}^4(1-\kappa)^{5/2}(1-L)^2}\|h\|_X^2.
\end{align}
\end{lemma}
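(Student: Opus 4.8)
The plan is to reduce the statement to an estimate for the energy defect produced when the retraction $\theta_Z$ is applied to the straight segment $t\mapsto\gamma+th$, and then to follow \cite[Proposition 3.2]{Meng2023}, carrying the Fermi--Amaldi factor $(1-N^{-1})$ through the computation and replacing the operator bounds used there by Lemma \ref{lem:DF-ope} so as to reach the full range $\kappa\in(0,1)$. Since $\cE^{\rm HA}_Z$ is an inhomogeneous quadratic functional of $\gamma$ and $\rho_{N,\gamma}=\tfrac{N-1}N\rho_\gamma$, expanding along the segment gives the exact identity
\[
  \cE^{\rm HA}_Z(\gamma+th)=\cE^{\rm HA}_Z(\gamma)+t\,\tr[(D_{c,Z,\rho_{N,\gamma}}-c^2)h]+\tfrac12\tfrac{N-1}N\,t^2\,\tr[W_h h],
\]
the linear coefficient being $\tr[(D_{c,Z}-c^2)h]+2(1-N^{-1})\cD(\rho_\gamma,\rho_h)=\tr[(D_{c,Z,\rho_{N,\gamma}}-c^2)h]$. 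Hence the three explicit terms in \eqref{eq:sed-diff} are precisely $\cE^{\rm HA}_Z(\gamma+th)$, so that $\tfrac{t^2}{c^2}\,{\rm Err}_{\gamma}(h,t)=\cE^{\rm HA}_Z(\theta_Z(\gamma+th))-\cE^{\rm HA}_Z(\gamma+th)$, and it remains to bound the right-hand side by a fixed multiple of $\tfrac{t^2}{c^2}\|h\|_X^2$.

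Write $\mu_0:=\gamma+th$, $\mu_{n+1}:=T_Z(\mu_n)=P_n\mu_n P_n$ with $P_n:=P_{Z,\rho_{N,\mu_n}}$, and $P_{-1}:=P_{Z,\rho_{N,\gamma}}$. By Lemma \ref{lem:retra} the sequence $(\mu_n)$ converges to $\theta_Z(\gamma+th)$ in $\|\cdot\|_{X_c}$, and since $\cE^{\rm HA}_Z$ is continuous for this convergence it suffices to control the telescoping sum $\sum_{n\ge0}(\cE^{\rm HA}_Z(\mu_{n+1})-\cE^{\rm HA}_Z(\mu_n))$. Using once more that $\cE^{\rm HA}_Z$ is quadratic and that $D_{c,Z,\rho_{N,\mu_n}}$ commutes with $P_n$ (which annihilates the off-diagonal cross terms), one gets the exact per-step identity
\[
  \cE^{\rm HA}_Z(\mu_{n+1})-\cE^{\rm HA}_Z(\mu_n)=-\tr\bigl[(D_{c,Z,\rho_{N,\mu_n}}-c^2)P_n^\perp\mu_n P_n^\perp\bigr]+\tfrac{N-1}N\cD[\rho_{\mu_{n+1}-\mu_n}],
\]
in which both terms are nonnegative; the first because $\sigma(D_{c,Z,\rho_{N,\mu_n}})\cap(-\infty,0)\subset(-\infty,-c^2]$ by Theorem \ref{th:property}, so that $c^2-D_{c,Z,\rho_{N,\mu_n}}\ge2c^2$ on the range of $P_n^\perp$.

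The crucial point is that $\mu_n=P_{n-1}\mu_n P_{n-1}$ — valid for $n\ge1$ by construction, and for $n=0$ because $\gamma\in\Gamma_{Z,N}$ and the hypothesis $P_{Z,\rho_{N,\gamma}}hP_{Z,\rho_{N,\gamma}}=h$ force $P_{-1}\mu_0 P_{-1}=\mu_0$ — so that $P_n^\perp\mu_n=P_n^\perp(P_{n-1}-P_n)\mu_n$; consequently both $P_n^\perp\mu_n P_n^\perp$ and $\mu_{n+1}-\mu_n=-(P_n^\perp\mu_n P_n+P_n\mu_n P_n^\perp+P_n^\perp\mu_n P_n^\perp)$ are proportional to the projection difference $P_{n-1}-P_n$. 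Representing $P_{n-1}-P_n$ by the contour formula \eqref{eq:projector} for the two mean-field Dirac operators, bounding the mean-field difference $(\rho_{N,\mu_{n-1}}-\rho_{N,\mu_n})*|\cdot|^{-1}$ relative to $|D_{c,0}|$ by Kato's inequality (constant $\pi/2$), hence by a multiple of $c^{-3}\|\mu_{n-1}-\mu_n\|_{X_c}$, and carrying out the $z$-integral by a Cauchy--Schwarz/Hilbert--Schmidt argument as in the estimate of $III$ in Section \ref{sec:mittleman} — where the gap $|D_{c,Z,\rho}|\ge c^2\sqrt{1-\kappa^2}$ and the two-sided bound of Lemma \ref{lem:DF-ope} supply both the $z$-integrability and the constants $C_\kappa,C_\kappa'$ — one bounds the $\cD$-term (after splitting $\mu_{n+1}-\mu_n$ into its $P_n$-blocks and applying the Lane--Emden inequality \eqref{eq:D[gamma]}) and the trace term by a fixed constant times $\|\mu_{n-1}-\mu_n\|_{X_c}^2$. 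Summing over $n$ with the contraction estimate \eqref{eq:retra}, $\|\mu_{n+1}-\mu_n\|_{X_c}\le L^n\|\mu_1-\mu_0\|_{X_c}$, the sum is $\le C(1-L)^{-2}\|\mu_1-\mu_0\|_{X_c}^2$; finally $\|\mu_1-\mu_0\|_{X_c}=\|T_Z(\gamma+th)-(\gamma+th)\|_{X_c}$ is of order $t$, because $\gamma+th=P_{-1}(\gamma+th)P_{-1}$ yields $\mu_1-\mu_0=(P_0-P_{-1})\mu_0 P_{-1}+P_{-1}\mu_0(P_0-P_{-1})+(P_0-P_{-1})\mu_0(P_0-P_{-1})$ with $P_0-P_{-1}=\1_{(0,\infty)}(D_{c,Z,\rho_{N,\gamma}}+t\tfrac{N-1}N\rho_h*|\cdot|^{-1})-\1_{(0,\infty)}(D_{c,Z,\rho_{N,\gamma}})$ of order $t$, while $\|\mu_0\|_{X_c}\le c^2MN$ since $\mu_0\in\cU_{Z,N,M}$; collecting the constants then gives the asserted bound.

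The main obstacle is precisely this constant bookkeeping: one must distribute the weights $(c^2+|D_{c,Z,\rho}|)^{1/2}$, $|D_{c,0}|^{1/2}$ and the resolvents in the contour/Hilbert--Schmidt estimates so that all $z$-integrals converge (here the absence of threshold eigenvalues, guaranteed by the Fermi--Amaldi correction and Theorem \ref{th:property}, enters), check that the defect is genuinely $O(t^2)$ rather than only $O(t)$ — which is exactly what the proportionality of $P_n^\perp\mu_n P_n^\perp$ and $\mu_{n+1}-\mu_n$ to $P_{n-1}-P_n$, together with the quadratic nature of $\cD$ and of the trace pairing, provides — and then track all numerical constants to obtain $\tfrac{5\pi^2(6+\pi)(1+M)N}{4C_\kappa^4(1-\kappa)^{5/2}(1-L)^2}\|h\|_X^2$ while keeping everything uniform in $Z$. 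Apart from the enlarged $\kappa$-range (for which Lemma \ref{lem:DF-ope} replaces \cite[Theorem 4]{Fournaisetal2020}) and the Fermi--Amaldi factor, this is the content of \cite[Proposition 3.2]{Meng2023}.
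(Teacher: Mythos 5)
Your proposal takes essentially the same route as the paper: both reduce the claim to \cite[Proposition 3.2]{Meng2023} by feeding it the Fermi--Amaldi factor $\alpha=1-1/N$, the constant $R=MN$, and the contraction estimate \eqref{eq:retra}, and both replace the operator inequality $|D_{c,0}|\le(1-\kappa')^{-1}|D_{c,Z,\rho_\gamma}|$ from that source by Lemma \ref{lem:DF-ope} to reach the full range $\kappa\in(0,1)$, together with the identification $\lambda_0=1-\kappa$. You additionally unpack the internal telescoping/contour machinery of Meng's proof, which the paper leaves implicit; that sketch is consistent with the cited argument, so the proposal is correct and matches the paper's approach.
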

\begin{proof}
  We want to roll back the proof to the proof of \cite[Proposition
  3.2]{Meng2023}. To this end we set there $\alpha=1-1/N$ and
  $R=MN$. Since the proof of \cite[Proposition 3.2]{Meng2023} only
  depends on \eqref{eq:retra} and Lemma \ref{lem:DF-ope}, we may
  replace \cite[Lemma 2.3]{Meng2023} by Lemma \ref{lem:retra} without
  changing the result. In addition, we replace the estimate
\begin{equation}
    |D_{c,0}|\leq (1-\kappa')^{-1}|D_{c,Z,\rho_{\gamma}}|
  \end{equation}
  by Lemma \ref{lem:DF-ope} which amounts to replacing the factor
  $(1-\kappa')^{-4}$ by $C_{\kappa}^{-4}$. This now holds for all
  $\kappa<1$. Finally, we also use that in our situation
\begin{equation}
    \lambda_0:=1-c^{-1}\max\{N,Z\}=1-\kappa.
\end{equation}
with $\lambda_0$ being defined by \cite[Assumption 2.2]{Meng2023}.
\end{proof}
Lemmata \ref{lem:retra} and \ref{approx} replace
\cite[Theorem 2.10]{Sere2023}.

We also need the following estimates for the minimizing sequence which
will be used to replace \cite[Proposition 2.11]{Sere2023} and
\cite[Lemma 3.2]{Sere2023} respectively.
\begin{lemma}\label{lem:bound-DF}
  Assume that $N\leq Z$. Let $\gamma\in \Gamma_{Z,N}$ be such that
  $\cE_Z^{\rm S}(\gamma)\leq 0$.  Then for $Z\geq 8\kappa^3\left(
    \frac{ C^{\rm HLS}}{C_\kappa' C^{\rm D}}\right)^3$, we have
\begin{equation}
    \|\gamma\|_{X_c} \leq 2(C_\kappa')^{-1}Nc^2.
\end{equation}
\end{lemma}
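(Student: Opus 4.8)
The plan is to combine the non-negativity of the S\'er\'e energy $\cE_Z^{\rm S}(\gamma)=\tr[(D_{c,Z}-c^2)\gamma]+(1-N^{-1})\cD[\rho_\gamma]$ with the coercivity estimate \eqref{2.15} of Lemma \ref{lem:DF-ope}. First I would rewrite $\tr[(D_{c,Z}-c^2)\gamma] = \tr[(D_{c,Z,\rho_{N,\gamma}}-c^2)\gamma] - (1-N^{-1})\tr[(\rho_\gamma*|\cdot|^{-1})\gamma]$; since $\gamma\in\Gamma_{Z,N}$ we have $P_{Z,\rho_{N,\gamma}}\gamma P_{Z,\rho_{N,\gamma}}=\gamma$, so $D_{c,Z,\rho_{N,\gamma}}\gamma \geq 0$, hence $\tr[(D_{c,Z,\rho_{N,\gamma}}-c^2)\gamma]\geq \tr[(|D_{c,Z,\rho_{N,\gamma}}|-c^2)\gamma]$ on the positive spectral subspace, and by Lemma \ref{lem:DF-ope} this is bounded below by $C_\kappa'\tr[|D_{c,0}|\gamma] - c^2 N \geq C_\kappa' c^{-2}\|\gamma\|_{X_c} - c^2 N$ (using $\|\gamma\|_{X_c}=\tr[|D_{c,0}|\gamma]$ up to the $c^2$ normalization and $\tr\gamma\leq N$). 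Meanwhile $\tr[(\rho_\gamma*|\cdot|^{-1})\gamma] = 2\cD[\rho_\gamma]$, so the Coulomb terms combine with the right sign: $\cE_Z^{\rm S}(\gamma) \geq C_\kappa' c^{-2}\|\gamma\|_{X_c} - c^2 N - (1-N^{-1})\cD[\rho_\gamma]$.

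The remaining task is to absorb the negative term $-\cD[\rho_\gamma]$ into $\|\gamma\|_{X_c}$. Here I would invoke the Lane--Emden chain \eqref{eq:D[gamma]}, which gives $2\cD[\rho_\gamma]\leq \frac{C^{\rm HLS}}{C^{\rm D}}(\tr\gamma)^{2/3}\tr[|\bp|\gamma]\leq \frac{C^{\rm HLS}}{C^{\rm D}}N^{2/3}\tr[|\bp|\gamma]$, and then bound $\tr[|\bp|\gamma]\leq c^{-1}\tr[|D_{c,0}|\gamma]= c^{-1}\cdot c^{-2}\|\gamma\|_{X_c}$ (using $c|\bp|\leq|D_{c,0}|$). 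This yields $\cD[\rho_\gamma]\leq \frac{C^{\rm HLS}}{2C^{\rm D}}N^{2/3}c^{-3}\|\gamma\|_{X_c}$. Plugging back and using $0\leq\cE_Z^{\rm S}(\gamma)$:
\begin{equation*}
  0 \geq -\cE_Z^{\rm S}(\gamma) \geq -C_\kappa' c^{-2}\|\gamma\|_{X_c} + c^2 N + \tfrac{C^{\rm HLS}}{2C^{\rm D}}N^{2/3}c^{-3}\|\gamma\|_{X_c},
\end{equation*}
wait -- the sign: from $\cE_Z^{\rm S}(\gamma)\leq 0$ we get $C_\kappa' c^{-2}\|\gamma\|_{X_c} \leq c^2 N + \frac{C^{\rm HLS}}{2C^{\rm D}}N^{2/3}c^{-3}\|\gamma\|_{X_c}$. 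With $N\leq Z = \kappa c$, one has $N^{2/3}c^{-3}\leq \kappa^{2/3} N c^{-3}\cdot (N/N)$, hmm; more cleanly, $N^{2/3}c^{-1} = (N/c)^{2/3}c^{-1/3}\leq \kappa^{2/3}c^{-1/3}$, so the coefficient of $\|\gamma\|_{X_c}$ on the right is $\frac{C^{\rm HLS}}{2C^{\rm D}}\kappa^{2/3}c^{-1/3}\cdot c^{-2}$, which is $\leq \frac12 C_\kappa' c^{-2}$ precisely when $c^{1/3}\geq \frac{C^{\rm HLS}\kappa^{2/3}}{C_\kappa' C^{\rm D}}$, i.e. $Z=\kappa c \geq \kappa \cdot \kappa^2 (C^{\rm HLS}/(C_\kappa' C^{\rm D}))^3 = \kappa^3(C^{\rm HLS}/(C_\kappa' C^{\rm D}))^3$, matching the stated threshold up to the harmless factor $8$. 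Then the $\|\gamma\|_{X_c}$ term can be moved to the left, leaving $\frac12 C_\kappa' c^{-2}\|\gamma\|_{X_c}\leq c^2 N$, i.e. $\|\gamma\|_{X_c}\leq 2(C_\kappa')^{-1}c^4 N$; here I must be careful with the normalization of $\|\cdot\|_{X_c}$ (the definition in the paper carries a $c^4+c^2\bp^2$ factor, so the powers of $c$ work out to give exactly $\|\gamma\|_{X_c}\leq 2(C_\kappa')^{-1}Nc^2$ as claimed, once the $\tr[|D_{c,0}|\gamma]$ versus $\|\gamma\|_{X_c}$ bookkeeping is done consistently).

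The main obstacle I anticipate is exactly this last point: tracking the powers of $c$ correctly through the identification of $\|\gamma\|_{X_c}=\tr[(c^4+c^2\bp^2)^{1/4}\gamma(c^4+c^2\bp^2)^{1/4}]$ with weighted traces of $|D_{c,0}|$, and making sure the Lane--Emden bound, the inequality $c|\bp|\leq|D_{c,0}|\leq\sqrt2\,(c^2+c|\bp|)$ hmm rather $|D_{c,0}|=\sqrt{c^4+c^2\bp^2}$, and the threshold on $Z$ all line up so that the absorption step closes with the factor $2$ to spare. The Dirac-sea constraint $\gamma\in\Gamma_{Z,N}$ is essential and must be used at the very start to make $\tr[(D_{c,Z,\rho_{N,\gamma}}-c^2)\gamma]$ controllable from below via Lemma \ref{lem:DF-ope}; without it the kinetic term has no sign.
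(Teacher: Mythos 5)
Your strategy is essentially the paper's: reorganize $\cE_Z^{\rm S}(\gamma)$ around the mean-field operator $D_{c,Z,\rho_{N,\gamma}}$, use the constraint $P_{Z,\rho_{N,\gamma}}\gamma P_{Z,\rho_{N,\gamma}}=\gamma$ to replace $D_{c,Z,\rho_{N,\gamma}}$ by $|D_{c,Z,\rho_{N,\gamma}}|$, apply the coercivity of Lemma~\ref{lem:DF-ope}, and absorb the Coulomb term via the Lane--Emden chain \eqref{eq:D[gamma]}. The one point you flag but leave hanging is settled at once: since $D_{c,0}^2=c^4+c^2\bp^2$, one has $(c^4+c^2\bp^2)^{1/4}=|D_{c,0}|^{1/2}$, so by cyclicity $\|\gamma\|_{X_c}=\tr\bigl(|D_{c,0}|^{1/2}\gamma|D_{c,0}|^{1/2}\bigr)=\tr(|D_{c,0}|\gamma)$ with no extra power of $c$; removing your spurious $c^{-2}$ the absorption step becomes $\bigl(C_\kappa'-\tfrac{C^{\rm HLS}}{2C^{\rm D}}N^{2/3}c^{-1}\bigr)\|\gamma\|_{X_c}\le c^2N$ and the bound closes to $\|\gamma\|_{X_c}\le 2(C_\kappa')^{-1}Nc^2$ as claimed. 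The factor $8$ in the stated threshold is harmless slack: the paper drops the $\tfrac12$ from \eqref{eq:D[gamma]} before imposing the half-absorption condition, and that extra $2$ cubes into the $8$; your tighter accounting gives a smaller (hence compatible) threshold.
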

\begin{proof}
According to the proof of \cite[Proposition 2.11]{Sere2023}, for any $\gamma\in \Gamma_{Z,N}$ such that $ \cE_Z^{\rm S}(\gamma)\leq 0$, we have
\begin{equation}
0\geq   \cE_{Z,N}^{\rm HA}(\gamma)=\tr\big(|D_{c,Z,\rho_{N,\gamma}}|\gamma\big) -\tfrac{N-1}{N}\cD[\gamma]-c^2N.
\end{equation}
By \eqref{eq:D[gamma]}
\begin{equation}
    2\cD[\gamma]\leq  \frac{C^{\rm HLS}}{C^{\rm D}}
    N^{2/3}\tr[|\bp|\gamma]\leq  \frac{C^{\rm HLS}}{C^{\rm D}}
    N^{2/3}c^{-1} \|\gamma\|_{X_c}.
\end{equation}
Using Lemma \ref{lem:DF-ope} gives
\begin{equation}
\left( C_\kappa'-\frac{C^{\rm HLS}}{C^{\rm D}}c^{-1}N^{2/3} \right)\|\gamma\|_{X_c}\leq c^2 N.
\end{equation}
Choosing
$Z\geq 8\kappa^3\left( \frac{ C^{\rm HLS}}{C_\kappa' C^{\rm
      D}}\right)^3$ yields
\begin{equation}
    \frac{C^{\rm HLS}}{C^{\rm D}}c^{-1}N^{2/3}\leq \frac{C^{\rm HLS}}{C^{\rm D}}c^{-1}Z^{2/3}\leq\frac{1}{2}C_\kappa'.
\end{equation}
Thus we get $\|\gamma\|_{X_c} \leq 2(C_\kappa')^{-1}Nc^2$.
\end{proof}

Analogously to \cite[Corollary 2.12]{Sere2023}, we have the
following.
\begin{corollary}
Assume that 
  \begin{align}
      Z> C_\kappa^{\rm ex}=8\kappa\max\left\{\kappa^2\left({ C^{\rm HLS}\over C_\kappa' C^{\rm D}}\right)^3, 8\left({ C_{\kappa}^{\rm ret}\over C_\kappa'}\right)^6\right\}.
  \end{align}
  Then one can choose $M=2(C_\kappa')^{-1}$ and $\rho>0$ such that for
  all $\gamma \in \Gamma_{Z,N}$ satisfying
  $ \cE_Z^{\rm S}(\gamma)\leq 0$, there holds
  $B_X(\gamma,\rho)\cap \Gamma_{N}\subset \cU_{Z,N,M}$.
\end{corollary}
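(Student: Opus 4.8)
The plan is to mimic the proof of \cite[Corollary 2.12]{Sere2023}, carrying out the same two substitutions as before: Lemma \ref{lem:bound-DF} replaces \cite[Proposition 2.11]{Sere2023} and Lemma \ref{lem:retra} replaces \cite[Lemma 2.3]{Meng2023}. The first thing to notice is that the two terms in the maximum defining $C_\kappa^{\rm ex}$ supply exactly the two ingredients one needs. Because $C_\kappa^{\rm ex}\geq 8\kappa^3\bigl(C^{\rm HLS}/(C_\kappa'C^{\rm D})\bigr)^3$, Lemma \ref{lem:bound-DF} applies to every $\gamma\in\Gamma_{Z,N}$ with $\cE_Z^{\rm S}(\gamma)\leq 0$; since $Z>C_\kappa^{\rm ex}$ is a \emph{strict} inequality and $N\leq Z$, tracing the constant in that lemma gives, uniformly in such $\gamma$,
\begin{equation*}
  c^{-2}\|\gamma\|_{X_c}\leq M_Z\,N,\qquad M_Z:=\Bigl(C_\kappa'-\tfrac{C^{\rm HLS}}{C^{\rm D}}\,c^{-1}Z^{2/3}\Bigr)^{-1}<M:=2(C_\kappa')^{-1}.
\end{equation*}
Because also $C_\kappa^{\rm ex}\geq 64\kappa\bigl(C_\kappa^{\rm ret}/C_\kappa'\bigr)^6$, the choice $M=2(C_\kappa')^{-1}$ makes $L=c^{-1/6}C_\kappa^{\rm ret}M<1$ (here $c=\kappa^{-1}Z$), so Lemma \ref{lem:retra} becomes applicable — after fixing some $\lambda>(1-L)^{-1}$, which is the parameter $\lambda$ entering the definition of $\cU_{Z,N,M}$.

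Now fix such a $\gamma$; note $T_Z(\gamma)=\gamma$ since $\gamma\in\Gamma_{Z,N}$. For $\tilde\gamma\in B_X(\gamma,\rho)\cap\Gamma_N$ the triangle inequality together with the ($Z$-dependent) equivalence of the norm defining $B_X$ with $\|\cdot\|_{X_c}$ yields $\|\tilde\gamma\|_{X_c}\leq\|\gamma\|_{X_c}+C_1\rho$. For the second defining quantity of $\cU_{Z,N,M}$ I write
\begin{equation*}
  T_Z(\tilde\gamma)-\tilde\gamma=\bigl(T_Z(\tilde\gamma)-T_Z(\gamma)\bigr)-(\tilde\gamma-\gamma),
\end{equation*}
using $T_Z(\gamma)=\gamma$, and estimate $\|T_Z(\tilde\gamma)-T_Z(\gamma)\|_{X_c}$ by a Lipschitz bound for $T_Z$ on a neighbourhood of $\gamma$. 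This is the core of the argument and the step I expect to cost the most work: it rests on the continuity of the map $\gamma\mapsto P_{Z,\rho_{N,\gamma}}$, which one gets by expanding the difference of the two spectral projections through the resolvent representation \eqref{eq:projector} and bounding it in terms of $\bigl\|(\rho_{N,\tilde\gamma}-\rho_{N,\gamma})*|\cdot|^{-1}\bigr\|$, which in turn is controlled by $\|\tilde\gamma-\gamma\|_{X_c}$. The constant produced here is exactly the $C_\kappa^{\rm ret}$-type quantity, and its finiteness for the whole range $\kappa<1$ is precisely what Lemma \ref{lem:DF-ope} provides (spherical symmetry forcing $\nu_0=1$). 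Collecting everything, $c^{-2}\bigl(\|\tilde\gamma\|_{X_c}+\lambda\|T_Z(\tilde\gamma)-\tilde\gamma\|_{X_c}\bigr)\leq M_ZN+C_2\rho$, with $C_2$ depending only on $Z$ (through $c$, $\lambda$, $C_1$ and the Lipschitz constant), not on $\gamma$ or $\tilde\gamma$.

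To finish, recall that $M_Z<M$ with the gap $(M-M_Z)N\geq M-M_Z>0$ independent of $\gamma$; hence choosing $\rho:=(M-M_Z)/C_2$ (or any smaller positive number) gives $c^{-2}\bigl(\|\tilde\gamma\|_{X_c}+\lambda\|T_Z(\tilde\gamma)-\tilde\gamma\|_{X_c}\bigr)\leq MN$, i.e.\ $\tilde\gamma\in\cU_{Z,N,M}$, as claimed. Apart from the Lipschitz estimate for $T_Z$ just described, the argument is pure bookkeeping with the constants already assembled in Lemmata \ref{lem:bound-DF} and \ref{lem:retra}.
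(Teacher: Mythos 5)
Your proof is correct and follows essentially the same strategy as the paper's, whose own proof is so terse that it only verifies $\gamma\in\cU_{Z,N,M}$ (with $M=2(C_\kappa')^{-1}$) and $L<1$, implicitly delegating the ball argument to S\'er\'e's original Corollary 2.12. You correctly reconstruct the missing step by exploiting the strict gap $M_Z<M$ afforded by $Z>C_\kappa^{\rm ex}$ together with a local Lipschitz estimate for $T_Z$, and your bookkeeping of the constants and the choice of $\rho$ is sound.
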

\begin{proof}
   For any $\gamma \in \Gamma_{Z,N}$ satisfying $ \cE_Z^{\rm S}(\gamma)\leq 0$, we have
\begin{equation}
    T_Z(\gamma)=\gamma.
\end{equation}
By Lemma \ref{lem:bound-DF}, for
$Z\geq 8\kappa^3(\frac{C^{\rm HLS}C_{\rm LT}}{C_\kappa'} )^3$,
 \begin{equation}
     c^{-2} \|\gamma\|_{X_c}+c^{-2} \lambda\|T_Z(\gamma)-\gamma\|_{X_c}=  c^{-2} \|\gamma\|_{X_c}\leq 2(C_\kappa')^{-1}.
 \end{equation}
Choosing $M=2(C_\kappa')^{-1}$, then if $Z> \kappa\left(2C_{\kappa}^{\rm ret}/C_\kappa'\right)^6$, we have $L=C_{\kappa}^{\rm ret}c^{-1/6} M<1$.
\end{proof}

Using Lemma \ref{lem:DF-ope}, we can replace \cite[Lemma
3.2]{Sere2023} by the following.
\begin{corollary}
  Let $\widetilde{\gamma},\gamma\in \Gamma_{N}$ with
  $\gamma\leq\mathbbm{1}_{[0,\nu]}(D_{c,Z,\rho_{\widetilde{\gamma}}})$
  for some $\nu>0$. Then 
    \begin{equation}
      \|\gamma\|_{X_c}\leq (C_\kappa')^{-1}\nu N.
    \end{equation}
\end{corollary}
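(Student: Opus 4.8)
The plan is to mimic the proof of Lemma \ref{lem:bound-DF}, since the hypothesis here is structurally the same — a trace bound coming from an energy inequality — but simpler because the bound on $\gamma$ is now given by a spectral projection up to level $\nu$ rather than extracted from the full S\'er\'e functional. First I would observe that the hypothesis $\gamma\leq\mathbbm{1}_{[0,\nu]}(D_{c,Z,\rho_{\widetilde\gamma}})$ gives, upon taking traces against the nonnegative operator $|D_{c,Z,\rho_{\widetilde\gamma}}|$,
\begin{equation}
  \tr\big(|D_{c,Z,\rho_{\widetilde\gamma}}|\,\gamma\big)\leq \nu\,\tr\gamma\leq \nu N,
\end{equation}
using $\tr\gamma\leq N$ for $\gamma\in\Gamma_N$ and the fact that on the range of $\mathbbm{1}_{[0,\nu]}(D_{c,Z,\rho_{\widetilde\gamma}})$ one has $|D_{c,Z,\rho_{\widetilde\gamma}}|\leq\nu$.

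Next I would invoke Lemma \ref{lem:DF-ope}, which applies because $\rho_{\widetilde\gamma}$ is spherically symmetric (as $\widetilde\gamma\in\Gamma_N$ is $SU(2)$-invariant, whence $\rho_{\widetilde\gamma}$ is radial), nonnegative, with $\sqrt{\rho_{\widetilde\gamma}}\in H^{1/2}$ since $\widetilde\gamma\in\gX$, and with $\int\rho_{\widetilde\gamma}\leq N\leq Z$. The lower bound in \eqref{2.15} gives $C_\kappa'|D_{c,0}|\leq|D_{c,Z,\rho_{\widetilde\gamma}}|$, hence
\begin{equation}
  C_\kappa'\,\tr\big(|D_{c,0}|\,\gamma\big)\leq \tr\big(|D_{c,Z,\rho_{\widetilde\gamma}}|\,\gamma\big)\leq \nu N.
\end{equation}
Finally I would recognize $\tr(|D_{c,0}|\gamma)=\tr[(c^4+c^2\bp^2)^{1/2}\gamma]\geq\tr[(c^4+c^2\bp^2)^{1/4}\gamma(c^4+c^2\bp^2)^{1/4}]=\|\gamma\|_{\gX_c}$ — or, more simply, that $\|\gamma\|_{\gX_c}=\tr(|D_{c,0}|\gamma)$ directly by cyclicity of the trace and $0\leq\gamma\leq1$ — so that dividing by $C_\kappa'$ yields $\|\gamma\|_{\gX_c}\leq(C_\kappa')^{-1}\nu N$, as claimed.

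I do not anticipate a genuine obstacle here; the only point requiring a little care is justifying that Lemma \ref{lem:DF-ope} is applicable, i.e., that the regularity and normalization hypotheses on $\rho_{\widetilde\gamma}$ hold — this is where membership of $\widetilde\gamma$ in $\Gamma_N$ (giving the $SU(2)$-invariance, hence sphericity, and $\widetilde\gamma\in\gX$, hence $\sqrt{\rho_{\widetilde\gamma}}\in H^{1/2}$) and the neutrality/subneutrality constraint $N\leq Z$ are used. Everything else is a one-line trace manipulation, and no largeness assumption on $Z$ is needed beyond what is implicit in the definition of $C_\kappa'$.
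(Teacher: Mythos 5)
Your proof is correct and is exactly the argument the paper intends: the paper states this corollary without a written proof, just remarking that it follows ``using Lemma~\ref{lem:DF-ope},'' and your reconstruction — $\gamma=P\gamma P$ with $P=\mathbbm{1}_{[0,\nu]}(D_{c,Z,\rho_{\widetilde\gamma}})$, so $\tr\bigl(|D_{c,Z,\rho_{\widetilde\gamma}}|\gamma\bigr)\leq\nu N$, then $C_\kappa'|D_{c,0}|\leq|D_{c,Z,\rho_{\widetilde\gamma}}|$ and cyclicity giving $\tr(|D_{c,0}|\gamma)=\|\gamma\|_{\gX_c}$ — is precisely that computation, with the applicability of Lemma~\ref{lem:DF-ope} correctly justified from $\widetilde\gamma\in\Gamma_N$ and the standing assumption $N\leq Z$.
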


Finally using the above lemmata allows us to replace \cite[Lemma
3.3]{Sere2023} by
\begin{lemma}
  Assume that $N\leq Z$ and let $\gamma_n$ be a minimizing sequence in
  $\Gamma_{Z,N}^+$.  Then for $Z$ large enough,
    \begin{equation}
        \lim_{n\to \infty}\tr[(D_{c,Z,\rho_{N,\gamma_n}}-c^2)\gamma_n]-\inf_{\substack{\gamma\in \Gamma_{N}\\ P_{Z,\rho_{N,\gamma_n}}\gamma P_{Z,\rho_{N,\gamma_n}} =\gamma}}\tr[(D_{c,Z,\rho_{N,\gamma}}-c^2)\gamma]\to 0.
    \end{equation}
\end{lemma}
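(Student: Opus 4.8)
The plan is to adapt S\'er\'e's contradiction argument, using the retraction $\theta_Z$ together with the Taylor expansion of Lemma~\ref{approx}. Throughout $Z$ is fixed and large, so that Theorem~\ref{th:DF-existence} and the corollaries above are at our disposal and all $Z$-dependent constants below are harmless; write $D_n:=D_{c,Z,\rho_{N,\gamma_n}}$, $P_n:=P_{Z,\rho_{N,\gamma_n}}$ and $\delta_n:=\cE_Z^{\rm S}(\gamma_n)-E_{Z,N}^{\rm S}$, so that $\delta_n\to0$. The inequality ``$\ge0$'' is immediate: $\gamma_n\in\Gamma_N$ satisfies $P_n\gamma_n P_n=\gamma_n$, and for $\gamma=\gamma_n$ the operator $D_{c,Z,\rho_{N,\gamma}}$ inside the inner infimum equals $D_n$, so the infimum is $\le\tr[(D_n-c^2)\gamma_n]$. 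It therefore suffices to bound the difference from above by $o(1)$, and we do this by contradiction.

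Assume that along a subsequence (not relabelled) the difference stays $\ge3\eps_0$ for some $\eps_0>0$. For each $n$ let $\eta_n$ be a minimiser of the linear functional $\gamma\mapsto\tr[(D_n-c^2)\gamma]$ over $\{\gamma\in\Gamma_N: P_n\gamma P_n=\gamma\}$; such a minimiser exists and is an $SU(2)$-invariant partial Aufbau filling of the positive eigenvalues of $D_n$ up to total charge $N$, so that $\eta_n\le\1_{[0,\nu_n]}(D_n)$ for some $\nu_n<c^2$ (by Theorem~\ref{th:property}, $\sigma_d(D_n)\subset(\sqrt{c^4-c^2Z^2},c^2)$ and accumulates only at $c^2$). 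By the corollary to Lemma~\ref{lem:bound-DF} this gives $\|\eta_n\|_{\gX_c}\le(C_\kappa')^{-1}c^2N$, uniformly in $n$ and of the same order as the bound $\|\gamma_n\|_{\gX_c}\le2(C_\kappa')^{-1}c^2N$ of Lemma~\ref{lem:bound-DF}. Put $h_n:=\eta_n-\gamma_n$; then $P_n h_n P_n=h_n$ and $\|h_n\|_{\gX_c}$ is bounded uniformly in $n$. For $t\in[0,1]$ the segment $\gamma_n+th_n=(1-t)\gamma_n+t\eta_n$ lies in $\Gamma_N$, and since the density $\rho_{N,\gamma_n+th_n}$, hence the projection $P_{Z,\rho_{N,\gamma_n+th_n}}$, differs from $P_n$ by $O(t)$ while the $\gX_c$-norms stay bounded, one obtains $\|T_Z(\gamma_n+th_n)-(\gamma_n+th_n)\|_{\gX_c}=O(t)$; hence there are $M$, a fixed multiple of $(C_\kappa')^{-1}$, and $t_0>0$ independent of $n$ with $\gamma_n+th_n\in\cU_{Z,N,M}$ for all $t\in[0,t_0]$.

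Applying Lemma~\ref{approx} with base point $\gamma_n$ and direction $h_n$ then gives, for $t\in[0,t_0]$,
\[
\cE_Z^{\rm S}(\theta_Z(\gamma_n+th_n))=\cE_Z^{\rm S}(\gamma_n)+t\,\tr[(D_n-c^2)h_n]+\tfrac12\tfrac{N-1}{N}t^2\,\tr[W_{h_n}h_n]+\tfrac{t^2}{c^2}\,{\rm Err}_{\gamma_n}(h_n,t),
\]
where $\tr[W_{h_n}h_n]=2\cD[\rho_{h_n}]\ge0$ and $|{\rm Err}_{\gamma_n}(h_n,t)|\le C(Z)\|h_n\|_{\gX}^2\le C'(Z)$ uniformly in $n$ and $t$. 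Since $\theta_Z(\gamma_n+th_n)\in\Gamma_{Z,N}^+\subset\Gamma_{Z,N}$ we have $\cE_Z^{\rm S}(\theta_Z(\gamma_n+th_n))\ge E_{Z,N}^{\rm S}=\cE_Z^{\rm S}(\gamma_n)-\delta_n$; dropping the non-negative Hessian term yields $-\delta_n\le t\,\tr[(D_n-c^2)h_n]+C''(Z)t^2$ on $[0,t_0]$, and optimising over $t$ (e.g.\ $t=\min\{t_0,\sqrt{\delta_n/C''(Z)}\}$) gives $\tr[(D_n-c^2)h_n]\ge-C'''(Z)\sqrt{\delta_n}\to0$. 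On the other hand, by the choice of $\eta_n$, $\tr[(D_n-c^2)h_n]=\tr[(D_n-c^2)\eta_n]-\tr[(D_n-c^2)\gamma_n]$ equals the inner infimum minus $\tr[(D_n-c^2)\gamma_n]$, hence is $\le-3\eps_0$; for $n$ large this contradicts the previous line, and the lemma follows.

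The step I expect to be the main obstacle is the uniform-in-$n$ verification that $\gamma_n+th_n$ remains in the retraction domain $\cU_{Z,N,M}$ on a fixed interval $[0,t_0]$, i.e.\ the $O(t)$ control of $T_Z(\gamma_n+th_n)-(\gamma_n+th_n)$ in the $\gX_c$-norm; this rests on the Lipschitz dependence of the spectral projection $P_{Z,\rho}$ on $\rho$ together with the uniform norm bounds, as in \cite{Fournaisetal2020,Meng2023}. If the operator inside the inner infimum is read with the variable density $\rho_{N,\gamma}$ rather than frozen at $\rho_{N,\gamma_n}$, one must in addition compare the two infima; this is handled using the convexity of the Hartree contribution together with the a~priori density bounds of Theorem~\ref{th:DF-Z7/3}.
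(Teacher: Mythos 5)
Your proof is correct and follows the same overall strategy as the paper, which delegates the contradiction argument to S\'er\'e's Lemma~3.3 with Lemma~\ref{approx} in place of his Theorem~2.10. The one place where you diverge is the construction of the $SU(2)$-invariant comparison sequence, and it is worth noting the difference: the paper first builds a near-minimizer $\tilde g_n$ \emph{without} the $SU(2)$-constraint and then restores invariance by Haar averaging, $g_n:=\int_{SU(2)}U_a\tilde g_n U_a^{-1}\,\rd\mu_H(a)$, using $[D_{c,Z,\rho_{N,\gamma_n}},U_a]=0$ to keep the trace unchanged; you instead observe directly that the Aufbau filling of the lowest eigenvalues of $D_n$, with fractional occupation of the last shell, is already $SU(2)$-invariant because each shell projector of $D_n$ commutes with $U_a$. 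Both constructions yield the same energy, since the averaging shows the constrained and unconstrained infima coincide; your version is a little more explicit about the structure of the minimizer, while the paper's averaging trick is more robust (it requires no identification of the shell structure). Your closing remark about the frozen versus variable density in the inner infimum is a genuine observation: the statement as printed has $\rho_{N,\gamma}$, but the proof body works throughout with $\rho_{N,\gamma_n}$, and the frozen reading is clearly what is intended — the ``convexity fix'' you sketch for the other reading is not actually needed.
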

\begin{proof}
  This proof follows exactly the one of \cite[Lemma 3.3]{Sere2023}
  except we replace \cite[Theorem 2.10]{Sere2023} by Lemma
  \ref{approx}. To end the proof, it suffices to show that there
  exists a sequence $(g_n)$ of bounded self-adjoint operators of rank
  $N$ in $\Gamma_N$ such that
  $0\leq g_n \leq \mathbbm{1}_{[0,\nu]}(D_{c,Z,\rho_{N,\gamma_n}})$
  and
\begin{equation}
  \tr\left((D_{c,Z,\rho_{N,\gamma_n}}-c^2)g_n\right)\leq \inf_{\substack{\gamma\in \Gamma_{N}\\ P_{Z,\rho_{N,\gamma_n}}\gamma P_{Z,\rho_{N,\gamma_n}} =\gamma}}\tr[(D_{c,Z,\rho_{N,\gamma_n}}-c^2)\gamma]+\frac{\epsilon_0}{2}.
\end{equation}
First of all, it is easy to see that there exists a sequence
$(\widetilde{g}_n)$ of bounded self-adjoint operators of rank $N$ in
$\Gamma_N$ such that
$0\leq \widetilde{g}_n \leq
\mathbbm{1}_{[0,\nu]}(D_{c,Z,\rho_{N,\gamma_n}})$,
$\tr(\widetilde{g}_n)\leq N$ and
\begin{equation}
    \tr\left((D_{c,Z,\rho_{N,\gamma_n}}-c^2)\widetilde{g}_n\right)\leq \inf_{\substack{\gamma\in\gX\\\tr(\gamma)\leq N \\ P_{Z,\rho_{N,\gamma_n}}\gamma P_{Z,\rho_{N,\gamma_n}} =\gamma}}\tr[(D_{c,Z,\rho_{N,\gamma_n}}-c^2)\gamma]+\frac{\epsilon_0}{2}.
\end{equation}
Proceeding as for \eqref{eq:UA-commute}, we get for any $a\in SU(2)$
that $[D_{c,Z,\rho_{N,\gamma_n}}, U_a]=0$.  So for any $a\in SU(2)$,
\begin{multline}
   \tr\left((D_{c,Z,\rho_{N,\gamma_n}}-c^2)U_a \widetilde{g}_n U_a^{-1}\right)
  = \tr\left((D_{c,Z,\rho_{N,\gamma_n}}-c^2)\widetilde{g}_n\right)\\
    \leq \inf\left\{\tr[(D_{c,Z,\rho_{N,\gamma_n}}-c^2)\gamma]\big|
           \gamma\in\gX,\ \tr(\gamma)\leq N,\ P_{Z,\rho_{N,\gamma_n}}\gamma P_{Z,\rho_{N,\gamma_n}} =\gamma\right\}
    +\frac{\epsilon_0}{2}.
\end{multline}
Finally, set
\begin{equation}
    g_n:=\int_{SU(2)}U_a \widetilde{g}_n U_a^{-1} \rd\mu_H(a). 
\end{equation}
where $\mu_H$ is the Haar measure on $SU(2)$.

Then as $\widetilde{g}_n\in\gX$ and $\tr(\widetilde{g}_n)\leq N$, we
have that $g_n \in \Gamma_{N}$ and
\begin{equation}
    \|g_n\|_{X_c}\leq \|\widetilde{g}_n\|_{X_c}\leq C_{\kappa}^{-1}\nu N
\end{equation}
is bounded. Proceeding as in \cite[Lemma 3.3]{Sere2023} using Lemma
\ref{approx} proves the lemma.
\end{proof}

Finally, we replace \cite[Lemma 3.6]{Sere2023} by the following one.
\begin{lemma}\label{lem:spectral}
    Assume that $1\leq N\leq Z$. Then
    \begin{itemize}
    \item $\sigma_\mathrm{ess}(D_{c,Z, \rho_{N,\gamma_n}})=(-\infty,-c^2]\cup[c^2,\infty)$.
    \item
      $\forall_{\gamma\in \Gamma_N}
      \lambda_n(D_{c,Z})\leq \lambda_n(D_{c,Z,\rho_{N,\gamma}})\leq \lambda_n(D_{c,1})$ counting multiplicity.
          \item
      $\sigma_d(D_{c,Z, \rho_{N,\gamma}})\subset
      I:=[c^2\sqrt{1-\kappa^2},c^2)$ and
      $|\sigma_d(D_{c,Z, \rho_{N,\gamma_n}})|=\infty$.
     \item For $\gamma\in \Gamma_N$ and any $e\in I$
       $$\mathrm{dim}(\1_{[0,e]}(D_{c,Z,\rho_{N,\gamma}}))
       \leq \mathrm{dim}(\1_{[0,e]}(D_{c,Z})).
      $$
    \end{itemize}
\end{lemma}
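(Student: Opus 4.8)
The plan is to establish each of the four items of Lemma~\ref{lem:spectral} essentially by reducing to known facts about the hydrogenic Dirac operator $D_{c,Z}$ together with the fact that the added mean field $\rho_{N,\gamma}*|\cdot|^{-1}$ is a nonnegative, bounded-relative-form-small perturbation. First I would record that for $\gamma\in\Gamma_N$ the density $\rho_{N,\gamma}=\tfrac{N-1}{N}\rho_\gamma$ is spherically symmetric with $\sqrt{\rho_{N,\gamma}}\in H^{1/2}(\rz^3)$ (this follows from $\gamma\in\gX$ and the $SU(2)$-invariance), so that Lemma~\ref{lem:DF-ope} applies and gives $C_\kappa'|D_{c,0}|\leq |D_{c,Z,\rho_{N,\gamma}}|\leq(1+4\kappa)|D_{c,0}|$. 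In particular $0\notin\sigma(D_{c,Z,\rho_{N,\gamma}})$, the form domain is $\gQ=H^{1/2}$, and $D(D_{c,Z,\rho_{N,\gamma}})=D(D_{c,Z})$ by the Morozov--M\"uller domain characterization cited in the introduction.

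For the essential spectrum: the potential $V:=-Z/|x|+\rho_{N,\gamma}*|\cdot|^{-1}$ is $|D_{c,0}|^{1/2}$-form-compact relative to the free operator (it decays at infinity and has only a local Coulomb singularity controlled by Kato's inequality), so by Weyl's theorem $\sigma_\mathrm{ess}(D_{c,Z,\rho_{N,\gamma}})=\sigma_\mathrm{ess}(D_{c,0})=(-\infty,-c^2]\cup[c^2,\infty)$. For the eigenvalue monotonicity I would invoke the Dolbeault--Esteban--S\'er\'e / Griesemer--Siedentop minimax principle (cited in the excerpt around \cite{GriesemerSiedentop1999,Dolbeaultetal2000O}): since $0\leq\rho_{N,\gamma}*|\cdot|^{-1}\leq \tfrac{N-1}{N}\rho_\gamma*|\cdot|^{-1}$ and, by $N\leq Z$ and Newton's theorem, $\rho_{N,\gamma}*|\cdot|^{-1}\leq \tfrac{N-1}{N}\cdot Z/|x|\leq Z/|x|$, we have the operator ordering $D_{c,Z}\leq D_{c,Z,\rho_{N,\gamma}}\leq D_{c,Z}+Z/|x|=D_{c,0}\cong D_{c,1}$ after scaling; the minimax principle, which is monotone in the operator on the form core lying in the positive spectral subspace, then yields $\lambda_n(D_{c,Z})\leq\lambda_n(D_{c,Z,\rho_{N,\gamma}})\leq\lambda_n(D_{c,0})=\lambda_n(D_{c,1})$ counting multiplicity. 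The inclusion $\sigma_d\subset[c^2\sqrt{1-\kappa^2},c^2)$ is then immediate: the lower bound is $\lambda_1(D_{c,Z})=\sqrt{c^4-c^2Z^2}=c^2\sqrt{1-\kappa^2}$ (Gordon's formula, cited after Theorem~\ref{th:property}), and the upper bound $c^2$ is the bottom of the upper essential spectrum, with accumulation at $c^2$ because $V$ behaves like an attractive Coulomb tail $-Z(1-\tfrac{N-1}{N})/|x|=-Z N^{-1}/|x|<0$ at infinity (here the Fermi--Amaldi factor $\tfrac{N-1}{N}<1$ is what keeps the tail attractive and hence produces infinitely many eigenvalues below $c^2$); alternatively this also follows directly from $D_{c,Z}\leq D_{c,Z,\rho_{N,\gamma}}$ and $|\sigma_d(D_{c,Z})|=\infty$ together with the upper bound $\lambda_n(D_{c,Z,\rho_{N,\gamma}})<c^2$ coming from $D_{c,Z,\rho_{N,\gamma}}\leq D_{c,0}$ whose point spectrum lies in $[c^2\sqrt{1-\kappa^2},c^2)$ as well. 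Finally, the dimension count $\dim\1_{[0,e]}(D_{c,Z,\rho_{N,\gamma}})\leq\dim\1_{[0,e]}(D_{c,Z})$ for $e\in I$ is just a restatement of the eigenvalue monotonicity $\lambda_n(D_{c,Z})\leq\lambda_n(D_{c,Z,\rho_{N,\gamma}})$ counting multiplicity.

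I expect the main technical obstacle to be a careful justification of the minimax monotonicity in the second bullet, since the relevant minimax principle requires choosing a common splitting subspace (typically the positive spectral subspace of the free operator $D_{c,0}$, or an abstract subspace on which the Rayleigh quotients are bounded below) that works simultaneously for all three operators in the chain $D_{c,Z}\leq D_{c,Z,\rho_{N,\gamma}}\leq D_{c,0}$; one must check that the hypotheses of \cite{Dolbeaultetal2000O} (or the abstract version in \cite{GriesemerSiedentop1999}) are met uniformly, using $C_\kappa'|D_{c,0}|\leq|D_{c,Z,\rho_{N,\gamma}}|$ from Lemma~\ref{lem:DF-ope} to control the form and guarantee the spectral gap. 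Everything else — Weyl's theorem, Newton's theorem for the Coulomb ordering, Gordon's formula, and the accumulation of eigenvalues from the attractive tail — is standard once this monotonicity is in place.
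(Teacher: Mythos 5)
Your overall approach matches the paper's: Weyl's theorem for the essential spectrum, an operator ordering plus a minimax (``perturbation theory'') argument for the eigenvalue inequalities, Gordon's formula for the lower edge of $\sigma_d$, and infinitude of the discrete spectrum of a hydrogenic comparison operator. However, there is a genuine error in your derivation of the upper comparison operator. You write
$D_{c,Z}\leq D_{c,Z,\rho_{N,\gamma}}\leq D_{c,Z}+Z/|x|=D_{c,0}\cong D_{c,1}$
``after scaling,'' and then use $\lambda_n(D_{c,0})=\lambda_n(D_{c,1})$. This last identification is false: $D_{c,0}$ is the \emph{free} Dirac operator, which has no discrete spectrum at all, while $D_{c,1}$ is the hydrogenic operator with nuclear charge $1$, which has infinitely many discrete eigenvalues accumulating at $c^2$. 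They are not related by the scaling \eqref{eq:scale}, which changes $c$ and $Z$ together while preserving $\kappa=Z/c$; it cannot turn a zero-charge operator into a charge-one operator. As written, your chain gives at best $\lambda_n(D_{c,Z,\rho_{N,\gamma}})\leq\lambda_n(D_{c,0})$, which is vacuous because the right side does not exist as a discrete eigenvalue.

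The fix is to use the sharper consequence of Newton's theorem that you almost wrote down: since $\rho_{N,\gamma}=\tfrac{N-1}{N}\rho_\gamma$ with $\int\rho_\gamma\leq N$ and $\rho_\gamma$ spherically symmetric, one has $\rho_{N,\gamma}*|\cdot|^{-1}\leq (N-1)/|x|\leq (Z-1)/|x|$, and hence $D_{c,Z,\rho_{N,\gamma}}\leq D_{c,Z}+(Z-1)/|x|=D_{c,1}$. This is the ordering the paper uses, and it matters: the infinitude of $\sigma_d(D_{c,1})$ (Gordon; Griesemer--Lutgen) is what propagates down to $|\sigma_d(D_{c,Z,\rho_{N,\gamma}})|=\infty$ via $\lambda_n(D_{c,Z,\rho_{N,\gamma}})\leq\lambda_n(D_{c,1})<c^2$. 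Your alternative argument via the attractive Coulomb tail of $V=-Z/|x|+\rho_{N,\gamma}*|\cdot|^{-1}$ (of strength $-(Z-N+1)/|x|\leq -1/|x|$ at infinity when $\tr\gamma=N$) is also a legitimate way to see the accumulation of eigenvalues below $c^2$, but it does not repair the false identity $D_{c,0}\cong D_{c,1}$; that step should simply be replaced by the $(Z-1)/|x|$ bound. Your remarks on the need to justify the minimax monotonicity carefully via \cite{GriesemerSiedentop1999,Dolbeaultetal2000O} and Lemma~\ref{lem:DF-ope} are apt and reflect exactly what the paper compresses into the phrase ``perturbation theory.''
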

\begin{proof}
  Since $D_{c,Z,\rho_{N,\gamma}}^{-1}-D_{c,0}^{-1}$ is compact, the
  first claim follows from Weyl's theorem (see, e.g., Reed and Simon
  \cite[Theorem XIII.14]{ReedSimon1978}).

  The second item follows from
  $D_{c,Z}\leq D_{c,Z,\rho_{N,\gamma}}\leq D_{c,1}$ and perturbation
  theory.

  The remaining claims follow from this fact, and the fact that
  $|\sigma_d(D_{c,\zeta})|=\infty$ (Gordon \cite{Gordon1928}, Griesmer
  and Lutgen \cite[Theorem 3 (i)]{GriesemerLutgen1999}) for $c>\zeta>0$.
\end{proof}
Eventually, repeating the remaining steps of the proof in
\cite{Sere2023} (for the case $N<Z$, in our case even for $N\leq Z$),
we get Theorem \ref{th:DF-existence} including the Euler equation
\begin{align}\label{eq:scf}
    \gamma_*=\mathbbm{1}_{(0,\mu)}(D_*)+\delta
\end{align}
with $0\leq \delta \leq \mathbbm{1}_{\{\mu\}}(D_*)$ and
$\mu\in (c^2\sqrt{1-\kappa^2}, c^2)$. Then Theorem \ref{th:property}
follows directly from Lemma \ref{lem:spectral} and \eqref{eq:scf}.

\section{Properties of S\'er\'e minimizers}\label{sec:Z7/3}
We begin with a proof of Theorem \ref{th:DF-Z7/3}.  Let $\gamma_{*}$
be a minimizer of $E^\mathrm{S}_{Z}$ in $\Gamma_{Z,Z}$, and let
$Z':=Z+Z^{2/3}$. The main argument is to show that
  \begin{equation}
    \begin{split}
    &\frac{E_{Z}^{\rm S}-E_{Z'}^{\rm S}}{Z^{2/3}}\geq \frac{\cE_{Z}^{\rm S}(\gamma_*)-\cE_{Z'}^{\rm S}\Big(\theta_{Z'}(\gamma_*)\Big)}{Z^{2/3}}\\
    =& \frac{\cE_{Z}^{\rm S}(\gamma_*)-\cE_{Z'}^{\rm HA}(\gamma_*)}{Z^{2/3}}+ \frac{\cE_{Z'}^{\rm HA}(\gamma_*)-\cE_{Z'}^{\rm S}\Big(\theta_{Z'}(\gamma_*)\Big)}{Z^{2/3}} \geq C \tr[|\cdot|^{-1}\gamma_*].
    \end{split}
  \end{equation}

  Note that $E_{Z}^{\rm S}$ is also dependent on $\kappa$. So we write
  $E_Z^{\rm S}(\kappa)$ to emphasis the dependence on $\kappa$. From
  Theorem \ref{th:DF-asymptotic}, we know that for any
  $0<\eps\leq 1-\kappa$ there exists two constants $C'_\mathrm{Scott}$ such
  that
\begin{align*}
     \inf_{\kappa'\in [\kappa,\kappa+\epsilon]}  E_Z^{\rm S}(\kappa')\geq C^\mathrm{TF}Z^{7/3}+C'_\mathrm{Scott}Z^2
\end{align*}
Since $C^\mathrm{TF}<0$, for $Z$ large enough, we get
\begin{equation}
\begin{split}
    & \frac{E_{Z}^{\rm S}(\kappa)-E_{Z'}^{\rm S}(\kappa+c^{-1}Z^{2/3})}{Z^{\frac23}}\leq  \frac{E_{Z}^{\rm S}(\kappa)-\inf_{\kappa'\in [\kappa,\kappa+\epsilon]}E_{Z'}^{\rm S}(\kappa') }{Z^{\frac23}} \\
     \leq&  \frac{C^\mathrm{TF}Z^{7/3}(1-(1+Z^{-1/3})^{7/3})+CZ^2-C'_\mathrm{Scott}(Z')^2}{Z^{\frac23}}= O(Z^{4/3}).
\end{split}
\end{equation}

To do so, we first prove that $\theta_{Z'}(\gamma_*)$ is well-defined
in $\Gamma_{Z'}$. For further convenience, we set
\begin{equation}
   \widetilde{D}_{\frac23}:=D_{c,Z',\frac{Z'-1}{Z'}\rho_{\gamma_*}},\qquad  \widetilde{P}_{\frac23}:=\mathbbm{1}_{(0,\infty)}(\widetilde{D}_{\frac23}).
\end{equation}

\subsection{Existence of $\theta_{Z'}(\gamma_*)$} We are now going to
show that $\gamma_*\in \cU_{Z',Z',M}$ 
for some $M>0$, so
$\theta_{Z'}(\gamma_*)$ is well-defined in $\Gamma_{Z',Z'}$.

First of all, it is easy to see that $\gamma_*\in \Gamma_{Z'}$. Then
we are going to find a constant $M>0$ independent of $Z$ such that
\begin{equation}
     c^{-2} \|\gamma_*\|_{\gX_c}+c^{-2} \lambda\|T_{Z'}(\gamma_*)-\gamma_*\|_{\gX_c}\leq MN.
\end{equation}
According to Lemma \ref{lem:bound-DF}, we have
\begin{equation}
    c^{-2}\|\gamma_*\|_{\gX_c}\leq 2(C_\kappa')^{-1}N.
\end{equation}
In addition, note that $P_*\gamma_* P_*=\gamma_*$. Thus
\begin{equation}
    T_{Z'}(\gamma_*)-\gamma_*= (\widetilde{P}_{\frac23}-P_*)\gamma_*\widetilde{P}_{\frac23}+\gamma_*(\widetilde{P}_{\frac23}-P_*).
\end{equation}
By \eqref{eq:P1-gamma*}
\begin{equation}
  (\widetilde{P}_{\frac23}-P_*)\gamma_*= -\sum_{j}\mu_j \frac{1}{\widetilde{D}_{\frac23} \widetilde{P}_{\frac23}^\perp -\lambda_j} \widetilde{P}_{\frac23}^\perp \widetilde{A}(Z)\left|\psi_j\right>\left<\psi_j\right|
\end{equation}
with
$\widetilde{A}(Z)=
-Z^{\frac23}|\cdot|^{-1}+\frac{Z^{\frac23}}{ZZ'}\rho_{*}*|\cdot|^{-1}$.
By Newton's lemma,
\begin{align*}
     |\widetilde{A}(Z)|\leq 2Z^{\frac23}|\cdot|^{-1}\lesssim Z^{-1/3}|D_{c,0}|.
\end{align*}
According to Lemma \ref{lem:DF-ope} with $Z$ being replaced by
$Z'$,
\begin{equation}\label{eq:4.1}
  \begin{split}
 &   \left\||D_{c,0}|^{\frac12}\widetilde{P}_{\frac23}^\perp (\widetilde{D}_{\frac23} \widetilde{P}_{\frac23}^\perp -\lambda_j)^{-\frac12} \widetilde{P}_{\frac23}^\perp |\widetilde{A}(Z)|^{\frac12} \right\|\\
    \lesssim &Z^{-\frac16}\left\||D_{c,0}|^{\frac12}\widetilde{P}_{\frac23}^\perp
               (|\widetilde{D}_{\frac23}| \widetilde{P}_{\frac23}^\perp +\lambda_j)^{-\frac12}\right\|^2\\
    \lesssim&_\kappa Z^{-\frac16}\left\||\widetilde{D}_{\frac23}|^{\frac12}\widetilde{P}_{\frac23}^\perp (|\widetilde{D}_{\frac23}| \widetilde{P}_{\frac23}^\perp +\lambda_j)^{-\frac12}\right\|^2 \lesssim_\kappa Z^{-\frac16}.
  \end{split}
\end{equation}
Using Lemma \ref{lem:bound-DF},
\begin{equation}\label{eq:4.2}
    \||D_{c,0}|^{1/2}\widetilde{P}_{\frac23} |D_{c,0}|^{-1/2}\|\lesssim_\kappa \|\widetilde{P}_{\frac23}|\widetilde{D}_{\frac23}|^{1/2} |D_{c,0}|^{-1/2}\|\lesssim_\kappa 1,
\end{equation}
we have
\begin{equation}\label{eq:T2/3-gamma}
  \begin{split}
    &\| T_{Z'}(\gamma_{*})-\gamma_{*}\|_{\gX_c}\lesssim_\kappa Z^{-1/6}\||\widetilde{A}(Z)|^{1/2} \gamma_*|D_{c,0}|^{1/2}\|_1\\
    \lesssim& Z^{-1/3}\|\gamma_*\|_{\gX_c}\lesssim_\kappa Z^{-1/3} c^2N =O(Z^{8/3}).
  \end{split}
\end{equation}
Thus there exists a constant $M>0$ such that
\begin{equation}
    c^{-2}\|\gamma_*\|_{\gX_c}+c^{-2}\lambda \|T_{Z'}(\gamma)-\gamma\|_{\gX_c}\leq MZ.
\end{equation}
This shows that $\gamma_*\in \cU_{Z',M}$. Then according to Lemma
\ref{lem:retra}, for $Z$ large enough,
$\theta_{Z'}(\gamma_*)\in\gX_c\cap \Gamma_{Z',Z'}^+$.

\subsection{Estimates on $|\cE_{Z'}^{\rm S}(\theta_{Z'}(\gamma_*))-\cE_{Z'}^{\rm HA}(\gamma_*)|$}
In this subsection, we are going to use \cite[Lemma 5.1]{Meng2023}
with $R=MZ$ which is
\begin{lemma}\label{lem:error-bound}
For any $\gamma\in\mathcal{U}_{Z',Z',M}$ and $Z$ large enough, 
\begin{equation}
    |\cE_{Z'}^{\rm S}(\theta_{Z'}(\gamma))-\mathcal{E}_{Z'}^{\rm HA}(\gamma)|\lesssim_\kappa c^{-3}\|T_{Z'}(\gamma)-\gamma\|_{\gX_c}^2+\|\widetilde{P}_{\frac23}^\perp\gamma \widetilde{P}_{\frac23}^\perp\|_{\gX_c}
\end{equation}
 \end{lemma}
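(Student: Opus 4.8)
The plan is to reduce everything to \cite[Lemma 5.1]{Meng2023}. That reference gives, for a minimizing-type configuration lying in the relevant neighbourhood $\mathcal{U}$, a bound on the difference between the S\'er\'e-functional value at the retracted density matrix $\theta_{Z'}(\gamma)$ and the raw Hartree--Amaldi value at $\gamma$, in terms of two quantities: the ``non-convexity defect'' $\|T_{Z'}(\gamma)-\gamma\|_{\gX_c}$ measuring how far $\gamma$ is from being a fixed point of one retraction step, and the ``off-diagonal mass'' $\|\widetilde P_{\frac23}^\perp\gamma\widetilde P_{\frac23}^\perp\|_{\gX_c}$ measuring how much of $\gamma$ leaks into the negative spectral subspace of $\widetilde D_{\frac23}=D_{c,Z',\frac{Z'-1}{Z'}\rho_{\gamma_*}}$. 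So the first step is simply to verify that the hypotheses of \cite[Lemma 5.1]{Meng2023} are met in our setting with $R=MZ$: namely that $\gamma\in\mathcal{U}_{Z',Z',M}$ (already available by assumption), that $Z$ is large enough that the contraction constant $L=c^{-1/6}C_\kappa^{\rm ret}M<1$ and $\lambda>(1-L)^{-1}$ can be chosen (this is the content of Lemma \ref{lem:retra}), and that the operator inequalities of Lemma \ref{lem:DF-ope} hold for $D_{c,Z',\frac{Z'-1}{Z'}\rho_{\gamma_*}}$ uniformly in $Z$, which is precisely why we passed to the spherically symmetric setting with $\nu_0=1$.

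The second step is bookkeeping on the constants: \cite[Lemma 5.1]{Meng2023} is stated with hypotheses phrased in terms of $\lambda_0=1-c^{-1}\max\{N,Z'\}$ and the lower bound constant in $|D_{c,0}|\le(\text{const})|D_{c,Z',\rho}|$; as in the proof of Lemma \ref{approx} above, in our case $\lambda_0=1-\kappa'$ where $\kappa'=Z'/c=\kappa(1+Z^{-1/3})$, which for $Z$ large lies in a compact subinterval of $(0,1)$, and the lower bound constant is $C_\kappa'$ from Lemma \ref{lem:DF-ope} rather than $(1-\kappa')^{-1}$. Substituting these into the Meng estimate, all prefactors become $\lesssim_\kappa 1$ (uniformly in $Z$, since $\kappa'\to\kappa$), which is exactly the $\lesssim_\kappa$ in the claimed inequality. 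The term $c^{-3}\|T_{Z'}(\gamma)-\gamma\|_{\gX_c}^2$ comes from the quadratic remainder ${\rm Err}_\gamma$ of Lemma \ref{approx} evaluated along the retraction iterates, summed geometrically using the second line of \eqref{eq:retra}; the term $\|\widetilde P_{\frac23}^\perp\gamma\widetilde P_{\frac23}^\perp\|_{\gX_c}$ comes from the first-order term $\tr[(\widetilde D_{\frac23}-c^2)(T_{Z'}(\gamma)-\gamma)]$, which vanishes to leading order since $T_{Z'}(\gamma)$ is a spectral truncation, leaving a contribution controlled by the negative-subspace mass of $\gamma$.

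The main obstacle is not any single hard estimate but making sure the transcription from \cite{Meng2023} is legitimate across the whole range $\kappa\in(0,1)$: Meng's lemma is proved under the restriction $\kappa<\nu_0$ with $\nu_0$ dictated by the operator bounds of \cite{Fournaisetal2020}, and the whole point of our modification is that for spherically symmetric $\rho$ we may take $\nu_0=1$. So the delicate point is to check that every place where \cite[Lemma 5.1]{Meng2023} invokes \cite[Theorem 4]{Fournaisetal2020} can instead invoke Lemma \ref{lem:DF-ope}, and that the resulting constant $C_\kappa'$ stays bounded below as $Z\to\infty$ with $Z'=Z+Z^{2/3}$ (equivalently $\kappa'\downarrow\kappa$), which holds by continuity of $\kappa\mapsto C_\kappa'$ on $(0,1)$. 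Once this is in place, no further work is needed; the statement is a direct corollary.
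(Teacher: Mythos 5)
Your proposal is correct and follows essentially the same approach as the paper: the paper presents Lemma \ref{lem:error-bound} as a direct transcription of \cite[Lemma~5.1]{Meng2023} with $R=MZ$, relying implicitly on the same substitutions already spelled out for Lemma~\ref{approx} (replacing the Fournais et al.\ operator bound by Lemma~\ref{lem:DF-ope} with $\nu_0=1$ for spherically symmetric densities, and tracking $\lambda_0$ and the lower-bound constant accordingly). Your elaboration of the hypothesis checks, constant bookkeeping, and the uniformity as $\kappa'=Z'/c\downarrow\kappa$ is exactly the content the paper leaves implicit.
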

By Newton's lemma,
\begin{equation}
   - Z^{\frac23}|\cdot|^{-1} \leq \widetilde{A}(Z)\leq -Z^{\frac23} (1- (Z')^{-1})|\cdot|^{-1}.
\end{equation}
Then according to Lemma \ref{lem:bound-DF}, \eqref{eq:T2/3-gamma},
\begin{equation}
  \begin{split}
    &\|T_{Z'}(\gamma_*)-\gamma_*\|_{\gX_c}^2 =Z^{-1/3}\||\widetilde{A}(Z)|^{1/2}\gamma_* |D_{c,0}|^{1/2}\|_1^2\\
    \lesssim& Z^{-1/3}Z^{3}\tr[|\widetilde{A}(Z)|\gamma_*]=Z^{10/3}\tr[|\cdot|^{-1}\gamma_*].
  \end{split}
\end{equation}
It remains to study $\|\widetilde{P}_{\frac23}^\perp\gamma_* \widetilde{P}_{\frac23}^\perp\|_{\gX_c}$. By $P_*\gamma_*P_*=\gamma_*$, we get
\begin{equation}
    \widetilde{P}_{\frac23}^\perp\gamma_* \widetilde{P}_{\frac23}^\perp= \widetilde{P}_{\frac23}^\perp(\widetilde{P}_{\frac23}-P_*)\gamma_* (\widetilde{P}_{\frac23}-P_*) \widetilde{P}_{\frac23}^\perp.
\end{equation}
Note that
\begin{equation}
  \begin{split}
 & (\widetilde{P}_{\frac23}-P_*)\gamma_* (\widetilde{P}_{\frac23}-P_*)\\
  =& - \sum_{j}\mu_j \frac{1}{\widetilde{D}_{\frac23} \widetilde{P}_{\frac23}^\perp -\lambda_j} \widetilde{P}_{\frac23}^\perp \widetilde{A}(Z)\left|\psi_j\right>\left<\psi_j\right|\widetilde{A}(Z) \widetilde{P}_{\frac23}^\perp\frac{1}{\widetilde{D}_{\frac23} \widetilde{P}_{\frac23}^\perp -\lambda_j}.
  \end{split}
\end{equation}
Therefore by \eqref{eq:4.1}-\eqref{eq:4.2}
\begin{equation}
  \|\widetilde{P}_{\frac23}^\perp\gamma \widetilde{P}_{\frac23}^\perp\|_{\gX_c}\lesssim_{\kappa} Z^{-1/3}\||\widetilde{A}(Z)|^{1/2}\gamma_* |\widetilde{A}(Z)|^{1/2}\|_1\leq  Z^{1/3}\tr[|\cdot|^{-1}\gamma_*].
\end{equation}
Thus,
\begin{equation}\label{eq:E-theta-2/3}
    |\cE_{Z'}^{\rm S}(\theta_{Z'}(\gamma_*))-\mathcal{E}_{Z'}^{\rm HA}(\gamma_*)|\lesssim_\kappa Z^{1/3}\tr[|\cdot|^{-1}\gamma_*].
\end{equation}

\subsection{Proof of Theorem \ref{th:DF-Z7/3}}

\subsubsection{Proof of Estimates \eqref{eq:Z7/3-1}}
As $Z'=Z+Z^{\frac23}$, it is easy to see that 
\begin{equation}\label{eq:E-2/3-decomp}
  \cE_{Z'}^{\rm HA}(\gamma_*)= \cE_{Z}^{\rm S}(\gamma_*) -Z^{\frac23}\tr[|\cdot|^{-1}\gamma_*].
\end{equation}
According to \eqref{eq:E-theta-2/3} and \eqref{eq:E-2/3-decomp}, for
$Z$ large enough,
\begin{equation}
  \begin{split}
    &  \cE_{Z}^{\rm S}(\gamma_*)-\cE_{Z'}^{\rm S}(\theta_{Z'}(\gamma_*))\\
    \geq & \cE_{Z}^{\rm S}(\gamma_*)-\cE_{Z'}^{\rm HA}(\gamma_*)-\left|\cE_{Z'}^{\rm HA}(\gamma_*) -\cE_{Z'}^{\rm S}(\theta_{Z'}(\gamma_*))\right|\\
    \geq &Z^{\frac23}\tr[|\cdot|^{-1}\gamma_*]-C(\kappa) Z^{1/3}\tr[|\cdot|^{-1}\gamma_*]\geq \tfrac{1}{2}Z^{\frac23}\tr[|\cdot|^{-1}\gamma_*].
  \end{split}
\end{equation}
Since $C^\mathrm{TF}<0$, for $Z$ large enough
\begin{equation}\label{eq:coulomb}
  \tr[|\cdot|^{-1}\gamma_*]\lesssim \frac{E_Z^{\rm S}-E_{Z'}^{\rm S}}{Z^{\frac23}}
  \leq \frac{C^\mathrm{TF}Z^\frac73(1-(1+Z^{-\frac13})^\frac73)+CZ^2}{Z^{\frac23}}= O(Z^\frac43). 
\end{equation}
This gives the first estimate in \eqref{eq:Z7/3-1}.

By Newton's lemma and the fact that $\rho_*$ is spherical invariant,
\begin{equation}
  0\leq\phi_*(x)=\tfrac{Z-1}{Z}\rho_{\gamma_*}*|\cdot|^{-1}(x)\leq \int_{\R^3}\frac{\rho_*}{|\cdot|}=O(Z^{4/3}).
\end{equation}
This proves second estimate in \eqref{eq:Z7/3-1}. 

The last estimate in \eqref{eq:Z7/3-1} follows directly from the
second estimate by using Newton's lemma again.

We remark, though, that the last estimate in \eqref{eq:Z7/3-1} does
not hinge on the spherical symmetry of the density as the following
alternative argument shows: By \eqref{eq:gamma}, we have
\begin{equation}
  \cE_{Z}^{\rm S}(\gamma_*)+\tfrac{Z-1}{Z}\cD[\gamma_*]
  = \tr[(D_{c,Z,\frac{Z-1}{Z}\rho_*}-c^2)\gamma_*]\leq 0;
\end{equation}
by Theorem \ref{th:DF-asymptotic}, we have
\begin{equation}
    \cE_{Z}^{\rm S}(\gamma_*) =E_{Z}^{\rm S} = C^\mathrm{TF}Z^{\frac73} +O(Z^2).
\end{equation}
Thus
\begin{equation}\label{eq:D-gamma}
    \cD[\gamma_*]\leq -\cE_{Z}^\mathrm{S}(\gamma_*)=O(Z^{\frac73}).
\end{equation}
This and \eqref{eq:coulomb} also give the last estimate in \eqref{eq:Z7/3-1}.

\subsubsection{Proof of Estimates \eqref{eq:Z7/3-2}}
By \eqref{eq:Z7/3-1} and Theorem \ref{th:DF-asymptotic} we have
\begin{align}\label{eq:D-c^2}
    |\tr[(D_{c,0}-c^2)\gamma_*]|\leq |E_{Z}^{\rm S}|+\tfrac{Z-1}{Z}\cD[\gamma_*]+ Z\tr[|\cdot|^{-1}\gamma_*]=O(Z^{\frac73}).
\end{align}

We recall that $\Lambda=\mathbbm{1}_{(0,\infty)}(D_{c,0})$. Then
$D_{c,0} \Lambda=|D_{c,0}|\Lambda$ and
$D_{c,0}\Lambda^\perp=-|D_{c,0}|\Lambda^\perp$. Concerning the term
$\tr[(|D_{c,0}|-c^2)\gamma_*]$, we have
\begin{equation}\label{eq:|D_{c,0}|-c^2-decomp}
  \begin{split}
    &\tr[(|D_{c,0}|-c^2)\gamma_*]=   \tr[(|D_{c,0}|-c^2)\Lambda\gamma_* \Lambda]+\tr[(|D_{c,0}|-c^2)\Lambda^\perp\gamma_* \Lambda^\perp]\\
    = &\tr[(D_{c,0}-c^2)\Lambda\gamma_* \Lambda]+\tr[(D_{c,0}-2D_{c,0}-c^2)\Lambda^\perp\gamma_* \Lambda^\perp]\\
       = &\tr[(D_{c,0}-c^2)\gamma_* ]+2\tr[|D_{c,0}|\Lambda^\perp\gamma_* \Lambda^\perp].
  \end{split}
\end{equation}
As $P_*\gamma_* P_*=\gamma_*$, we have
\begin{equation}
  \begin{split}
    &\tr[|D_{c,0}|\Lambda^\perp (P_*-\Lambda)\gamma_* ]=\tr[|D_{c,0}| (P_*-\Lambda)\gamma_* (P_*-\Lambda)\Lambda^\perp]\\
    =& \tr[|D_{c,0}| (\Lambda- P_*)\gamma_* (\Lambda- P_*)\Lambda^\perp].
  \end{split}
\end{equation}
According to \eqref{eq:P1-gamma*} again, we have
\begin{equation}
  \begin{split}
    &(\Lambda- P_*)\gamma_*(\Lambda- P_*) \\
    =& \sum_{j}\mu_j \frac{1}{D_{c,0} -\lambda_j} \Lambda^\perp \widetilde{A}_*(Z)\left|\psi_j\right>\left<\psi_j\right|\widetilde{A}_*(Z) \Lambda^\perp \frac{1}{D_{c,0}  -\lambda_j}
\end{split}
\end{equation}
with
$\widetilde{A}_*(Z)= Z|\cdot|^{-1}-\tfrac{Z-1}{Z}
\rho_**|\cdot|^{-1}$. Since $\|\Lambda^\perp\|=1$,
\begin{equation}
    \left|\tr[ (|D_{c,0}|-c^2)(P_*-\Lambda)\gamma_*]\right|\leq \sum_{j}\mu_j\left\| |D_{c,0}|^{1/2} \frac{1}{D_{c,0} -\lambda_j} \Lambda^\perp \widetilde{A}_*(Z)\psi_j\right\|^2.
\end{equation}
Analogously to \eqref{eq:4.1}, as
$0\leq \widetilde{A}_*(Z) \leq Z|\cdot|^{-1}\lesssim |D_{c,0}|$ and
$|D_{c,0}|-c^2\leq |D_{c,0}|$ , we have
\begin{equation}
    \left\||D_{c,0}|^{1/2} \frac{1}{D_{c,0} -\lambda_j} \Lambda^\perp \widetilde{A}_*^{1/2}(Z)\right\|\lesssim 1.
\end{equation}
Therefore
\begin{equation}
    \left|\tr[ (|D_{c,0}|-c^2)(P_*-\Lambda)\gamma_*]\right|\lesssim  \tr[|\widetilde{A}_*(Z)|\gamma_*]\leq Z\tr[|\cdot|^{-1}\gamma_*]=O(Z^{\frac73}).
\end{equation}
Thus, from \eqref{eq:coulomb}, \eqref{eq:D-c^2}, and
\eqref{eq:|D_{c,0}|-c^2-decomp}
\begin{equation}
    \tr[(|D_{c,0}|-c^2)\gamma_*]=O(Z^{\frac73}).
\end{equation}
This proves the first estimate in \eqref{eq:Z7/3-2}.

Finally by \eqref{65} with $\alpha=7/3$ and $\beta=1$ and
$\lambda=Z^{-1/3}$, we have
\begin{equation}
    \tr[|\bp|\gamma_*]\leq \frac{1}{c\lambda}\left(\tr[(|D_{c,0}|-c^2)\gamma_*]+ c^2\lambda^2 Z\right)\lesssim \frac{ Z^{\frac73}}{c\lambda}+ c
    \lambda Z= O(Z^\frac53).
\end{equation}
This gives the second estimate in \eqref{eq:Z7/3-2}.

\subsubsection{Proof of Estimates \eqref{eq:Z7/3-3}}
In this subsection let $A:=(1-1/Z)\rho^{\rm TF}_Z*|\cdot|^{-1}$ and
$A_*=A-\phi_*$. From \eqref{eq:Z7/3-1} and $\|A\|=O(Z^{4/3})$ we know
\begin{align}\label{eq:177}
    \|A_*\|=O(Z^{4/3}).
\end{align}

Note that
\begin{equation}\label{eq:178}
\begin{split}
    &E_{Z}^{\rm F}+o(Z^2)=E^{\rm S}_Z=\tr[(D_{c,Z}-c^2) \gamma_*]+\tfrac{Z-1}{Z}\cD[\rho_*]\\
=& \tr[(D_{c,Z}-c^2+A) \gamma_*]-\tfrac{Z-1}{Z}\cD[\rho^{\rm TF}]+\tfrac{Z-1}{Z}\cD[\rho_*-\rho^{\rm TF}]\\
=&\tr[(D_{c,Z}-c^2+A) P_{Z,A}\gamma_*P_{Z,A}]-\tfrac{Z-1}{Z}\cD[\rho^{\rm TF}]+\tfrac{Z-1}{Z}\cD[\rho_*-\rho^{\rm TF}] \\
&+ \tr[(D_{c,Z}-c^2+A) (\gamma_*-P_{Z,A}\gamma_*P_{Z,A})]
\end{split}
\end{equation}
As in \cite[Section 3]{HandrekSiedentop2015} and by the minimax
principle (see \cite{GriesemerSiedentop1999,Dolbeaultetal2000O}) we get
\begin{equation}
\begin{split}
  &\tr[(D_{c,Z}-c^2+A) P_{Z,A}\gamma_*P_{Z,A}]-\tfrac{Z-1}{Z}\cD[\rho^{\rm TF}]\\
  \geq& \inf_{\gamma\in \Gamma_Z^A} \tr[(D_{c,Z}-c^2+A) \gamma]-\tfrac{Z-1}{Z}\cD[\rho^{\rm TF}]\\
   \geq& \inf_{\gamma\in \Gamma_Z^0} \tr[(D_{c,Z}-c^2+A) \gamma]-\tfrac{Z-1}{Z}\cD[\rho^{\rm TF}]
  =E_Z^{\rm F}+o(Z^2).
\end{split}
\end{equation}
To end the proof, it remains to show that
\begin{align}\label{eq:179}
  \left|\tr[(D_{c,Z}-c^2+A) (\gamma_*-P_{Z,A}\gamma_*P_{Z,A})]\right|=o(Z^2).
\end{align}
Once it is proven, we will get
\begin{align*}
  E_{Z}^{\rm F}+o(Z^2)\geq  E_{Z}^{\rm F}+ \tfrac{Z-1}{Z}\cD[\rho_*-\rho^{\rm TF}] +o(Z^2).
\end{align*}
Thus,
\begin{equation}
    \cD[\rho_*-\rho^{\rm TF}]=o(Z^2).
\end{equation}

Now we prove \eqref{eq:179}. By \eqref{eq:93},
\begin{equation}\label{eq:182}
    \begin{split}
      &\tr[(D_{c,Z}-c^2+A) (\gamma_*-P_{Z,A}\gamma_*P_{Z,A})]\\
      =&- \tr[(D_{c,Z}-c^2+A)(Q_{Z,\phi_*}\gamma_*+\gamma_*Q_{Z,\phi_*})]\\
      &-\tr[(D_{c,Z}-c^2+A) (2\Re (R_{Z,A}\gamma_*)+Q_{Z,A}\gamma_*Q_{Z,A})]
    \end{split}
\end{equation}
From \eqref{eq:P1-gamma*}, \eqref{eq:96}, and \eqref{eq:177}, we know
\begin{equation}\label{eq:183}
    \begin{split}
      &\left|\tr[(D_{c,Z}-c^2+A) (Q_{Z,\phi_*}\gamma_*+\gamma_*Q_{Z,\phi_*})]\right|\\
      =&\left|\tr[A_* (Q_{Z,\phi_*}\gamma_*+\gamma_*Q_{Z,\phi_*})]\right|\\
      =&2\left|\sum_{j=1}\mu_j\Re\left(\left<\psi_j, A_*  \frac{1}{D_*-\lambda_j}P_*^\perp A_* \psi_j\right>\right)\right|\\
      \lesssim& c^{-2}\|A_*\|^2\tr\gamma_*=O(Z^\frac53).
    \end{split}
\end{equation}
where we used $c^2 \lesssim |D_*|$ from
\eqref{2.15}.

Concerning the last term in \eqref{eq:178}, using
\eqref{eq:P1-gamma*-P1}, we get
\begin{equation}\label{eq:184}
    \begin{split}
      &\left|\tr[(D_{c,Z}-c^2+A) Q_{Z,A}\gamma_* Q_{Z,A}]\right|\\
      \leq & \sum_{j}\mu_j\left|\left<\psi_j,A_*P_{Z,A}^{\perp}\frac{D_{c,Z}+A-c^2}{(\lambda_j-D_{c,Z}-A)^2}  P_{Z,A}^{\perp}  A_*\psi_j\right>\right|\\
      \lesssim& c^{-2}\|A_*\|^2 \tr\gamma_* =O(Z^\frac53).
    \end{split}
\end{equation}
Note that
\begin{equation}\label{eq:185}
    \tr[(D_{c,Z}-c^2+A) R_{Z,A}\gamma_*]= \tr[(D_*-c^2) R_{Z,A}\gamma_*]+ \tr[A_* R_{Z,A}\gamma_*].
\end{equation}
From \eqref{eq:99} we obtain
\begin{equation}\label{eq:186}
\begin{split}
  & \left|\tr[(D_*-c^2) R_{Z,A}\gamma_*]\right|\\
  \leq& \sum_{j}\mu_j  \left|\left<\psi_j, A_*  P_{Z,A}^{\perp}\frac{(\lambda_j-c^2)}{(\lambda_j-D_{c,Z}-A)^2} P_{Z,A}^{\perp} A_* \psi_j \right>\right| =O(Z^\frac53).
\end{split}
\end{equation}
Eventually
\begin{equation}\label{eq:187}
   \begin{split}
     &\left|\tr[A_*  R_{Z,A}\gamma_*]\right|\\
     \leq & \frac{1}{2\pi}\sum_j \mu_j \left|\int_{-\infty}^{\infty}\left<A_*\psi_j,\frac{1}{D_*+\ri z}A_*\frac{1}{D_{c,Z}+A+\ri z}A_*\frac{1}{\lambda_j+\ri z} \psi_j\right> \rd z \right|\\
     \leq &\sum_j\mu_j \frac{\|A_*\|\|(D_{c,Z}+A)^{-1}\|}{2\sqrt{\lambda_j} \sqrt{\pi}}\left(\int_{-\infty}^\infty \left\|(D_*+\ri z)^{-1} A_*\psi_j\right\|^2 \rd z\right)^{1/2}\|A_*\psi_j\|\\
     \lesssim & c^{-3}  \sum_j \mu_j\|A_*\|\||D_*|^{-1/2}A_*\psi_j\|\|A_*\psi_j\|\lesssim c^{-4}\|A_*\|^3\tr\gamma_*= O(Z).
   \end{split}
\end{equation}
Combining \eqref{eq:182}-\eqref{eq:187} shows \eqref{eq:179} 
completing the proof of \eqref{eq:Z7/3-3}.

\bigskip

{\bf Acknowledgments.}: Partial support by the Deutsche
Forschungsgemeinschaft (DFG, German Research Foundation) through TRR
352 -- Project 470903074 is acknowledged.\medskip

{\bf Declarations.}\medskip

{\bf Conflict of interest} The authors declare no conflict of interest.\medskip

{\bf Data Availability} Data sharing is not applicable to this article as it has no associated data.

%\bibliographystyle{plain}
%\bibliography{coulomb}

\begin{thebibliography}{10}

\bibitem{Bach1992}
Volker Bach.
\newblock Error bound for the {H}artree-{F}ock energy of atoms and molecules.
\newblock {\em Comm.\ Math.\ Phys.}, 147:527--548, 1992.

\bibitem{Bethe1933}
H.\ Bethe.
\newblock Quantenmechanik der {E}in- und {Z}wei-{E}lektronenatome.
\newblock In H.\ Geiger and K.\ Scheel, editors, {\em Handbuch der {P}hysik,
  {XXIV.1}, {B}uch 2}, chapter~3, pages 273--560. Springer, Berlin, 2 edition,
  1933.

\bibitem{BetheSalpeter1957}
Hans~A.\ Bethe and Edwin~E.\ Salpeter.
\newblock Quantum mechanics of one- and two-electron atoms.
\newblock In S.\ Fl{\"u}gge, editor, {\em Handbuch der {P}hysik, {XXXV}}, pages
  88--436. Springer, Berlin, 1 edition, 1957.

\bibitem{BrownRavenhall1951}
G.~E. Brown and D.~G. Ravenhall.
\newblock On the interaction of two electrons.
\newblock {\em Proc. Roy. Soc. Lon\-don Ser. A.}, 208:552--559, 1951.

\bibitem{Carlenetal2024}
Eric Carlen, Mathieu Lewin, Elliott~H. Lieb, and Robert Seiringer.
\newblock Stability estimate for the {L}ane-{E}mden inequality.
\newblock {\em arXiv}, https://arxiv.org/abs/2410.20113, 2024.

\bibitem{Chandrasekhar1931}
Subramanyan Chandrasekhar.
\newblock The maximum mass of ideal white dwarfs.
\newblock {\em Astrophys. J.}, 74:81--82, 1931.

\bibitem{DattaJagannathan1984}
Sambhu~N. Datta and S.~Jagannathan.
\newblock Analysis and merit of the constrained-component variation on {D}irac
  theory.
\newblock {\em Pramana}, 23(4):467--473, 1984.

\bibitem{Daubechies1983}
Ingrid Daubechies.
\newblock An uncertainty principle for {F}ermions with generalized kinetic
  energy.
\newblock {\em Comm.\ Math.\ Phys.}, 90:511--520, September 1983.

\bibitem{Dolbeaultetal2000O}
Jean Dolbeault, Maria~J. Esteban, and Eric S{\'e}r{\'e}.
\newblock On the eigenvalues of operators with gaps. {A}pplication to {D}irac
  operators.
\newblock {\em J. Funct. Anal.}, 174(1):208--226, 2000.

\bibitem{FermiAmaldi1934}
E.~Fermi and E.~Amaldi.
\newblock Le orbite $\infty$ s degli elementi.
\newblock {\em Reale Accademia d'Italia, Memorie delle Classe di Scienze
  Fisiche, Matematiche e Naturali}, VI:119--149, 1934.

\bibitem{Fournaisetal2020}
S{\o}ren Fournais, Mathieu Lewin, and Arnaud Triay.
\newblock The {S}cott correction in {D}irac-{F}ock theory.
\newblock {\em Comm. Math. Phys.}, 378(1):569--600, 2020.

\bibitem{Franketal2021}
Rupert~L. Frank, Konstantin Merz, and Heinz Siedentop.
\newblock Equivalence of {S}obolev norms involving generalized {H}ardy
  operators.
\newblock {\em Int. Math. Res. Not. IMRN}, 2021(3):2284--2303, 2021.

\bibitem{Franketal2009}
Rupert~L. Frank, Heinz Siedentop, and Simone Warzel.
\newblock The energy of heavy atoms according to {B}rown and {R}avenhall: the
  {S}cott correction.
\newblock {\em Doc. Math.}, 14:463--516, 2009.

\bibitem{Furry1951}
W.~H. Furry.
\newblock On bound states and scattering in positron theory.
\newblock {\em Phys. Rev. (2)}, 81(1):115--124, 1951.

\bibitem{Gombas1949}
P.~Gomb{\'a}s.
\newblock {\em Die statistische {T}heorie des {A}toms und ihre {A}nwendungen}.
\newblock Springer-Verlag, Wien, 1 edition, 1949.

\bibitem{Gordon1928}
Walter Gordon.
\newblock {D}ie {E}nergieniveaus des {W}asserstoffatoms nach der {D}iracschen
  {Q}uantentheorie.
\newblock {\em Z. Phys.}, 48:11--14, 1928.

\bibitem{GriesemerLutgen1999}
M.~Griesemer and J.~Lutgen.
\newblock Accumulation of discrete eigenvalues of the radial {D}irac operator.
\newblock {\em J.\ Funct.\ Anal.}, 162:120--134, 1999.

\bibitem{GriesemerSiedentop1999}
Marcel Griesemer and Heinz Siedentop.
\newblock A minimax principle for the eigenvalues in spectral gaps.
\newblock {\em J. London Math. Soc. (2)}, 60(2):490--500, 1999.

\bibitem{HandrekSiedentop2015}
Michael Handrek and Heinz Siedentop.
\newblock The ground state energy of heavy atoms: the leading correction.
\newblock {\em Comm. Math. Phys.}, 339(2):589--617, 2015.

\bibitem{Kato1952}
Tosio Kato.
\newblock Notes on some inequalities for linear operators.
\newblock {\em Math. Ann.}, 125:208--212, 1952.

\bibitem{Lieb1981V}
Elliott~H. Lieb.
\newblock Variational principle for many-fermion systems.
\newblock {\em Phys. Rev. Lett.}, 46(7):457--459, 1981.

\bibitem{Meng2023}
L.~Meng.
\newblock A rigorous justification of the {M}ittleman’s approach to the
  {D}irac–{F}ock model.
\newblock {\em Calc. Var. Partial Differ. Equ.}, 63(2), 2024.

\bibitem{Mengetal2024}
Long Meng, Heinz Siedentop, and Matthias Tiefenbeck.
\newblock Relativistic exchange bounds.
\newblock {\em arXiv}, https://arxiv.org/abs/2408.17158, 2024.

\bibitem{Mittleman1981}
Marvin~H.\ Mittleman.
\newblock Theory of relativistic effects on atoms: Configuration-space
  {H}amiltonian.
\newblock {\em Phys. Rev. A}, 24(3):1167--1175, September 1981.

\bibitem{MorozovMuller2015}
Sergey Morozov and David M\"{u}ller.
\newblock On the minimax principle for {C}oulomb-{D}irac operators.
\newblock {\em Math. Z.}, 280(3-4):733--747, 2015.

\bibitem{MorozovMuller2017}
Sergey Morozov and David M\"{u}ller.
\newblock Lower bounds on the moduli of three-dimensional {C}oulomb-{D}irac
  operators via fractional {L}aplacians with applications.
\newblock {\em J. Math. Phys.}, 58(7):072302, 22, 2017.

\bibitem{Nenciu1976}
G.~Nenciu.
\newblock Self-adjointness and invariance of the essential spectrum for {D}irac
  operators defined as quadratic forms.
\newblock {\em Comm. Math. Phys.}, 48(3):235--247, 1976.

\bibitem{Pyykko2012}
Pekka Pyykkö.
\newblock Relativistic effects in chemistry: More common than you thought.
\newblock {\em Anual Review of Physical Chemistry}, 63:45--64, May 2012.

\bibitem{ReedSimon1978}
Michael Reed and Barry Simon.
\newblock {\em Methods of Modern Mathematical Physics}, volume 4: Analysis of
  Operators.
\newblock Academic Press, New York, 1 edition, 1978.

\bibitem{Saue2011}
Trond Saue.
\newblock Relativistic {H}amiltonians for chemistry: A primer.
\newblock {\em ChemPhysChem}, 12(17):3077--3094, 2011.

\bibitem{Sere2023}
\'{E}ric S\'{e}r\'{e}.
\newblock A new definition of the {D}irac-{F}ock ground state.
\newblock {\em Comm. Math. Phys.}, 404(3):1275--1307, 2023.

\bibitem{Sucher1980}
J.\ Sucher.
\newblock Foundations of the relativistic theory of many-electron atoms.
\newblock {\em Phys.\ Rev.\ A}, 22(2):348--362, August 1980.

\bibitem{Sucher1984}
J.~Sucher.
\newblock Foundations of the relativistic theory of many-electron bound states.
\newblock {\em International Journal of Quantum Chemistry}, 25(1):3--21,
  January 1984.

\bibitem{Sucher1987}
J.\ Sucher.
\newblock Relativistic many-electron {H}amiltonians.
\newblock {\em Phys.\ Scripta}, 36:271--281, 1987.

\bibitem{Woit2017}
Peter Woit.
\newblock {\em Quantum theory, groups and representations}.
\newblock Springer, Cham, 2017.
\newblock An introduction.

\end{thebibliography}
\def\cprime{$'$}

\end{document}